\newtheorem{theorem}{Theorem}
\newtheorem{lemma}[theorem]{Lemma}
\newtheorem{claim}{Claim}
\newtheorem{proposition}[theorem]{Proposition}
\newtheorem{corollary}[theorem]{Corollary}
\newtheorem{observation}{Observation}
\newtheorem{conjecture}{Conjecture}
\theoremstyle{definition}
\newcommand{\remove}[1]{{}}
\newcommand{\note}[1]{{#1}}
\newcommand{\rednote}[1]{{\color{red} #1}}
\newcommand{\changed}[1]{{#1}}
\newcommand{\Anna}[1]{{#1}}
\newcommand{\megan}[1]{{#1}}
\newcommand{\revised}[1]{{#1}}
\newcommand{\rerevised}[1]{{\color{red} #1}}
\newcommand\bR{\mathbb{R}}
\newcommand\CH{{\rm CH}}
\newcommand\LCH{{\rm LCH}}
\newcommand{\T}{\mathcal{T}}
\title{Shortest Paths and Convex Hulls in 2D Complexes with Non-Positive Curvature}
\author{%
  Anna Lubiw%
    \thanks{David R. Cheriton School of Computer Science,
      University of Waterloo, Waterloo, Ontario N2L 3G1, Canada,
      \protect\url{alubiw@uwaterloo.ca, daniela.maftuleac@uwaterloo.ca};
      $\dag$ Department of Mathematics, Lehman College, City University of New York, United States, 
      \protect\url{megan.owen@lehman.cuny.edu}
      }
\and
  Daniela Maftuleac$^*$%
\and
  Megan Owen$^{\dag}$%
}
\date{~}
\begin{document}

\maketitle

\begin{abstract}
Globally non-positively curved, or CAT(0), polyhedral complexes arise in a number of applications, including evolutionary biology and robotics.  These spaces have unique shortest paths and are composed of Euclidean polyhedra, yet many 
algorithms and properties of shortest paths and convex hulls 
in Euclidean space fail to transfer over. 
\note{We give an algorithm, using linear programming, to compute the convex hull of a set of points in a 2-dimensional CAT(0) polyhedral complex with a single vertex.}
\changed{We explore the use of shortest path maps to answer single-source
shortest path queries in 2-dimensional CAT(0) polyhedral complexes,
and we 
unify \note{efficient} solutions for 2-manifold and rectangular cases.}
%
%
\end{abstract}

\section{Introduction}
\label{sec:introduction}

Convex hulls and shortest paths---and algorithms to find them---are very well understood in Euclidean spaces, but less so in non-Euclidean spaces.
We consider these two problems in finite \emph{polyhedral complexes} which are formed by joining a finite number of $d$-dimensional convex polyhedra along lower dimensional faces.  We will primarily be concerned with the 2D case of triangles or rectangles joined at edges.

We will restrict our attention to polyhedral complexes that are 
globally non-positively curved, or CAT(0).  
Introduced by Gromov in 1987~\cite{Gromov}, CAT(0) metric spaces (or spaces of global non-positive curvature) constitute a far-reaching common generalization of Euclidean spaces, hyperbolic spaces and simple polygons. 
The initials ``CAT''  stand for Cartan, Alexandrov, and Toponogov, three researchers who made substantial contributions to the theory of comparison geometry.  In a CAT(0) space, in contrast to a space of positive curvature, there is a unique geodesic (locally shortest path) between any two points and this property characterizes CAT(0) complexes.

The impact of CAT(0) geometry on mathematics 
is significant
especially in the field of geometric group theory
where the particular case of  CAT(0) polyhedral complexes formed by cubes---the so-called ``CAT(0) cube complexes''---are particularly relevant~\revised{\cite{agol2013virtual,HaglundWise,sageev1995ends}.}
 Most of the work on CAT(0) metric spaces so far has been mathematical.
 Algorithmic aspects remain relatively unexplored 
apart from a few results for some particular CAT(0) spaces~\cite{Ch_CAT,ChepoiDraganVaxes,ElderMcCammond,Maftuleac}. 

This paper is about algorithms for finite CAT(0) polyhedral complexes, which we will call ``CAT(0) complexes'' from now on.
We are primarily interested in the algorithmic properties of CAT(0) 
complexes because of their applications, particularly to computational evolutionary biology.  The (moduli) space of all phylogenetic (evolutionary) trees with $n$ leaves can be modelled as a CAT(0) cube complex with a single vertex~\cite{BHV}, and being able to compute convex hulls in this space would give a method for computing confidence intervals for sets of trees (see Section~\ref{s:tree_applications} for more details).  A second application of CAT(0) cube complexes is to reconfigurable systems \cite{GhristPeterson}, a large family of systems which change according to some local rules, e.g.~robotic motion planning, the motion of non-colliding particles in a graph, and phylogenetic tree mutation, etc.
In many reconfigurable systems, the parameter space of all possible positions of the system can be seen as a CAT(0) cube complex \cite{GhristPeterson}.  \revised{CAT(0) cube complexes are also in bijection with median graphs \cite{Ch_CAT}, which have been applied to phylogenetics \cite{bandelt1999median} as well, and with domains of event structures \cite{barthelemy1993median}.}


\medskip\noindent{\bf Main Results.}
In this paper we study the shortest path problem and the convex hull problem in 2D CAT(0) complexes formed by triangles or rectangles.
%
%
For any set of points $P$ we define the \emph{convex hull} to be
the minimal set containing $P$ that is closed under taking the shortest path between any two points in the set.
We show that convex hulls in 2D CAT(0) complexes fail to satisfy some of the properties we take for granted in Euclidean spaces.
Our main result is an 
algorithm to find the convex hull of a finite set of points in a 2D CAT(0) complex with a single vertex.   In general, for any CAT(0) complex, the convex hull of a set of points is the union of a convex set in each cell of the complex.  For the case of 2D CAT(0) complexes, these convex sets are polygons \revised{which may be open or closed on parts of their boundaries}.  
For the special case when there is a single vertex, we show how to find these polygons using linear programming.
Our algorithm runs in polynomial time (in bit complexity) 
for  a cube complex. 
For more general inputs we must use the real-RAM model of computation, and the bottleneck in our running time is the time required for linear programming in an algebraic model, which is not known to be polynomially bounded, but is considered efficient via the simplex method.

\changed{In the single-source shortest path problem, we are given 
a 2D CAT(0) complex of $n$ triangles and a source point $s$, and we wish to preprocess the complex in order to find the shortest path from $s$ to any query point $t$ quickly.  
We explore the shortest path map, which divides the space into regions where shortest paths from $s$ are combinatorially the same (i.e.~traverse the same sequence of edges and faces).  We show that the shortest path map may have exponential size.  An alternative, the ``last step shortest path map,'' has linear size and can be used to find shortest paths from $s$ in time proportional to the number of faces traversed by the path.
We show how to construct the last step shortest path map in $O(n^2)$ preprocessing time and space for special cases.    
This generalizes \note{and unifies} two previous results: 
an algorithm by Chepoi and Maftuleac~\cite{Chepoi-Maftuleac} for the case of 2D CAT(0) rectangular complexes; and \note{an algorithm by Chen and Han~\cite{Chen-Han} specialized to the case of a 2D CAT(0) complex that is a \emph{topological 2-manifold with boundary} (i.e.~every edge is incident to at most two faces). } 
}

The rest of the paper is organized as follows.  Section~\ref{sec:background} contains further background on the problem, including existing algorithmic results for CAT(0) polyhedral complexes and applications to phylogenetics.  Section~\ref{sec:preliminaries} reviews the relevant mathematics and tree space notation.  Section~\ref{sec:convex-hull} gives our results for convex hulls, and Section~\ref{sec:shortest-paths} gives our results for shortest paths in 2D CAT(0) polyhedral complexes.  Finally we give our conclusions in Section~\ref{sec:conclusions}.

\section{Background}
\label{sec:background}

In this section we describe background work on shortest path and convex hull algorithms, and discuss the application of our work to phylogenetic trees. 

One of the most basic CAT(0) spaces is 
any simple polygon (interior plus boundary) in the plane. 
This can be viewed as a 2D CAT(0) complex once the polygon is triangulated. 
The fact that geodesic paths are unique is at the heart of efficient algorithms for shortest paths and related problems.
On the other hand, generalizing a polygon to a polygonal domain (a polygon with holes) or a polyhedral terrain yields spaces that are not CAT(0), since geodesic paths are no longer unique.  
This helps explain why shortest path and convex hull problems are more difficult in these more general settings.

\subsection{Shortest Paths}

The shortest path problem is a fundamental algorithmic problem with many applications, both in discrete settings like graphs and networks (see, e.g.,~\cite{AhujaMagnantiOrlin}) as well as in geometric settings like polygons, polyhedral surfaces, or 3-dimensional space with obstacles (see, e.g.,~Mitchell \cite{Mitchell}).  

All variants of the shortest path problem can be solved efficiently for a polygon once it is triangulated, and triangulation can be done in linear time with Chazelle's algorithm~\cite{Chazelle}.  The shortest path (the unique geodesic) between two given points can be found in linear time~\cite{LeePreparata}.  
For query versions, linear space and linear preprocessing time allow us to  
answer single-source queries~\cite{Guibas-sh-path-87,Hershberger-Snoeyink}, and more general all-pairs queries~\cite{GuibasHershberger}, where answering a query means returning the distance in 
logarithmic time, and the actual path in time proportional to its number of edges. 

By contrast, in a polygonal domain, where geodesic paths are no longer unique, the best 
single-source shortest path algorithm uses a continuous-Dijkstra approach in which paths are explored by order of distance.  For a polygonal domain of $n$ vertices, this method takes $O(n \log n)$ [preprocessing] time~\cite{HershbergerSuri} (see the survey by Mitchell~\cite{Mitchell}).
For a polyhedral terrain the continuous-Dijkstra approach gives $O(n^2 \log n)$ time~\cite{Mitchell-Mount-Papadimitriou}, and the best-known run-time of $O(n^2)$ is achieved  by Chen and Han's algorithm~\cite{Chen-Han} that uses a breadth-first-search approach.

There are no shortest path algorithms 
for the general setting of CAT(0) polyhedral complexes, although there are some for certain specializations.
For  2D CAT(0) complexes that are 2-manifolds,
\note{the algorithm of Chen and Han~\cite{Chen-Han} applies (in fact, they do not need the CAT(0) property), and solves the single-source shortest 
path problem with preprocessing time $O(n^2)$, space $O(n)$ and query time (to produce the path) proportional to the size of the output path.
Maftuleac~\cite{Maftuleac} explicitly discussed this as a problem in a CAT(0) space and 
gave a Dijkstra-like algorithm with the same space and query time, but preprocessing time of $O(n^2 \log n)$.}
%
Chepoi and Maftuleac \cite{Chepoi-Maftuleac} used different methods to give a polynomial time algorithm for all-pair shortest path queries in any 2D CAT(0) rectangular complex, with preprocessing time $O(n^2)$, space $O(n^2)$, and query time  
proportional to the size of the output path.


There are also some 
results on finding shortest paths when we restrict the CAT(0) polyhedral complex to be composed of cubes or rectangles.
The space of phylogenetic trees mentioned in the introduction is a \megan{special type of} CAT(0) cube complex.  For these ``tree spaces,'' Owen and Provan~\cite{OwenProvan} 
gave an algorithm to compute shortest paths (geodesics) with a running time of 
$O(d^4)$, where $d$ is the dimension of the maximal cubes.  The algorithm is much faster in practice for realistic phylogenetic trees.
The result was extended to a polynomial time algorithm for computing geodesics in any \emph{orthant space}~\cite{MillerOwenProvan}, where an \emph{orthant space} is a CAT(0) cube complex with a single vertex.
Very recently, Hayashi~\cite{hayashi2018polynomial} gave a polynomial time algorithm to compute approximate shortest paths in a general CAT(0) cube complex.  \revised{In fact, his algorithm will be polynomial time in any CAT(0) space in which certain conditions are met, mainly being able to compute shortest paths between any pair of points within a fixed distance $D$ of each other in polynomial time, and having an initial path that consists of a sequence of shortest paths, each of length less than $D$.  It is not clear how to meet his preconditions in a 2D CAT(0) complex.}
It is not possible to compute exact shortest paths in CAT(0) cube complexes of dimension three or more because Ardila et al. \cite{ArdilaOwenSullivant} showed that, in general, the coordinates of the points where geodesics crosses orthant boundaries are the solutions to higher ordered algebraic equations, and thus cannot be expressed by closed form formulas.

\subsection{Convex Hulls}

The problem of computing the convex hull of a set of points is fundamental to geometric computing, especially because of the connection to Voronoi diagrams and Delaunay triangulations~\cite{CHDelaunayVoronoi}.  

The convex hull of a set of points in the plane can be found in provably optimal time $O(n \log h)$ where $n$ is the number of points and $h$ is the number of points on the convex hull.  The first such algorithm was developed by Kirkpatrick and Seidel \cite{KirkpatrickSeidel} and a simpler algorithm was given by Chan \cite{ChanCH}.
An optimal algorithm for computing convex hulls in higher dimension $d$ in time $O(n^{\lfloor \frac{d}{2} \rfloor})$ was given by Chazelle \cite{Chazelle93}.  See the survey by Seidel~\cite{Seidel}.

A simple polygon, triangulated by chords, is the most basic example of a 2D CAT(0) complex.  
In this setting, the convex hull of a set of points $P$ (i.e.~the smallest set containing $P$ and closed under taking geodesics) is referred to as the \emph{relative} (or \emph{geodesic}) convex hull.  
Toussaint gave an $O(n \log n)$ time algorithm to compute the relative convex hull of a set of points in a simple polygon~\cite{Toussaint1}, and studied properties of such convex hulls~\cite{Toussaint2}. 
Ishaque and T\'oth~\cite{Ishaque-Toth} considered the case of line segments that separate the plane into simply connected regions (thus forming a CAT(0) space) and gave an semi-dynamic algorithm to maintain the convex hull of a set of points as  line segments are added and points are deleted.

Moving beyond polygons to polygonal domains or terrains, geodesic paths are no longer unique, so there is no single natural definition of convex hull (one could take the closure under geodesic paths, or the closure under shortest paths).  We are unaware of algorithmic work on these variants. 

However, in a polyhedral surface with unique geodesics
the convex hull is well defined, and Maftuleac~\cite{Maftuleac} gave an algorithm to compute the convex hull of a set of points 
in $O(n^2 \log n)$ time, where $n$ is the number of vertices in the complex plus the number of points in the set.

In all the above cases the boundary of the convex hull is composed of segments of shortest paths between the given points, which---as we shall see in Section~\ref{sec:CH-examples}---is not true in our setting of 2D CAT(0) complexes.

Beyond polyhedral complexes, convex hulls become much more complicated.  Indeed it is still an open question if the convex hull of 3 points on a general Riemannian manifold of dimension 3 or higher is closed \cite[Note~27]{Berger}.  Bowditch~\cite{Bowditch} and Borb\'ely~\cite{Borbely} give some results for convex hulls on manifolds of pinched negative curvature, but our complexes need not be manifolds.  In the space of positive definite matrices, which is a CAT(0) Riemannian manifold, Fletcher et al.~\cite{horoballs} give an algorithm to compute generalized convex hulls using horoballs, which are generalized half-spaces.  
Lin et al. \cite{LinSturmfelsTangYoshida} look at convex hulls of three points in an \megan{orthant space}. 
They prove that there are such spaces where the top-dimensional cells have dimension $2d$, and there exist 3 points in the space such that their convex hull contains a $d$-dimensional simplex.   Bridson and Haefliger \cite[Proposition~II.2.9]{BridsonHaefliger} give conditions for when the convex hull of three points in a CAT(0) space is ``flat", or 2-dimensional.  Finally, for a survey of convexity results in complete CAT(0) (aka Hadamard) spaces, of which the space of phylogenetic trees is one, see \cite{Bacakbook}.  \megan{As an alternative to geodesically closed convex hulls in CAT(0) orthant spaces, Nye et al. \cite{nye2017principal} propose the locus of the Fr\'echet mean, which generalizes the Euclidean definition of a convex hull as a weighted combination of points.}


\subsection{Application to phylogenetic trees} \label{s:tree_applications}

While this paper will look at arbitrary 2D complexes with non-positive curvature, our work is motivated by a particular complex with non-positive curvature, namely the space of phylogenetic trees introduced by Billera, Holmes, and Vogtmann \cite{BHV}, called the \emph{BHV tree space}, and described in more detail in Section~\ref{s:BHV_detail}. Phylogenetic trees are ubiquitous in biology, and each one depicts a possible evolutionary history of a set of organisms, represented as the tree's leaves.
Once we fix a set of leaves, the BHV tree space is a complex of Euclidean orthants (the higher dimensional version of quadrants and octants), in which each point in the space represents a different phylogenetic tree on exactly that set of leaves.  

One area of active phylogenetics research is how to statistically analyze sets of phylogenetic trees on the same, or roughly the same, set of species.  Such sets can arise in various ways:
from sampling a known distribution of trees, such as that generated by the Yule process \cite{Yule}; from tree inference programs, such as the posterior distribution returned by performing Bayesian inference \cite{mrbayes} or the bootstrap trees from conducting a maximum likelihood search \cite{bootstrap}; or from improvements in genetic sequencing technologies that lead to large sets of gene trees, each of which represents the evolutionary history of a single gene, as opposed to the species' evolutionary history as a whole.  Traditionally, most of the research in this area focused on summarizing the set of trees, although recent work has included computing variance \cite{Bacak, MillerOwenProvan} and principal components \cite{Nye2014}.

It is an open question to find 
a good way to compute confidence regions for a set of phylogenetic trees. \megan{Willis \cite{willis2017confidence} recently proposed a method for constructing confidence sets based on the Central Limit Theorem for BHV tree space \cite{CLT2, CLTcomplete}.}
\megan{An alternative, non-parametric approach was proposed by Holmes \cite{Holmes2005}, who suggested applying the data depth approach of peeling convex hulls for Euclidean space \cite{Tukey} to the BHV tree space.} 
\remove{In \cite{Holmes2005},  Holmes proposes applying the data depth approach of Tukey \cite{Tukey} for Euclidean space to the BHV tree space, namely peeling convex hulls. }
The convex hull is the minimum set that contains all the data points, as well as all geodesics between points in the convex hull.  By peeling convex hulls, we mean to compute the convex hull for the data set, and then remove all data points that lie on the convex hull.  This can then be repeated.  To get the 95\% confidence region, for example, one would remove successive convex hulls until only 95\% of the original data points remain.

If we keep peeling convex hulls until all remaining points lie on the boundary of the convex hull, then we can take their Fr\'echet mean \cite{Bacak, MillerOwenProvan} to get an analog of
the univariate median of Tukey \cite{Tukey}.  This could also be a useful one-dimensional summary statistic for a set of trees.  Many of the most-used tree summary statistics have a tendency to yield a degenerate or non-binary tree, which is a tree in which some of the ancestor relationships are undefined.  This is considered a problem by biologists, but such a univariate median tree found by peeling convex hulls would likely be 
binary if all trees in the data set are.

Currently, these methods cannot be used, because it is not known how to compute convex hulls in BHV tree space.  We show several examples of how Euclidean intuition and properties for convex hulls do not carry over to convex hulls in the BHV tree space.  
Our algorithm to find convex hulls applies to the space of trees with five leaves which is described in more detail in Section~\ref{s:BHV_detail}.

\section{Preliminaries}
\label{sec:preliminaries}


A metric space $(X,d)$ is \emph{geodesic} if every 
two points $x,y \in X$ are connected by a locally shortest or \emph{geodesic} path.
A geodesic metric space $X$ is \emph{CAT(0)} if 
its triangles satisfy the following
CAT(0) inequality.  For any triangle $ABC$ in $X$ with geodesic segments for its sides, construct a 
\emph{comparison triangle} $A'B'C'$ in the Euclidean plane such $d(A,B) = \changed{|A'B'|}$, $d(A,C) = \changed{|A'C'|}$, and $d(B,C) = \changed{|B'C'|}$. Let $Y$ be a point on the geodesic between $B$ and $C$, and let $Y'$ be a \emph{comparison point} on the line between $B'$ and $C'$ such that $d(B,Y) = \changed{|B'Y'|}$.  Then triangle $ABC$ satisfies the CAT(0) inequality if 
$d(Y, A) \leq \changed{|Y'A'|}$ for any $Y$.  Intuitively, this corresponds to all triangles in $X$ being at least as skinny as the corresponding triangle in Euclidean space (Figure~\ref{fig:comparison_triangles}).


\begin{figure}[h]
\centering
\includegraphics[width=4in]{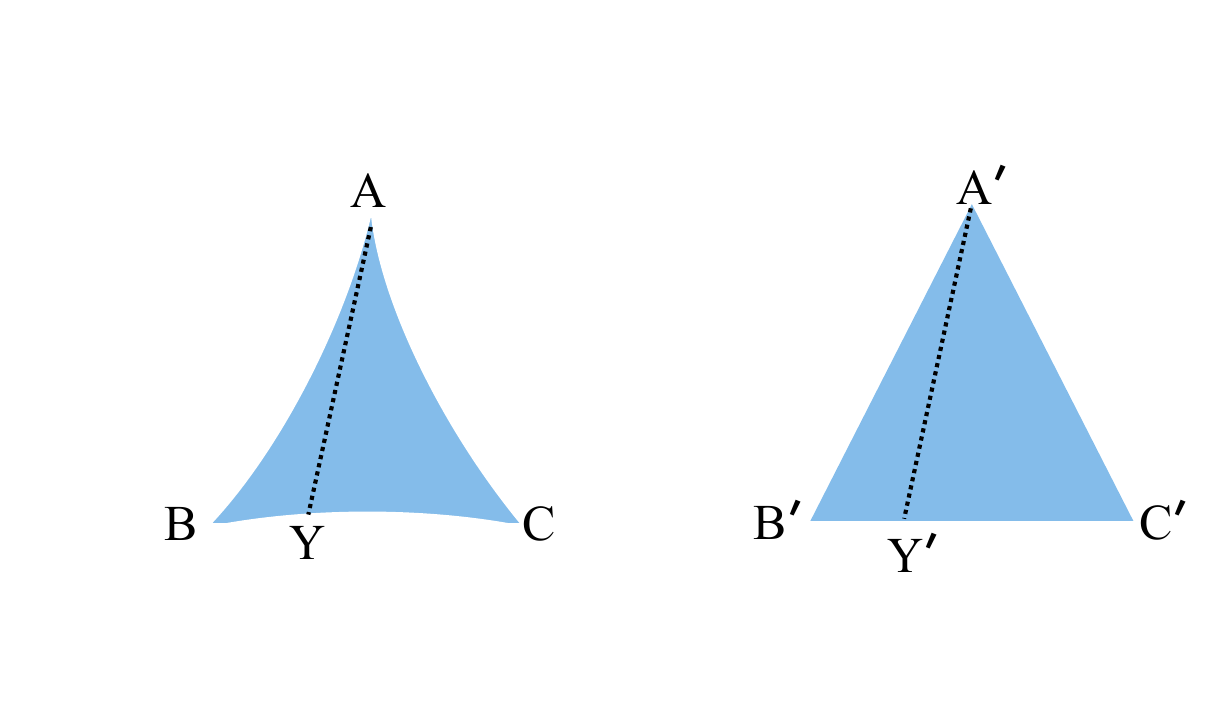}
\caption{The triangle on the left represents a triangle in a CAT(0) space, with its corresponding comparison triangle in Euclidean space on the right.}
\label{fig:comparison_triangles}
\end{figure}

A \emph{polyhedral complex} is a set of convex polyhedra (``cells'') glued together by isometries along their faces.  In this paper we only consider finite polyhedral complexes.
When all of the cells are cubes, then this is called a \emph{cube}, or \emph{cubical}, \emph{complex}.
The length of a path between two points in a polyhedral complex is the sum of the Euclidean lengths of the pieces of the path in each cell of the complex.   The \emph{distance} between two points is defined as the length of the shortest path between them.

%




We will consider polyhedral
complexes that are CAT(0).  The cells are 2D (planar) convex polygons.  These can always be triangulated, so the general setting is when the cells are triangles.
We call this a \emph{2D CAT(0) complex}.
Sometimes we will consider a complex in which all the cells are rectangles, either bounded or unbounded.  We call this a \emph{2D CAT(0) rectangular complex}.
In general we allow the space to have boundary (i.e., an edge that is incident to only one cell).

\note{
A 2D CAT(0) complex can be specified by giving its combinatorial information (the list of vertices, edges, and cells together with their incidence relationships) and the geometry of each cell. 
Each cell is a convex polygon that can be given either in local coordinates, or as angles and edge lengths.  
Converting between these two representations requires a real RAM model of computation in general, though for rectangular complexes, the conversion is efficient as measured by bit complexity, since all angles are $90^\circ$.
}

\revised{If a geodesic path in a 2D CAT(0) complex travels through a sequence of cells, we can ``unfold'' the cells into the plane where the path becomes a straight line.    
This result, and its history, is thoroughly discussed by Mitchell et al.~\cite{mitchell1987discrete}, for the case of a 2-manifold (a polyhedral surface). 
The same is true in general 2D CAT(0) complexes since a geodesic path can never revisit an edge, i.e., the sequence of cells traversed by a geodesic path forms a 2-manifold.
}

For any vertex $v$ of a 2D complex, we define the \emph{link graph}, $G_v$ as follows.
The vertices of $G_v$ correspond to the edges incident to $v$ in the complex.
The edges of $G_v$ correspond to the cells incident to $v$ in the complex: if $r$ and $s$ are edges of cell $C$ with $r$ and $s$ incident to $v$, then we add an edge  between vertices $r$ and $s$ in $G_v$ with weight equal to the angle between $r$ and $s$ in $C$.  Every point $p \ne v$ in cell $C$ can be mapped to a point on edge $(r,s)$ in $G_v$: if the angle between $vr$ and $vp$ in $C$ is $\alpha$ then $p$ corresponds to the point along edge $(r,s)$ that is distance $\alpha$ from $r$.


When we have a 2D polyhedral complex, there is also an alternative condition for determining whether it is CAT(0).



\begin{theorem}[{{\cite[Theorem II.5.4 and Lemma II.5.6]{BridsonHaefliger}}}]
\label{thm:CAT0}
A 2D polyhedral 
complex $\cal K$ is CAT(0) if and only if it is simply connected and for every vertex $v \in {\cal K}$, every cycle in the link graph $G_v$ has length at least $2\pi$.
\end{theorem}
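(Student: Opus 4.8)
The plan is to reduce this global statement to a local one using the Cartan--Hadamard theorem, and then to analyze the local geometry at each point through its link. Two standard ingredients from CAT(0) theory drive the argument. Ingredient (i), the \emph{local-to-global} (Cartan--Hadamard) theorem: a complete, simply connected geodesic space is CAT(0) if and only if it is \emph{locally} CAT(0), meaning every point has a CAT(0) neighborhood. Ingredient (ii), \emph{Gromov's link condition}: a polyhedral complex is locally CAT(0) if and only if the link of every point, with its angular metric, is a CAT(1) space. Since $\mathcal{K}$ is finite it is complete, so combining (i) and (ii) lets me trade ``$\mathcal{K}$ is CAT(0)'' for ``$\mathcal{K}$ is simply connected and every link is CAT(1).''

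First I would dispose of the easy direction. If $\mathcal{K}$ is CAT(0) then geodesics between points are unique, so the geodesic contraction to a basepoint is continuous (by convexity of the metric) and $\mathcal{K}$ is contractible, hence simply connected; moreover small balls inherit the comparison inequality, so $\mathcal{K}$ is locally CAT(0), and by ingredient (ii) every link is CAT(1). It then remains to translate the CAT(1) condition on links into the stated combinatorial hypothesis, and to run ingredient (i) in reverse.

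Next I would carry out the link analysis, which is where the cycle-length hypothesis enters. The only links that are not automatically CAT(1) sit at vertices. At an interior point of a $2$-cell the link is a circle of length $2\pi$, hence CAT(1); at an interior point of an edge shared by $k$ cells the link is a graph on two vertices (the two directions along the edge) joined by $k$ arcs of length $\pi$ each, so every cycle has length $\pi+\pi=2\pi$, again CAT(1). At a vertex $v$ the link is exactly the metric graph $G_v$, a $1$-dimensional complex. Here I would invoke (and, if necessary, verify) the characterization that a metric graph is CAT(1) precisely when its shortest closed geodesic, i.e.\ its systole, has length at least $2\pi$ --- equivalently, when every cycle of $G_v$ has length at least $2\pi$. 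This is exactly the hypothesis of the theorem, so the link condition at $v$ is equivalent to the stated bound on cycles in $G_v$.

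For the converse, assuming $\mathcal{K}$ is simply connected and every cycle of every $G_v$ has length at least $2\pi$, the link analysis makes every link CAT(1), so ingredient (ii) gives that $\mathcal{K}$ is locally CAT(0), and ingredient (i) upgrades this to global CAT(0). The main obstacle is ingredient (i), the Cartan--Hadamard theorem for non-Riemannian geodesic spaces: one must show that the local comparison inequalities, patched together along geodesics using simple connectivity, yield the global comparison inequality. The standard route is to construct geodesics of $\mathcal{K}$ by concatenating and developing local geodesics, prove they are unique, and transport convexity of the metric from the local CAT(0) neighborhoods to all of $\mathcal{K}$; the systole bound on each $G_v$ is precisely what guarantees local geodesic uniqueness near vertices and thus feeds this machinery.
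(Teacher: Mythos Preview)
The paper does not give its own proof of this theorem; it simply cites Bridson--Haefliger \cite[Theorem~II.5.4 and Lemma~II.5.6]{BridsonHaefliger} and states the result without argument. Your outline is a correct sketch of exactly the argument found in that reference: reduce to the local question via the Cartan--Hadamard theorem, invoke the link condition to translate ``locally CAT(0)'' into ``every link is CAT(1)'', observe that links at non-vertex points are automatically CAT(1), and use the characterization of CAT(1) metric graphs by systole $\geq 2\pi$ at vertices. So there is nothing to compare against --- your proposal \emph{is} the cited proof, summarized accurately.
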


\remove{For example, four quadrants arranged in a circle around the origin correspond to the plane, which is CAT(0) and has a link graph of the origin consisting of one cycle of length $2\pi$.  In contrast, three quadrants arranged in a circle around the origin correspond to the corner of a room, and are not CAT(0).  Here, the link graph of the origin is one cycle of length $\frac{3\pi}{2}$, and there are two shortest paths from any point on an edge to any point in the middle of the opposite quadrant, with each path passes through one of the two quadrants adjacent to that edge.}


Some of our results are only for the case where the CAT(0) complex $\cal K$ has a single vertex $O$, which we call the \emph{origin}.  We will call such a complex a \emph{single-vertex complex}.
In a 2D single-vertex complex
every cell is a \emph{cone} formed by two edges incident to $O$ with angle at most $\pi$ between them.  There is a single link graph $G = G_O$.  Every point $p$ of the complex except $O$ corresponds to a point $\lambda(p)$ of $G$, and every point of $G$ corresponds to a ray of points in the complex.

\subsection{BHV Tree Space} \label{s:BHV_detail}
As explained in Section~\ref{s:tree_applications} the work on computing convex hulls was motivated by the BHV tree space for trees with 5 leaves, which is a 2D CAT(0) complex.  We will now describe this space, which is denoted $\T_5$, and which contains
all unrooted leaf-labelled, edge-weighted phylogenetic trees with 5 leaves 
(equivalently all such \emph{rooted} trees with 4 leaves).  
For a description of the BHV tree spaces for trees with more than 5 leaves,
see \cite{BHV}. 
This section is not necessary for understanding the rest of the paper.

A \emph{phylogenetic tree} is a tree in which each interior vertex has degree $\geq 3$ and there is a one-to-one labelling between the leaves (degree 1 vertices) and some set of labels $\cal L$. For this paper, we assume ${\cal L} = \{1,2,3,4,5\}$.  Also, the trees have a positive weight or length on each \emph{interior edge}, which is an edge whose vertices \changed{have} degree $\geq 3$ (that is, are not leaves).  If a phylogenetic tree contains only vertices of degree 1 and 3, then it is called \emph{binary}.

\begin{figure}[htb]
\centering
\includegraphics[width=5in]{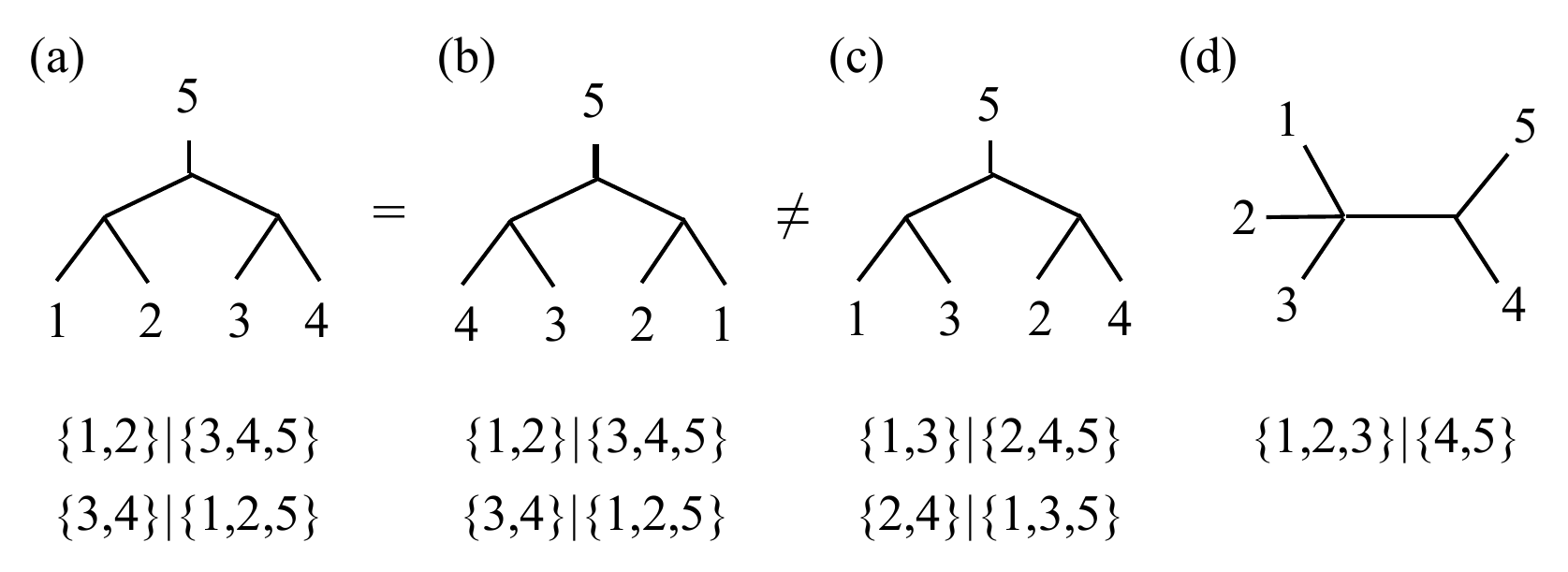}
\caption{Several tree shapes with their constituent splits listed below them.  Tree (a) and tree (b) have the same tree shape, which does not depend on the planar embedding of the tree.  Tree (c) has different splits and hence a different tree shape from trees (a) and (b).  Tree (d) is a non-binary tree shape.}
\label{fig:treeshapes}
\end{figure}

A \emph{split} is a partition of the leaf-set $\cal L$ into two parts $L_1 \cup L_2 = \cal L$ such that $|L_i| \geq 2$ for $i = 1,2$. We write a split as $L_1 | L_2$.  Each interior edge in a phylogenetic tree corresponds to a unique split, where the two parts are the sets of leaves in the two subtrees formed by removing that edge from the tree.  Binary trees with five leaves contain two interior edges, and hence two splits.  There are 10 possible splits, and they can be combined to form 15 different tree shapes.  The \emph{shape} of a tree is defined to be the set of interior edges or splits in that tree, and tell us which species are most closely related.
(Figure 2).


\begin{figure}[htb]
\centering
\includegraphics[width=5in]{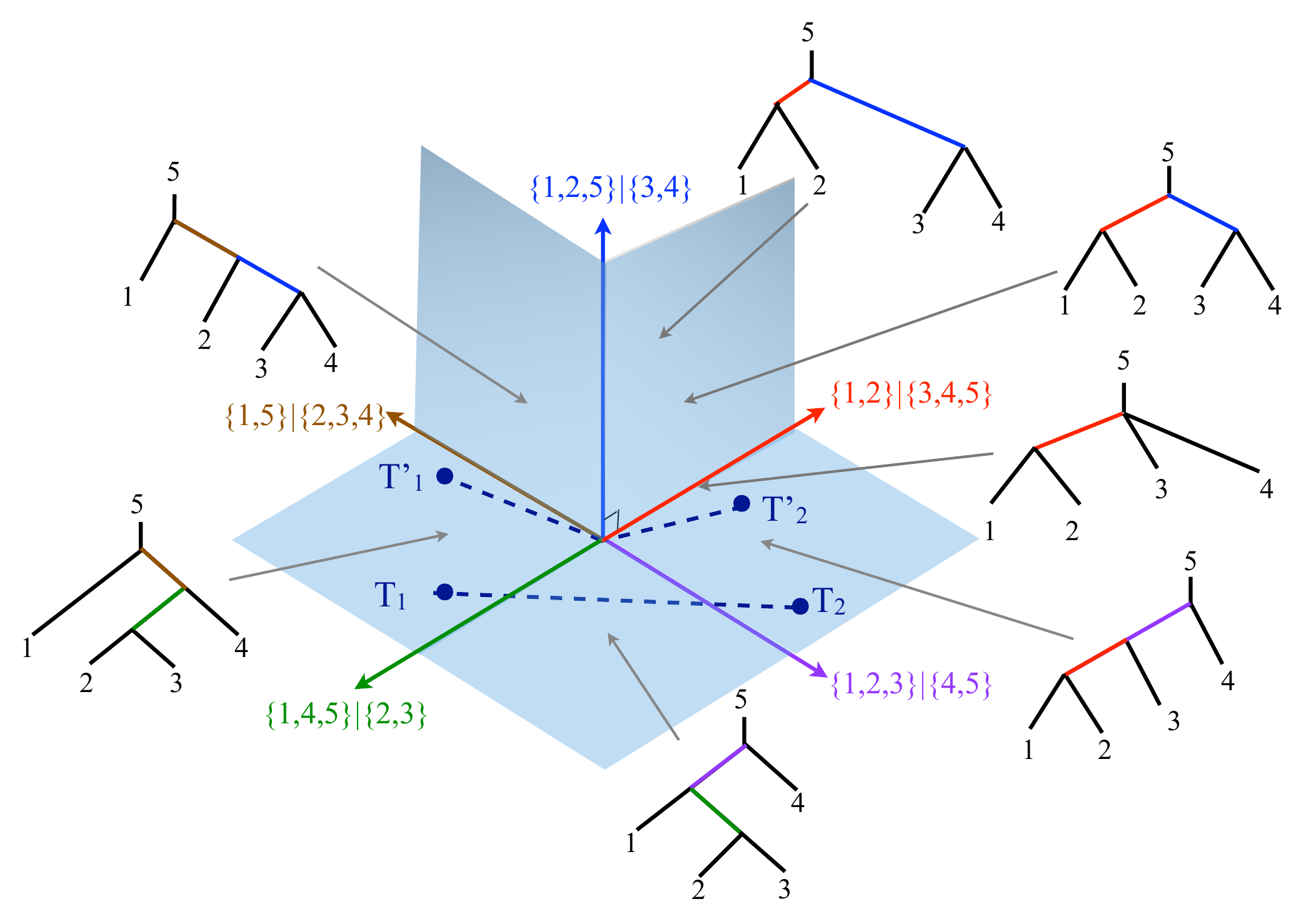}
\caption{Five quadrants in $\T_5$.  
\remove{Technically each split has its own dimension, and these quadrants sit in $\bR^5$.   \note{I don't understand this, but that might be my ignorance.}  However, for ease of visualization, we have drawn two pairs of splits on the same axes.  \note{I only see one split on each axis (??) MO: Each split has its own dimension, and there are five splits, so that means these five quadrants are sitting in 5-dimensional space.  But I drew them in 3-dimensional space, since each split only uses the positive part of its axis, so you can double some of them up (i.e. put a different split on the negative part of the axis).  However, maybe for a geometry audience, it's clear this is what we are doing, so the explanation just make it confusing?}} 
The upper right quadrant illustrates how trees with the same tree shape but different edge lengths lie at different coordinates in the quadrant.  The dashed lines represent geodesics between two pairs of trees where $T_i$ and $T_i'$ have the same tree shape but different edges lengths. 
The geodesic $T_1'$ to $T_2'$ passes through the origin, while the geodesic $T_1$ to $T_2$ passes through an intermediate quadrant. 
}
\label{fig:5orthants}
\end{figure}

We now define the space $\T_5$ itself, which consists of exactly one Euclidean quadrant for each of the 15 possible binary tree shapes. 
For each quadrant, the two axes are labelled by the two splits in the tree shape.  A point in the quadrant corresponds to the tree with that tree shape whose interior edges have the lengths given by the coordinates.  We identify axes labelled by the same split, so that if two quadrants both contain an axis labelled by the same split, then they are glued together along that shared axis (see Figure~\ref{fig:5orthants}).  

The length of a path between two trees in $\T_5$ is the sum of the lengths of the restriction of that path to each quadrant in turn, where it is computed using the Euclidean metric. The \emph{BHV distance} is the length of the shortest path, or geodesic, between the two trees (Figure~\ref{fig:5orthants}).  Billera et al. \cite{BHV} proved that this tree space is a CAT(0) cube complex, which implies that there is a unique geodesic between any two trees in the tree space.


To understand how the 15 quadrants in $\T_5$ are connected, consider the link graph of the origin, which is shown in Figure~\ref{fig:petersen} and is the Petersen graph.  The Petersen graph has multiple overlapping 5-cycles, one of which corresponds to the 5 quadrants in Figure~\ref{fig:5orthants}.  Also note that each vertex in the link graph of the origin is \revised{incident} to three edges.  This corresponds to each axis lying in three quadrants in $\T_5$.
Note that this example illustrates that the link graph of even a CAT(0) rectangular complex with a single vertex need not be planar.

\begin{figure}[htb]
\centering
\includegraphics[width=3in]{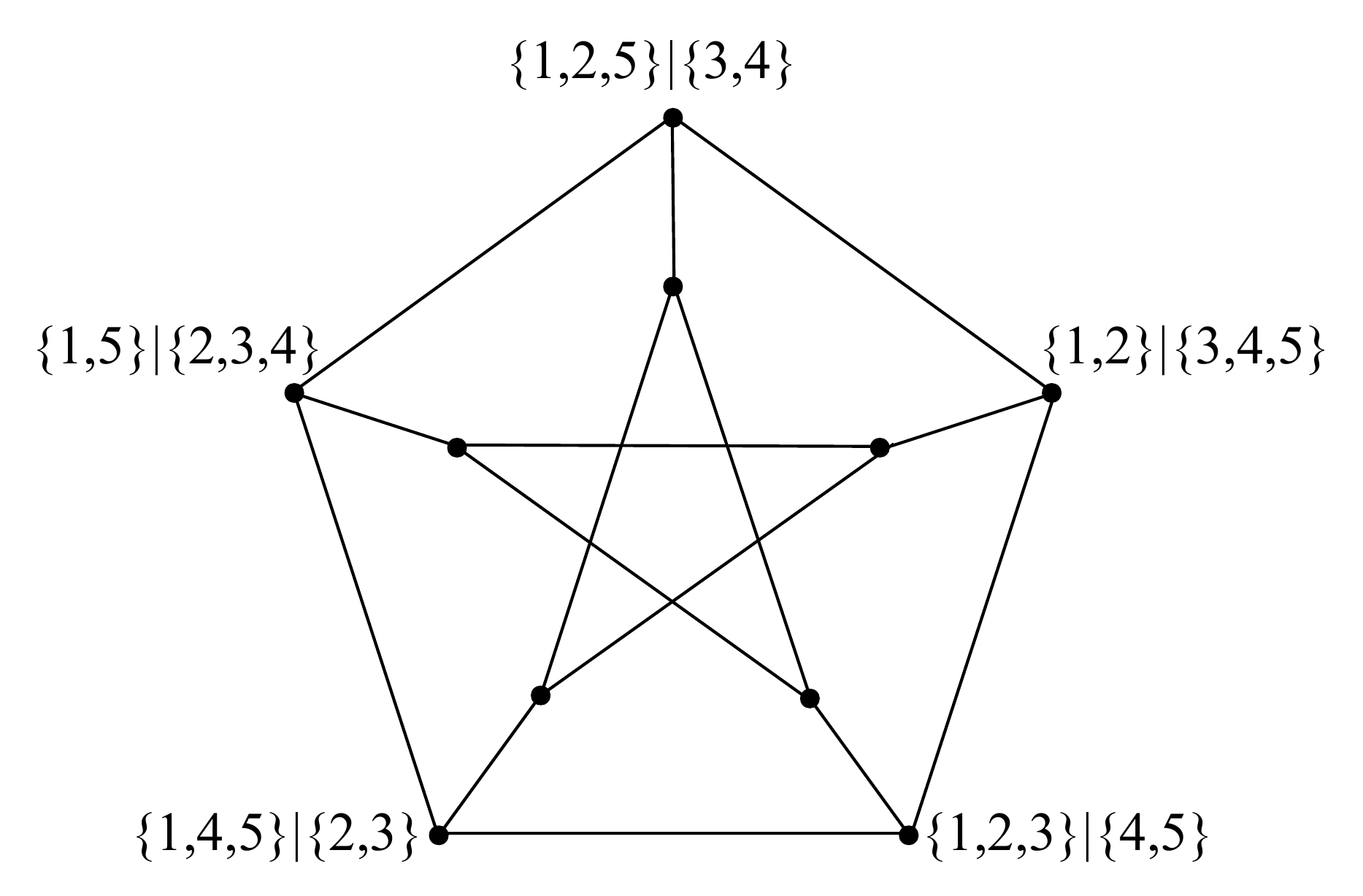}
\caption{The Petersen graph, which is the link graph of the origin of $\T_5$.  The graph edges correspond to tree shapes, and the graph vertices correspond to splits.  The 5 quadrants in Figure~\ref{fig:5orthants} correspond to the outer 5-cycle in the Petersen graph.}
\label{fig:petersen}
\end{figure}



\section{Convex Hulls}
\label{sec:convex-hull}

Let $P$ be a finite set of points in a CAT(0) 
 complex $\cal K$. 
Recall from Section~\ref{sec:introduction} that the convex hull of $P$ is defined to be the minimal set containing $P$ that is closed under taking the shortest path between any two points in the set.
Let CH$(P)$ denote the convex hull of $P$ in $\cal K$.
For algorithmic purposes, there are several ways to specify CH$(P)$.
One possibility is to specify the intersection of CH$(P)$ with each cell of the complex.  In the 
case of 2D CAT(0) complexes, each such set is a convex polygon \revised{which may be open or closed on parts of its boundary.  Our algorithm finds the vertices of each such polygon, and thus finds the closure of $\CH(P)$. } 
We note---although we will not pursue this approach---that
there is another
way to specify CH$(P)$, which might be easier but would still suffice for many applications,
and that is to
give an algorithm to decide if a given query point of $\cal K$ is inside CH$(P)$.

Convex hulls in CAT(0) \revised{spaces} 
are something of a mystery.  It is not known, for example, whether they are closed sets \cite[Note~27]{Berger}.  
\revised{We do not resolve this, even for our case of a 2D CAT(0) complex with a single vertex.}

We begin in subsection~\ref{sec:CH-examples} by giving some examples to show that various properties of Euclidean convex hulls fail in CAT(0) complexes. 
In subsection~\ref{sec:CH-alg}
we give our main result,  an
algorithm (using linear programming) to find convex hulls in 
any 2D CAT(0) complex with a single vertex $O$. 
Specifically, we prove: 

\begin{theorem}
There is a polynomial-time reduction from the problem of finding the \revised{closure of the} convex hull of a finite set of points $P$ in a 2D CAT(0) complex $\cal K$ with a single vertex $O$ to linear programming.  \changed{The resulting linear program has $O(n+m)$ variables and $O((n+m)^3)$ inequalities, where $n$ is the number of cells in $\cal K$ and $m$ is the number of points in $P$.} 
For the special case of a cube complex this provides a polynomial-time \note{(in bit complexity)} convex hull algorithm.
\label{thm:convex-hull-alg}
\end{theorem}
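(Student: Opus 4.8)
The plan is to reduce the geometry to a finite collection of numbers---the radii at which $\CH(P)$ meets each edge of $\mathcal{K}$---and then to linearize the convexity conditions by working with \emph{reciprocal} radii.

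\textbf{Step 1 (structure).} Since $\mathcal{K}$ has a single vertex, every cell $C$ is a planar wedge of angle at most $\pi$ at $O$ and every edge is a ray from $O$. I would first show that $\CH(P)\cap C$ is a convex polygon and that $\CH(P)\cap e$ is a radial interval $I_e=[\ell_e,u_e]$: any hull point lies on a geodesic between two hull points, and by Property~\ref{property:path-thru-O} that geodesic either passes through $O$ (two radial segments) or unfolds to a straight segment, so its trace in $C$ is a straight chord with endpoints on $\partial C$. Hence the hull is recovered from the inner and outer crossing radii on the $O(n)$ edges together with $P$, via
\[
\CH(P)\cap C=\operatorname{conv}_C\!\big((P\cap C)\cup I_{e_1}\cup I_{e_2}\big),
\]
where $e_1,e_2$ are the edges of $C$ and $\operatorname{conv}_C$ is ordinary planar convex hull in the wedge. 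Using the CAT(0) local-to-global principle, this region is convex iff it is locally convex across the relative interior of each edge and at $O$; so it suffices to impose convexity on each pair of cells sharing an edge (unfolded into the plane) and a cone condition at $O$.

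\textbf{Step 2 (linearization --- the crux of the construction).} In Cartesian coordinates the conditions ``three angularly consecutive crossing/input points are in convex position'' and ``an input point lies inside the hull'' are \emph{quadratic} in the unknown radii, because an orientation test on points at variable distances on fixed rays is bilinear. I would make them linear as follows. For rays at angles $\theta_1<\theta_2<\theta_3$ (total spread $<\pi$) met by a line at distances $d_1,d_2,d_3$ from $O$, the area identity $[OAC]=[OAB]+[OBC]$ for $B$ on segment $AC$ gives
\[
\sin(\theta_3-\theta_1)\,w_2=\sin(\theta_2-\theta_1)\,w_3+\sin(\theta_3-\theta_2)\,w_1,\qquad w_i:=1/d_i,
\]
so collinearity is \emph{linear in the reciprocal radii}. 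Replacing ``$=$'' by an inequality expresses that the middle point bulges outward (outer boundary) or inward (inner boundary); containment of a fixed point $p_i$, whose reciprocal radius is a constant, is likewise a linear inequality in the $w$'s. Taking as variables the reciprocal crossing radii on the edges, and the reciprocal boundary radii at the directions of the points of $P$, thus turns every local-convexity and containment condition into a linear inequality.

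\textbf{Step 3 (the origin) and assembly.} The genuinely non-Euclidean behaviour is concentrated at $O$: by Property~\ref{property:path-thru-O}, a pair of hull directions at link-distance $\ge\pi$ forces the connecting geodesic through $O$, so I would use shortest paths in the link graph $G$ to decide whether $O\in\CH(P)$ and whether each edge interval reaches $O$; together with the reciprocal-radius inequalities for the inner boundary this encodes the cone condition at $O$. The resulting linear program has the $O(n+m)$ reciprocal radii as variables, one convexity inequality for every angularly ordered triple of the $O(n+m)$ potential vertices (input points and edge crossings), i.e.\ $O((n+m)^3)$ inequalities, plus the containment inequalities; a linear objective that shrinks the enclosed region selects the unique pointwise-minimal convex set, namely $\CH(P)$. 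For a cube complex all angles are right angles and all data are rational, so the $\sin$ coefficients are rational and the program is solvable in polynomial time, giving the stated algorithm.

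The main obstacle I anticipate is not the linearization but the treatment of $O$ and the inner boundary: because geodesics from interior hull points can re-enter toward $O$ at smaller link-distance and sweep out new area, the hull boundary is genuinely \emph{not} a union of geodesics between input points, so I must prove both that the crossing-radius data together with the cone condition at $O$ capture the hull exactly, and that local convexity (across edges and at $O$) really implies global convexity here. The latter requires justifying the local-to-global step despite the open problem of whether $\CH(P)$ is closed, which I would aim to settle directly in the single-vertex $2$-dimensional setting.
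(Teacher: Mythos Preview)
Your approach is essentially the paper's: the crucial move---replacing radii by their reciprocals so that the ``point on the chord between two rays'' condition becomes linear---is exactly the identity the paper derives (their Claim in Section~\ref{sec:LP-alg} is your $\sin(\theta_3-\theta_1)w_2=\sin(\theta_2-\theta_1)w_3+\sin(\theta_3-\theta_2)w_1$), and both of you count $O(n+m)$ variables and $O((n+m)^3)$ inequalities from triples of rays/points within link-distance $<\pi$.

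A few differences in packaging are worth noting. The paper does \emph{not} use a local-to-global convexity principle; instead it first determines combinatorially (via the link graph) whether $O\in\CH(P)$ and which edges are hit at all, then argues directly that the LP constraints encode ``contains $P$ and is closed under geodesics between edge-crossing points,'' so any feasible solution contains $\CH(P)$ and the objective picks out the minimal one. This sidesteps your worry about the inner boundary: when $O\in\CH(P)$ only outer radii are needed, and when $O\notin\CH(P)$ the paper simply doubles the variables to $y_\ell^{\min},y_\ell^{\max}$ with symmetric inequalities. Your uniform interval treatment is equivalent but you should be careful that your Step~1 sentence ``it suffices to impose convexity on each pair of cells sharing an edge'' is weaker than what you actually use in Step~3 (all triples within link-distance $<\pi$); the paper takes the latter, which is what gives the cubic bound. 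Finally, your closing concern about closedness is answered by the construction itself: since $\CH(P)$ is captured as the optimum of a finite LP in the reciprocal radii, it is closed---the paper notes this explicitly as a corollary.
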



The idea of our algorithm is to first
use the link graph to test if point $O$ is in the convex hull and to identify the edges of the complex that intersect the convex hull at points other than $O$.  
Then we formulate the exact computation of the convex hull as a linear program whose variables represent the boundary points of the convex hull on the edges of the complex.
There is a polynomial bound on the number of variables and inequalities of the linear program, but whether the linear program can be solved in polynomial time depends on bit complexity issues.
In the general case our reduction uses the real-RAM model of computation, \note{including trigonometric 
operations.}
There are polynomial-time linear programming algorithms~\cite{khachiyan,karmarkar}, but their run-times depend on the number of bits in the input numbers.  
For cube complexes, which have angles of $90^\circ$, 
\note{our reduction uses standard arithmetic operations}
and the resulting  linear program has coefficients with a polynomial number of bits and so our
convex hull algorithm runs in polynomial time.
However, more generally our algorithm must use the stronger real RAM model of computation in order to perform computations on the angles of the input CAT(0) complex, and we must resort to the simplex method for linear programming~\cite{Dantzig}
\note{which is not known to run in polynomial time}.

\subsection{A Basic Result on Single-Vertex 2D CAT(0) Complexes}
\label{sec:basicCH}

\note{In this section we investigate the correspondence between shortest paths in a 2D CAT(0) complex $\cal K$ with a single vertex $O$ and paths in the link graph $G=G_O$. 
} 

Consider two points $a$ and $b$ in $\cal K$,  distinct from $O$,  and consider the corresponding points 
\note{$\lambda(a)$ and $\lambda(b)$ in $G$.
Let $\sigma(a,b)$ be the (unique) geodesic path between $a$ and $b$ in the space $\cal K$.  Let $\sigma_G(\lambda(a),\lambda(b))$ be 
\changed{a} shortest path between $\lambda(a)$ and $\lambda(b)$ in $G$.  Let $| \sigma |$ indicate the length of  path $\sigma$.

\begin{proposition}
Exactly one of the following two things holds:
\begin{itemize}
\item $| \sigma_G(\lambda(a),\lambda(b)) | \ge \pi$  and $\sigma(a,b)$ goes through $O$,
\item $| \sigma_G(\lambda(a),\lambda(b)) | < \pi$  and $\sigma(a,b)$ maps to $\sigma_G(\lambda(a),\lambda(b))$ and does not go through $O$.
\end{itemize}
\label{property:path-thru-O}
\end{proposition}
For an example, see Figure~\ref{fig:tree-repeat} 
\note{(where corresponding points in $\cal K$ and $G$ are referred to by the same name).}
Compare the pair $p_1, c$, where \revised{$| \sigma_G(p_1, c) | = 155^\circ$} and $\sigma(p_1, c)$ does not go through the origin, with the pair $p_1, b$, where \revised{$| \sigma_G(p_1, b) | = 205^\circ$} and $\sigma(p_1, b)$ goes through the origin.
\begin{proof}
If $\sigma(a,b)$ does not go through $O$, then $\sigma(a,b)$ travels through some cells, and, by 
\revised{unfolding these, i.e., placing them} one after another in the plane, $\sigma(a,b)$ forms a straight line segment through the cells, which creates a triangle together with point $O$.  The angle of this triangle at $O$ is $| \sigma_G(\lambda(a),\lambda(b)) |$ which is therefore less than $\pi$.

Conversely, if $| \sigma_G(\lambda(a),\lambda(b)) | < \pi$, then the path $\sigma_G(\lambda(a),\lambda(b))$ follows segments of the link graph which correspond to cells of $K$, and when we place these cells one after another in the plane, the angle between segments $Oa$ and $Ob$ in the plane is $| \sigma_G(\lambda(a),\lambda(b)) |$.  Thus the straight line segment from $a$ to $b$ remains in the cells, and forms a geodesic path from $a$ to $b$ that does not go through $O$, and that maps to $\sigma_G(\lambda(a),\lambda(b))$.      
\end{proof}
}

\subsection{Counterexamples for Convex Hulls in CAT(0) complexes}
\label{sec:CH-examples}

In this section we give examples to show that the following properties of the convex hull of a set of points $P$ in Euclidean space do not carry over to CAT(0) complexes,
\note{not even single-vertex CAT(0) cube complexes}.

\begin{enumerate}
\item  Any point on the boundary of the convex hull  of points in 2D is on a shortest path between two points of $P$.
\item  In any dimensional space, the convex hull of three points is 2-dimensional.
\item  Any point inside the convex hull can be written as a convex combination of points of $P$.
\end{enumerate}

\begin{figure}[htb]
\centering
\includegraphics[width=2in]{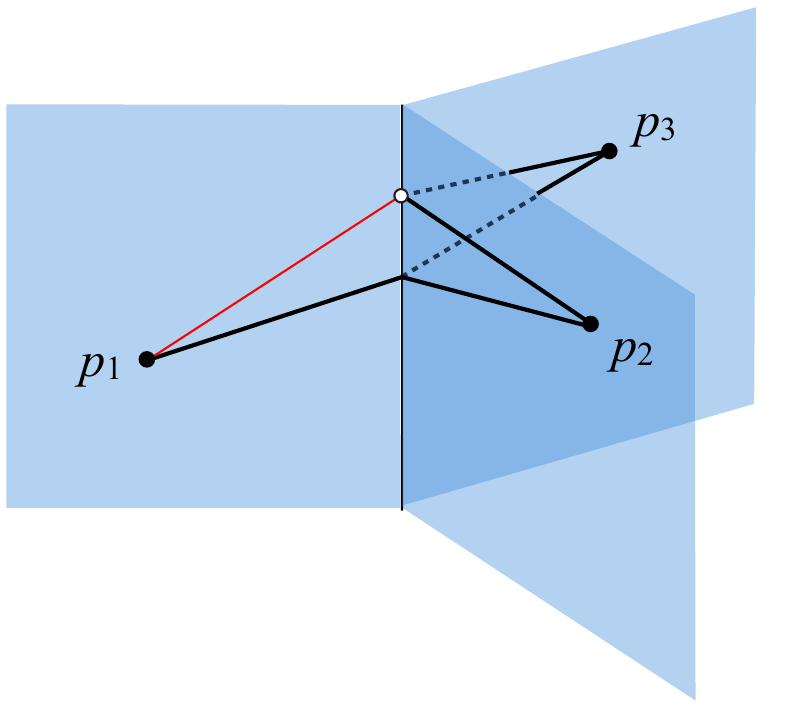}
\caption{The shortest paths between $p_1, p_2, p_3$ (shown in black) do not determine the convex hull because the thin red line is on the boundary of the convex hull but not on any of the shortest paths.}
\label{fig:3pagebook}
\end{figure}

Our first example, shown in Figure~\ref{fig:3pagebook}, has three cells sharing an edge.  Set $P$ contains one point in each cell.  The three shortest paths between pairs of points in $P$ do not determine the convex hull.  This shows that property 1 fails.

\begin{figure}[ht]
\centering
\includegraphics[width=5in]{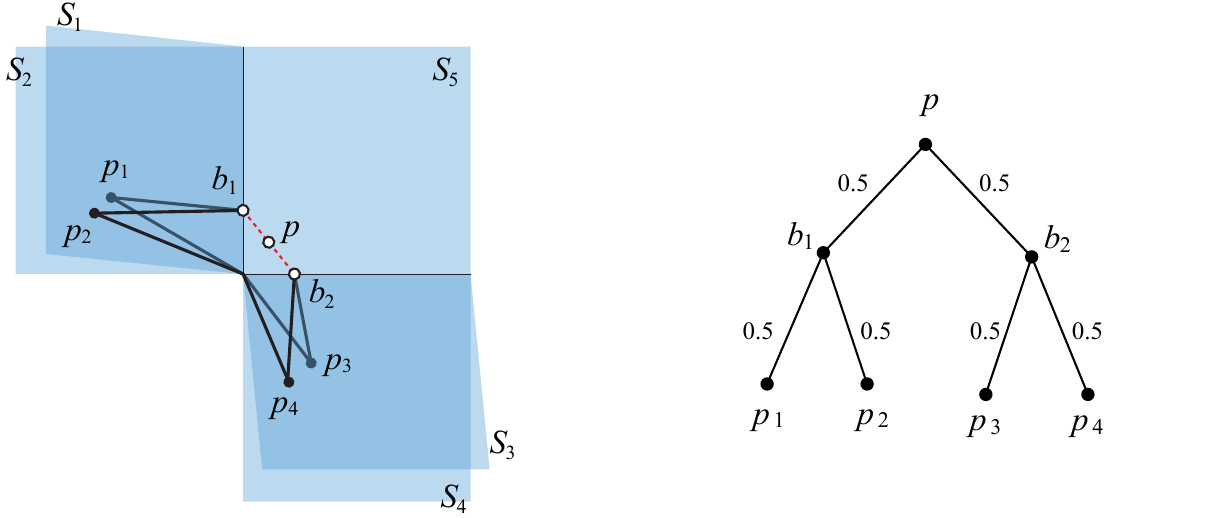}
\caption{(left) A 2D CAT(0) space consisting of 5 quadrants.  Quadrants $S_1, S_2, S_5$ share a vertical edge, and quadrants $S_3, S_4, S_5$ share a horizontal edge.  Point $p_i, i=1,2,3,4$, lies in quadrant $S_i$.   The convex hull of $p_1, p_2, p_3, p_4$ contains points---in particular $p$ and the dashed red line---in a quadrant, $S_5$, that is not entered by any shortest path between points in $P$ (shown as black lines).  
Note that 4 of the 6 shortest path between points of $P$ go through the origin.
(right) The tree representation of point $p$ in terms of $P$.}
\label{fig:remote-quadrant}
\end{figure}

Furthermore, the example in Figure~\ref{fig:remote-quadrant} shows that even in a single-vertex 2D CAT(0) rectangular complex, the convex hull of a set of points $P$ can contain a point in a quadrant that is not entered by any shortest path between points of $P$.
\changed{This example also shows that Carath\'eodory's property may fail, since point $p$ is in the convex hull of the four points $p_1, p_2, p_3, p_4$ but not in the convex hull of any three of the points.}

The example in Figure~\ref{fig:3points3D} shows that the convex hull of three points in a 3D CAT(0) complex may contain a 3D ball\revised{, and thus property 2 fails.  Lin et al. \cite{LinSturmfelsTangYoshida} give a more complicated family of examples in which the top-dimensional cells have dimension $2d$, and there exist 3 points in the space such that their convex hull contains a $d$-dimensional simplex.}

\begin{figure}[htb]
\centering
\includegraphics[width=3.5in]{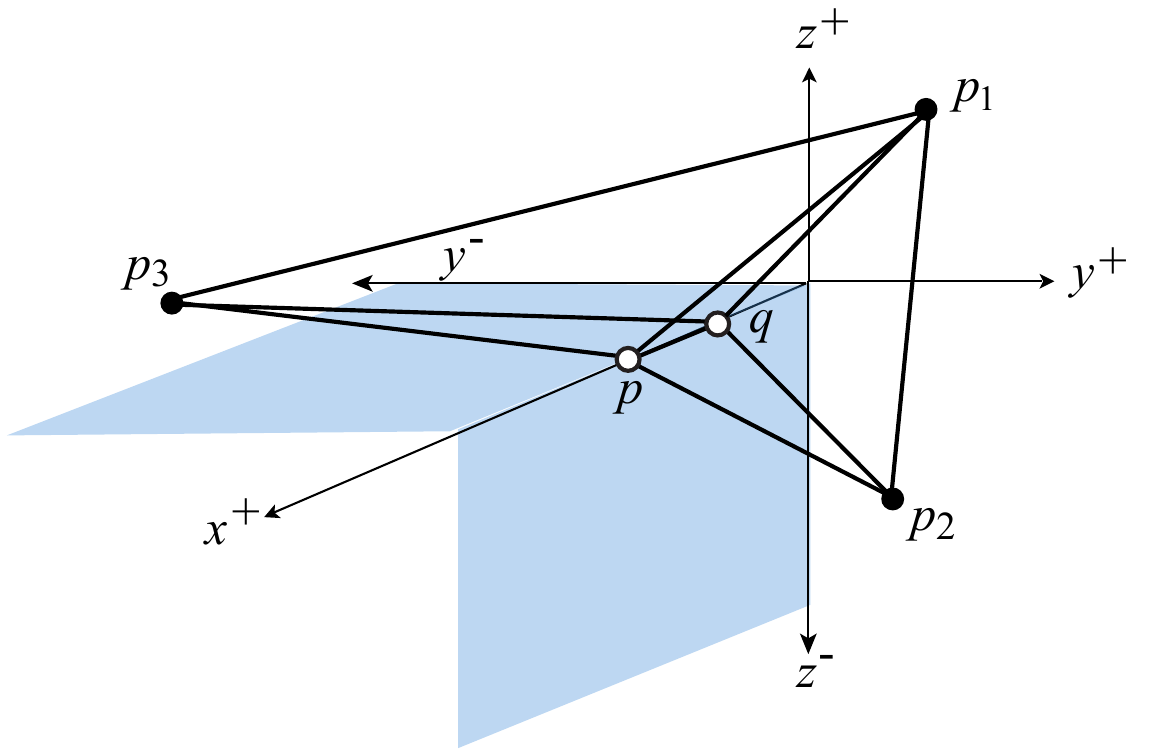}
\caption{\revised{The convex hull of the three points $p_1, p_2, p_3$ in this 3D CAT(0) complex contains a 3D ball.  The complex consists of exactly 3 octants, $x^+ y^+ z^+$, $x^+ y^+ z^-$, and $x^+ y^- z^+$, and is missing the bottom left octant $x^+ y^- z^-$.  Point $p_1$ lies in the $x^+ y^+ z^+$ octant, $p_2$ in the  $x^+ y^+ z^-$ octant, and $p_3$ in the $x^+ y^- z^+$ octant.}  Point $p$ is where the shortest path from $p_2$ to $p_3$ intersects the $x^+$ axis, and point $q$ is where the triangle $p_1 p_2 p_3$ is pierced by the $x^+$ axis.  The convex hull of $P$ consists of the union of two simplices $p_1 p_3 p q$ and $p_1 p_2 p q$.}
\label{fig:3points3D}
\end{figure}

\begin{figure}[!ht]
\centering
\includegraphics[width=\linewidth]{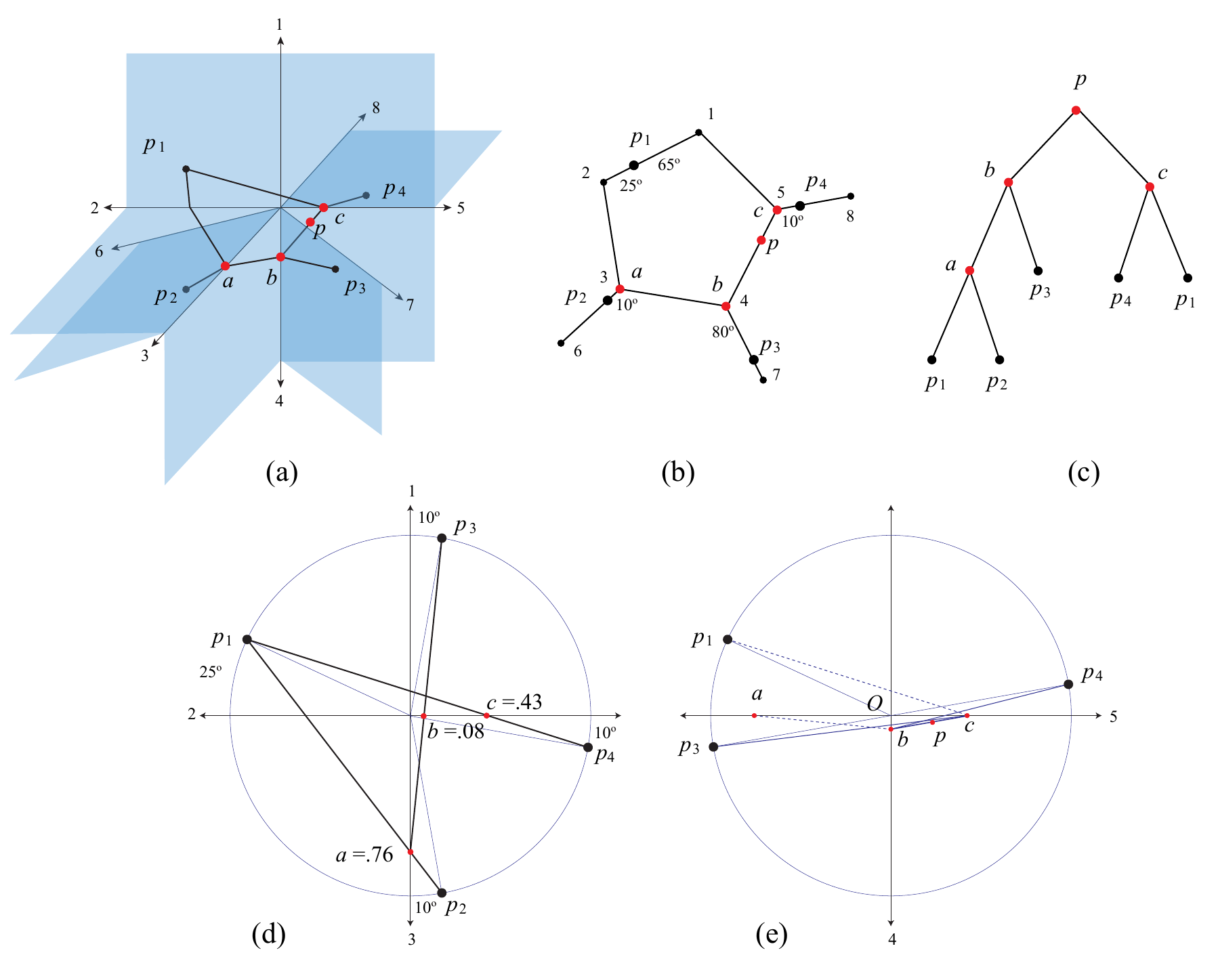}
\caption{An example of a 2D CAT(0) complex where point $p$ in CH$(\{p_1, p_2, p_3,p_4\})$ cannot be represented as a binary tree with distinct leaves.  Points $p_1, p_2, p_3,p_4$ are all distance 1 from the origin, with angles as specified in the link graph.
(a) the complex $\cal K$ (not to scale); (b) the link graph $G$; 
\revised{(c) a binary tree representing $p$ with leaf $p_1$ repeated; (d) construction of points $a, c$ and $b$; (e) $p$ does not lie in either the shortest path from $b$ to a point in $Op_4$ or the shortest path from $c$ to a point in $Op_3$.
} 
}
\label{fig:tree-repeat}
\end{figure}

Property 3 must be expressed more carefully for CAT(0) complexes because it is not clear what a convex combination of a set of points means except when the set has two points.   If $p$ and $q$ are two points in a CAT(0) complex, then the points along the shortest path from $p$ to $q$ can be parameterized as
$(1-t)p + tq$ for $t \in [0,1]$.   Based on the definition of the convex hull,   any point in the convex hull of a set of points $P$ can be represented  as a rooted binary tree with leaves labelled by points in $P$ (with repetition allowed) and with the two child edges of each internal node $v$ labelled by two numbers $(1-t_v)$ and $t_v$ for $t_v \in [0,1]$, meaning that the point associated with $v$ is
this combination of the points represented by the child nodes.  For example, see Figure~\ref{fig:remote-quadrant}.

One might hope that every point in the convex hull can be represented by a binary tree whose leaves are labelled by distinct elements of $P$.
If this were true then we could verify that a point is in the convex hull \note{of $m$ points} by giving a binary tree with at most $m$ leaves, and the problem of deciding membership in the convex hull would lie in NP, \note{at least for the case of cube complexes, where the 
sizes of the weights attached to the binary tree are polynomially bounded}.
However, this hope is dashed by the example in Figure~\ref{fig:3points3D}.  
Furthermore, the property may even fail for a 2D CAT(0) complex as we prove below for the example in Figure~\ref{fig:tree-repeat}:

\begin{lemma}
For the 2D CAT(0) complex shown in Figure~\ref{fig:tree-repeat}, the point $p$ cannot be represented by a binary tree with distinct leaves from $\{p_1, p_2, p_3,p_4\}$.
\label{lemma:tree-repeat}
\end{lemma}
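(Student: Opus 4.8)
The plan is to argue by contradiction: assume $p$ is represented by a binary tree $T$ whose leaves are \emph{distinct} points of $\{p_1,p_2,p_3,p_4\}$, so $T$ has at most four leaves, and derive a contradiction from the angular data in Figure~\ref{fig:tree-repeat}. Write the root of $T$ as the combination $p=(1-t)u+tv$ of the two points $u,v$ computed by its two subtrees, so that $p$ lies on the geodesic $\sigma(u,v)$; we may assume $p\neq u,v$, for otherwise we recurse into the appropriate subtree. The first task is to locate the cell $C$ containing $p$ and to show, using Property~\ref{property:path-thru-O} together with the given angles, that $C$ is traversed by no geodesic between two of the $p_i$, and more generally by no combination built from only two or three of the four points. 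This disposes of all the small trees and reduces the problem to the two topologies on four distinct leaves, the caterpillar $(((p_i,p_j),p_k),p_l)$ and the balanced tree $((p_i,p_j),(p_k,p_l))$, over all label assignments.

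Next I would analyze the root combination through the link graph $G$. Since $p\in C$ lies on $\sigma(u,v)$, Property~\ref{property:path-thru-O} tells us that either $\sigma(u,v)$ passes through $O$, or it maps to a shortest path $\sigma_G(u,v)$ of angular length less than $\pi$ whose image contains the link-graph image of $p$. In the second case the labelled angles pin down which cells $u$ and $v$ can occupy, and what angular positions they must have, for this shortest path to cross the link-edge of $C$ at exactly the image of $p$. Each of $u,v$ is in turn a combination of the leaves in its own subtree, so its link-graph image lies on shortest paths among those leaf images (or the point equals $O$). The heart of the argument is an angular accounting: using the engineered thresholds---in particular the fact that $|\sigma_G(p_1,c)|=160^\circ$ keeps one geodesic off the origin while $|\sigma_G(p_1,b)|=200^\circ$ forces another through it---one shows that the only way to seat $u$ and $v$ on opposite sides of $C$ so that $\sigma(u,v)$ reaches $p$ is for the subtree of $u$ and the subtree of $v$ to share a common leaf $p_i$. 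That shared leaf contradicts the distinctness of the leaves of $T$, completing the proof; and for contrast the representation \emph{with} repetition does exist, namely the tree of Figure~\ref{fig:tree-repeat}(c) assembled as in parts (d)--(e).

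I expect the main obstacle to be exactly this angular accounting for the four-leaf trees, and especially the balanced topology. A four-leaf combination has three free parameters mapping into a two-dimensional space, so its image can be a genuine two-dimensional region rather than a curve, and ruling $p$ out of every such region is not a single inequality. The argument must instead track, for each labelled topology, the exact angular interval that each intermediate point can occupy, and verify that no partition of the four leaves into the two root-subtrees can straddle $C$ at the location of $p$---while simultaneously handling the sub-cases in which one or more of the intermediate geodesics passes through $O$ via Property~\ref{property:path-thru-O}. The specific angles in Figure~\ref{fig:tree-repeat} are presumably chosen precisely so that every such partition fails, and the careful finite case check that confirms this is the technical core of the lemma.
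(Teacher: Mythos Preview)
Your plan is sound in outline, and it would eventually succeed, but it is both more laborious and less illuminating than the paper's argument, and you have correctly identified that the four-leaf case check is where all the work hides. The paper bypasses almost all of that work with one observation you do not make: among the six pairwise geodesics $\sigma(p_i,p_j)$, \emph{only} $\sigma(p_1,p_2)$ and $\sigma(p_1,p_4)$ avoid the origin; the other four pass through $O$ by Property~\ref{property:path-thru-O}. This single fact collapses your case analysis. Any internal node at depth one (a combination of two distinct leaves) either lies on a ray through $O$, or lies on $\sigma(p_1,p_2)$ or $\sigma(p_1,p_4)$---and in the latter two cases $p_1$ is already used. So instead of enumerating caterpillar versus balanced topologies over all labelings, the paper simply names the two non-trivial first-level points $a\in\sigma(p_1,p_2)$ (on axis~3) and $c\in\sigma(p_1,p_4)$ (on axis~5), traces the handful of second-level geodesics from $p_3$ to $a$ and to $c$ across axis~4, and reads off from a distance comparison (the relative positions of $a$ and $c$, hence of their images $b$ and the analogous point on axis~4) that the particular location of $p$ forces both $a$ and $c$ into the construction. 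Since each of $a$ and $c$ consumes $p_1$, the leaf $p_1$ must repeat.

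What you would gain by adopting the paper's line is that the ``angular accounting'' you anticipate shrinks to a single inequality (which crossing of axis~4 is farther from $O$), rather than a two-parameter region analysis for each labeled four-leaf topology. Your framing in terms of partitions of the leaf set into the two root subtrees is not wrong, but it obscures the real mechanism: the bottleneck is not at the root but at the \emph{leaves}, because $p_1$ is the unique point whose pairwise geodesics with other $p_i$ can leave the rays through $O$ at all.
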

\begin{proof}

\revised{Point $p$ can be generated by a binary tree with two leaves labelled $p_1$, as shown in 
Figure~\ref{fig:tree-repeat}(c).  This tree has internal nodes corresponding to points $a, b$ and $c$.  The gist of our argument is to show that $p$ cannot be generated without points $b$ and $c$, and each of those requires $p_1$ to generate it.  

The first part of the argument involves the link graph $G$ shown in Figure~\ref{fig:tree-repeat}(b),
and the second part involves the actual coordinates (the distance from $O$) of the intermediate points.
As shown in Figure~\ref{fig:tree-repeat}(d), starting from quadrant $\{1,2\}$ in the upper left, we can compute $a = .76$ and $c=.42$, and from these (unfolding further quadrants on top of each other) $b = .08$.
Figure~\ref{fig:tree-repeat}(e) 
shows quadrant $\{4,5\}$ containing $b,c,p$ in the lower right  together with neighbouring quadrants.  The property we observe from the figure (and could calculate numerically) is that $p$ lies below segment $bp_4$ and below segment $cp_3$.
With these facts in hand, we now give the details of the proof.

Suppose $p$ is represented by a binary tree $T$ with distinct leaves from $\{p_1, p_2, p_3,p_4\}$.
Let $q_1$ and $q_2$ be the points of the complex corresponding to the children of the root of $T$.
Thus $p$ lies on the shortest path from $q_1$ to $q_2$.
Point $p$ corresponds to a point $\lambda(p)$  that lies in edge $(4,5)$ of the link graph $G$.
By Proposition~\ref{property:path-thru-O}, the shortest path between $\lambda(q_1)$ and $\lambda(q_2)$ must have length less than $180^\circ$, so one of them, say $\lambda(q_1)$, must lie  in edge $(5,1)$ or $(5,8)$ of $G$, and the other, say $\lambda(q_2)$, must lie in edge $(4,3)$ or $(4,7)$.
Furthermore, without loss of generality, the subtree rooted at $q_1$ must include the leaf $p_4$ and the subtree rooted at $q_2$ must include the leaf $p_3$, because without those points we cannot generate anything in the appropriate edges of $G$.

Point $p_1$ may be a leaf of the subtree rooted at $q_1$ (case 1) or the subtree rooted at $q_2$ (case 2), but not both.

In case 1, the subtree rooted at $q_2$ has leaf $p_3$ and possibly $p_2$.  
In $G$, the shortest path between $p_2$ and $p_3$ has length $180^\circ$
so, by Proposition~\ref{property:path-thru-O}, the geodesic between $p_2$ and $p_3$ in the complex 
is a ``cone path'' that
goes through the origin.   Thus $q_2$ must be a point in the segment $Op_3$. 
The subtree rooted at $q_1$ has leaves $p_4, p_1$ and possibly $p_2$. 
As shown in Figure~\ref{fig:tree-repeat}(e), the extreme points we can generate are $p_4$, $c$, and $O$. 
However, for any point $q_1$ in this triangle, and any point $q_2$ in $Op_3$, the shortest path between $q_1$ and $q_2$ does not go through $p$.
This rules out case 1.
 
In case 2, the subtree rooted at $q_1$ has leaf $p_4$ and possibly $p_2$.  Since the path between these points is a cone path, $q_1$ must be a point in the segment $Op_4$.
The subtree rooted at $q_2$ has leaves $p_3, p_1$, and possibly $p_2$. 
As shown in Figure~\ref{fig:tree-repeat}(e), the extreme points we can generate are $p_3$, $b$, and $O$. 
However, for any point $q_2$ in this triangle, and any point $q_1$ in $Op_4$, the shortest path between $q_1$ and $q_2$ does not go through $p$.
This rules out case 2.
 
Therefore, $p$ cannot be represented by a binary tree with distinct leaves from $\{p_1, p_2, p_3,p_4\}$.
}
\remove{
In the link graph (shown in Figure~\ref{fig:tree-repeat}(b)), the 
shortest path between $p_3$ and $p_4$ has length $180^\circ$ 
so, by Proposition~\ref{property:path-thru-O}, the geodesic between $p_3$ and $p_4$ in the complex 
\revised{is a ``cone path'' that}
goes through the origin.  Similarly, the geodesics between the following pairs 
\revised{are also cone paths}:
$p_2$ and $p_3$; $p_1$ and $p_3$; $p_2$ and $p_4$.  The only pairs of points whose geodesic does not go through $O$ are
$p_1$, $p_2$, and $p_1$, $p_4$.  

\revised{A binary tree with distinct leaves chosen from $\{p_1, p_2, p_3,p_4\}$ has at most
 4 leaves and 3 internal nodes.  
 \rerevised{We first claim that no internal node corresponds to a point on a cone path between its children.  
 
Thus, in particular, any internal node whose children are both leaves must have children  $p_1, p_2$, or $p_1,p_4$.
This implies that the root's children cannot both be internal nodes because their children would then have to be $p_1, p_2$ and $p_1, p_4$, respectively, which means repeating $p_1$. 
} 
Thus, in a binary tree that represents $p$ and has distinct leaves, one child of the root must be a leaf labelled $\ell$ where $\ell$ is one of $p_2$, $p_3$, or $p_4$. 
Furthermore, $p$ must lie on a shortest path between $\ell$ and a point $h$ in the convex hull of the other three points.
From the link graph, we see that one of $\ell$ and $h$ must lie in edge $(5,1)$ or $(5,8)$, and the other must lie in the edge
$(4,3)$ or $(4,7)$.  This rules out $\ell=p_2$ since $p_2$ lies in none of these edges.

It remains to consider $\ell=p_3$ or $p_4$.
If $\ell=p_3$, then $h$ must lie in $(5,1)$ or $(5,8)$, and the best such point in the convex hull of $p_1, p_2, p_4$ is $h=c$.
If $\ell = p_4$,  then $h$ must lie in $(4,3)$ or $(4,7)$, and the best such point in the convex hull of $p_1, p_2, p_3$ is $h=b$.

As shown in Figure~\ref{fig:tree-repeat}(d), we can compute the coordinates (the distance from $O$) of points $a = .76$ and $c=.42$, and from these $b = .08$.
We can then calculate\footnote{calculations done in Geogebra}, as  shown in 
Figure~\ref{fig:tree-repeat}(e), that $p$ does not lie in either the shortest path from $p_4$ to $b$ or the shortest path from $p_3$ to $c$. 
Therefore, $p$ cannot be represented by a binary tree with distinct leaves from $\{p_1, p_2, p_3,p_4\}$.
 } 
The geodesic from $p_1$ to $p_2$ crosses axis 3 at point $a$.  The geodesic from $p_1$ to $p_4$ crosses axis 5 at point $c$.  Because $p_1$ is closer to axis 2, point $c$ is closer to the origin than point $a$ (see Figure~\ref{fig:tree-repeat}(d)).  The geodesic paths from $p_3$ to both $a$ and $c$ cross axis 4 and thus can be used to construct points in the cell bounded by axes 4 and 5 (where $p$ lies).  However, as shown in Figure~\ref{fig:tree-repeat}(e), the geodesic path from $p_3$ to $a$ crosses axis 4 further from the origin at point $b$, and therefore point $p$ can only be constructed using points $a$ and $b$, as shown by the binary tree in \revised{Figure~\ref{fig:tree-repeat}(c)}.  Therefore $p$ can only be represented by a binary tree that repeats the leaf $p_1$.
}
\end{proof}



\subsection{Convex Hull Algorithm for a Single-Vertex 2D CAT(0) Complex}
\label{sec:CH-alg}

In this section we prove Theorem~\ref{thm:convex-hull-alg} by reducing the convex hull problem for a single-vertex 2D CAT(0) complex $\cal K$ to linear programming via a polynomial-time reduction in the real RAM model of computation.
Recall that $P$ is the finite set of points whose convex hull we wish to find, and $O$ is the single vertex of the complex.

We will find the convex hull as a union of convex polygons, one for each cell of $\cal K$.
To justify this, observe that the intersection of $\CH(P)$ with cell $C$ of $\cal K$ is a convex polygon (which may be open or closed on parts of its boundary), and $\CH(P)$ is the union of these polygons.

\remove{
Consider one cell $C$ for which $D(C)$ is non-empty.  Let $P(C)$ be the points of $P$ that lie in $C$. 
As shown by the example in Figure~\ref{fig:3pagebook}, 
 in general $D(C)$ is not just the convex hull of $P(C)$. 
Cell $C$ is bounded by two rays $e$ and $f$ incident to $O$.    $D(C)$ intersects each ray in an interval, say the interval $(x_e^{\rm min}, x_e^{\rm min})$ on $e$ and the interval $(x_f^{\rm min}, x_f^{\rm min})$ on $f$ where in both cases ``min'' indicates the point closer to $O$.
When $O$ is in $\CH(P)$ then $x_e^{\rm min} = x_f^{\rm min} = O$.
}

The recursive definition of the convex hull of $P$ involves taking \emph{all} pairs of points $p,q$ in the set and adding \emph{all} points along the unique geodesic from $p$ to $q$.  
We consider using 
a restricted set of points, namely, those that lie in $P$ and  on the edges of $\cal K$.

Define $S_0$ to be the set $P$ together with point $O$ if it is in the convex hull of $P$.  
For $i=1, 2, \ldots$ we recursively define a finite  set of points $S_i$, called the $i^{\rm th}$ \emph{skeleton}, as follows.  Initialize a set $T_i$ to be $S_{i-1}$.  For each pair of points $p,q$ in $S_{i-1}$, take the shortest path, $\sigma$, in $\cal K$ from $p$ to $q$.  Add to $T_i$ all the intersection points of $\sigma$ with edges of the complex.  Observe that $T_i$ is a finite set.  For any edge $e$ of the complex, if $T_i$ contains more than 2 points of $e$, then discard all but the two extreme points,
$S_i^{\max}(e)$ and $S_i^{\min}(e)$.  In case $O$ is in the convex hull, then $S_i^{\min}(e) = O$.
Now define $S_i$ to be $T_i$.

We can augment each skeleton $S_i$ to a larger subset of $\CH(P)$ as follows.  
For each cell $C$ of the complex, let $S_i(C)$ be the points of $S_i$ that lie in the closed cell $C$.  Thus $S_i(C)$ consists of: the points of $P$ that lie in $C$; point $O$ if it lies in $\CH(P)$; and between 0 and 4 points that lie on the two boundary rays of $C$.   Define $H_i(C)$ to be the Euclidean convex hull of $S_i(C)$ in $C$ and define $H_i = \bigcup \{  H_i(C) : C$ a cell of ${\cal K} \}$.  Observe that $H_i \subseteq \CH(P)$.

We justify the restriction to skeletons by showing that each $H_i$ contains all shortest paths between points of $H_{i-1}$:

\begin{theorem}
Let $p$ and $q$ be points of $H_{i-1}$ and let $\sigma$ be the shortest path in $\cal K$ from $p$ to $q$.
Then $\sigma \subseteq H_i$. 
\label{thm:CH-skeleton}
\end{theorem}
\begin{proof}  Suppose that $p \in H_{i-1}(C_p)$ and $q \in  H_{i-1}(C_q)$ for some cells $C_p$ and $C_q$.  
\revised{If $p$ or $q$ lies on a cell boundary, and thus could be assigned to more than one cell, choose the cell assignments to minimize the number of cells traversed by the geodesic between $p$ and $q$.  
In particular, if $p$ and $q$ lie in the same edge, or if one of them is at $O$,
then assign both points to the same cell.}
If $C_p = C_q$ then $\sigma \subseteq H_{i-1}(C_p)$ since $H_{i-1}(C_p)$ is a convex set.  
In this case we are done because $H_{i-1}(C_p) \subseteq H_{i}(C_p)$.

If $\sigma$ goes through $O$, then $\sigma$ consists of two subpaths from $p$ to $O$ and from $O$ to $q$ and these subpaths lie in $H_{i-1}(C_p)$ and $H_{i-1}(C_q)$ respectively, so we are also done in this case. 

Otherwise, suppose that $\sigma$ crosses the sequence of cells $C_p = C_0, C_1, \ldots, C_t = C_q$.
We can unfold these cells in the plane so that $\sigma$ becomes a straight line.  
See Figure~\ref{fig:CH-path}.
For $j=1, \ldots, t$
let $e_j$ be the edge (or ray) of $\cal K$ between $C_{j-1}$ and $C_j$,  and let $p_j$ be the point where $\sigma$ crosses $e_j$.  Let $p_0 = p$ and $p_{t+1}=q$.  It suffices to show that $p_j \in H_i$ for all $j$, since this implies that the subpath of $\sigma$ from $p_{j-1}$ to $p_j$ lies in $H_i(C_{j-1})$.

\begin{figure}[ht]
\centering
\includegraphics[width=.5\linewidth]{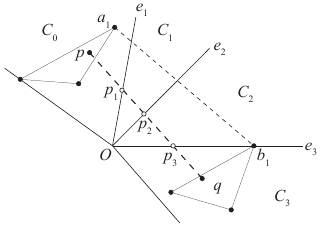}
\caption{Illustration for the proof of Theorem~\ref{thm:CH-skeleton}.
}
\label{fig:CH-path}
\end{figure}


Because $p$ lies in \revised{$H_{i-1}(C_p)$, which is the Euclidean convex hull of $S_{i-1}(C_p)$,} 
there must be points $a_1, a_2, a_3$ in $S_{i-1}(C_p)$ with $p$ inside the triangle $a_1 a_2 a_3$. 
\revised{We note that the triangle may degenerate to a line segment (or even to a point, in case $p \in  S_{i-1}(C_p)$).} 
Similarly, there must be points $b_1, b_2, b_3$ in $S_{i-1}(C_q)$ with $q$ inside the \revised{(possibly degenerate)} triangle $b_1 b_2 b_3$. 

In the planar unfolding of cells $C_0, C_1, \ldots, C_t $, extend  $\sigma$ to a straight line $L$.  
\revised{Note that $L$ does not go through $O$ otherwise $p$ and $q$ would lie in the same cell.} Let the ``near side'' of $L$ be the (closed) side containing $O$, and the ``far side'' be the other (closed) side.  At least one of the $a_k$'s, say $a_1$, must be on the far side of $L$.  Similarly, at
least one of the $b_k$'s, say $b_1$, must be 
on the far side of $L$.   Then the shortest path from $a_1$ to $b_1$ in $\cal K$ unfolds to the straight line segment $a_1 b_1$.  
\revised{(To justify this, note that the angle $a_1Ob_1$ is less than $\pi$ because $a_1$ and $b_1$ are on the far side of $L$.)}
The line segment  $a_1 b_1$ crosses each edge $e_j$ on the far side of $L$.  By the definition of $S_{i}$ each such crossing point, or a point even farther along $e_j$, becomes a point of $S_i$.

We can make the same argument about the near side of $L$.  Suppose points $a_2$ 
and $b_2$ are on the near side of $L$.  
The shortest path from $a_2$ to $b_2$ either goes through $O$, or becomes a straight line segment in the unfolding.  In either case, every edge $e_j$ contains a point of $S_{i}$ that is on the near side of $L$.  

Because point $p_j$ is between two points of $S_{i}$ on $e_j$, thus $p_j \in H_i$.
\end{proof}

%

\begin{corollary}
$\CH(P) = \bigcup_i H_i$.
\label{cor:CH-characterization}
\end{corollary}

Alternatively, we can express $\CH(P)$ in terms of a set $S$ that is the limit of the $S_i$'s.  For each edge $e$ of $\cal K$, 
the sequence of points $S_i^{\min}(e), i=0, 1, \ldots$ is decreasing and bounded below by $O$.  The set  $S_i^{\max}(e), i=0, 1, \ldots$ is increasing and bounded above.  (Note that no point of $\CH(P)$ will be further from $O$ than the furthest point of $P$.)  Thus the limit points $S^{\min}(e)$ and $S^{\max}(e)$ exist.  Define $S$ to be the \revised{union of $P$ and the} set of limit points on all edges $e$.

Similar to the definition of $H_i$ from $S_i$, we define $H(C)$ to be the \revised{Euclidean} convex hull of $S \cap C$ for each cell $C$, and then define $H$ to be  $\bigcup \{  H(C) : C$ a cell of ${\cal K} \}$. 
Certainly $\CH(P) \subseteq H$.  Whether they are equal is the same as the 
question of whether  $\CH(P)$ is closed, \revised{as the following proposition shows.}

\revised{
\begin{proposition} $H$ is the closure of $\CH(P)$. 
\label{prop:closure}
\end{proposition}
\begin{proof} 
$H$ is a closed set containing $\CH(P)$, so $H$ contains the closure of $\CH(P)$.  In the other direction, the closure of $\CH(P)$ contains $S$ and therefore contains $H$.
\end{proof}
}

Our approach to proving Theorem~\ref{thm:convex-hull-alg} is to \revised{compute $H$, the closure of $\CH(P)$, by capturing}
the \revised{limit skeleton $S$} via linear programming.  

\revised{
A more obvious approach to computing $\CH(P)$ would be to compute the sequence of  $S_i$'s.
Such a procedure is finite if and only if $\CH(P)$ is closed:}

\revised{
\begin{proposition}  $\CH(P)$ is closed if and only if 
the sequence $S_0, S_1, \ldots$ is finite (i.e., $S_k = S_{k+1}$ for some $k$).
\end{proposition}
} 
\begin{proof}
If $\CH(P)$ is closed then for each edge $e$ of $\cal K$, the extreme points of $\CH(P)$ on $e$ must enter $S_i$ for some $i$.  The set of such extreme points is finite (there are at most two per edge), so all of them are contained in $S_k$ for some $k$.  Then $S_k = S_{k+1}$. 

In the other direction, if $S_k = S_{k+1}$ then $H_k = H_{k+1}$.  Also 
\revised{$S= S_{k}$ and $H = H_{k}$.  
By Corollary~\ref{cor:CH-characterization}, $\CH(P) = \bigcup_i H_i$.  Then 
 $\bigcup_i H_i = \bigcup_i^k H_i = H_k = H$, so $\CH(P) = H$.
By Proposition~\ref{prop:closure}, 
$H$ is the closure of $\CH(P)$ so this implies that $\CH(P)$ is closed.
}
 \end{proof}


\remove{
\begin{lemma}  
\revised{The following conditions} are equivalent:
\begin{enumerate}
\item $\CH(P)$ is closed,
\item the sequence $S_0, S_1, \ldots$ is finite (i.e., $S_k = S_{k+1}$ for some $k$),
\item $\CH(P)= H$.
\end{enumerate}
\label{lemma:CH-closed} 
\end{lemma}
\begin{proof}
(1 $\rightarrow$ 2) If $\CH(P)$ is closed then for each edge $e$ of $\cal K$, the extreme points of $\CH(P)$ on $e$ must enter $S_i$ for some $i$.  The set of such extreme points is finite (there are at most two per edge), so all of them are contained in $S_k$ for some $k$.  Then $S_k = S_{k+1}$. 

(2 $\rightarrow$ 3) 
If $S_k = S_{k+1}$ then $H_k = H_{k+1}$.  Also \revised{$S= S_{k}$} and \revised{$H = H_{k}$.  
By Corollary~\ref{cor:CH-characterization}, $\CH(P) = \bigcup_i H_i$.  Then 
 $\bigcup_i H_i = \bigcup_i^k H_i = H_k = H$, so $\CH(P) = H$.}

(3 $\rightarrow$ 1) $H$ \revised{is defined as the finite union of $H(C)$, and each $H(C)$ is a Euclidean convex hull and therefore closed.  Thus} $H$ is a closed set.
\end{proof}
}


\revised{
We conjecture that $\CH(P)$ is closed, and thus that we can compute  $\CH(P)$ by computing each $S_i$ until no further changes occur.  This}
would be efficient if there were a good bound on the length of the sequence.   We conjecture that there is such a bound: 

\begin{conjecture}
$S_k = S_{k+1}$ for some $k$ that is polynomially bounded in $n$ and $m$.
\label{conj:iteration}
\end{conjecture}

\remove{
Suppose first that the origin $O$ is inside the convex hull.  The case where the origin is not inside the convex hull will be dealt with later.
To find the convex hull
it suffices to describe the intersection of the convex hull with each cell of the complex.
Consider a cell $C$ bounded by rays $e$ and $f$ incident to $O$.  The part of the convex hull inside $C$ is a convex polygon determined by its vertices which consist of: point $O$; \changed{possibly some or all of} the points of $P$ inside $C$; and
two points $x_e$ and $x_f$ on edges $e$ and $f$, respectively, that are on the boundary of the  convex hull.
Note that we do not make any assumption about whether $x_e$ and $x_f$ are in the convex hull, because we are not making any assumption about whether the convex hull is open or closed.
If we knew the points $x_e$ and $x_f$, then we could easily compute the part of the convex hull inside $C$, since it is simply the Euclidean convex hull of $x_e$, $x_f$, $O$, and the points of $P$ inside $C$.

There is an obvious iterative approach to finding the points $x_\ell$ where the boundary of the convex hull intersects each edge $\ell$ of the complex:
Initialize $H_0$ to be the set $P$.  For $i=1, 2, \ldots$, initialize $H_i$ to $H_{i-1}$ and then take every pair of  points from $H_{i-1}$, compute the shortest path $\sigma$ between them, and add to $H_i$ all the intersection points of $\sigma$ with edges of the complex. 
(When two points lie on the same edge, we can discard the one closer to $O$.)
\rednote{We keep only the extreme points along each edge, where the min is $O$ if $O$ lies in the CH.}

We observe that this process is finite if and only if the convex hull is closed.
As mentioned, our proof of Theorem~\ref{thm:convex-hull-alg} shows that the convex hull is closed and thus this iterative process is finite.
In order for the iterative process to be an efficient algorithm we would need a bound on the number of iterations.
We conjecture that there is a polynomial bound:

\begin{conjecture}
$H_k = H_{k+1}$ for some polynomially bounded $k$.
\label{conj:iteration}
\end{conjecture}
} 

\subsubsection{Combinatorics of the Convex Hull}

For our linear programming approach we need to know whether $O$ is in the convex hull, and we need to identify the edges of $\cal K$ that contain points of the convex hull other than $O$.
We give algorithms for these  using the link graph $G = G_O$.
Let $V$ be the set of vertices of $G$; recall that these correspond to the edges of $\cal K$.
%
%

We begin by showing that we can compute the projection of the convex hull of $P$ on the link graph.  We introduce some notation to make this formal.  For any point \revised{$p \in {\cal K} - \{O\}$}, denote the corresponding point in the link graph by $\lambda(p)$.  We extend this notation to subsets of $\cal K$\revised{---for a set $S \subseteq {\cal K}$,
define $\lambda(S)$  as $\{\lambda(p) : p \in S - \{O\} \}$.
In particular,}
$\lambda(\CH(P))$ denotes the projection of the convex hull of $P$ on the link graph.   

We introduce the \emph{link convex hull},   
$\LCH(P)$, a subset of $G$ defined  recursively as follows: (1)  $\lambda(P)$ is contained in $\LCH(P)$; (2)  For any two points $a,b$ in $\LCH(P) \cap (\lambda(P) \cup V)$, if the shortest path $\sigma_G(a,b)$ has length less than $\pi$ then all the points of $\sigma_G(a,b)$ are contained in $\LCH(P)$. 
In other words, we take the closure of $\lambda(P)$ in the link graph 
under the operation of taking shortest \revised{paths} between pairs of points, but only when the points  are vertices or correspond to points in $P$, and only when the paths have length less than $\pi$. 
Note that $\LCH(P)$ is not a subgraph of $G$ because in general it includes portions of edges.

We will show that $\lambda(\CH(P)) = \LCH(P)$, and that $\LCH(P)$ can be computed in a straight-forward way.  From $\LCH(P)$ we can readily identify the edges of $\cal K$ that contain points of the convex hull other than $O$---these correspond to vertices of the link graph in $\LCH(P)$.  We will also show how to use $\LCH(P)$ to decide if $O$ is in $\CH(P)$.

\begin{lemma} $\lambda(\CH(P)) = \LCH(P)$.
\label{lemma:CH-equiv}
\end{lemma}
\begin{proof}
We first prove $\LCH(P) \subseteq \lambda(\CH(P))$ 
\revised{by structural induction based on the recursive definition of $\LCH(P)$.}
As the base case we have  $\lambda(P) \subseteq \lambda(\CH(P))$.  
For the recursive step 
 let $a,b$ be two points in $\LCH(P) \cap (\lambda(P) \cup V)$ such that $|\sigma_G(a,b)| < \pi$.  Assume 
 \revised{by induction} 
 that $a,b \in \lambda(\CH(P))$, in particular, that $a = \lambda(a')$ and $b = \lambda(b')$ with $a', b' \in \CH(P)$.   Then \revised{$\sigma(a',b') \subseteq  \CH(P)$}.  
Since $|\sigma_G(a,b)| < \pi$, Proposition~\ref{property:path-thru-O} implies that 
$\lambda(\sigma(a',b')) = \sigma_G(a,b)$.  Therefore  $\sigma_G(a,b) \subseteq \lambda(\CH(P))$.

Next we prove $\lambda(\CH(P)) \subseteq \LCH(P)$
\revised{by structural induction based on the recursive definition of $\CH(P)$}.  
As the base case we have $\lambda(P) \subseteq \LCH(P)$.  For the recursive step, let $a$ and $b$ be two points in $\CH(P)$.  Assume 
\revised{by induction}
that $\lambda(a), \lambda(b) \in \LCH(P)$.  We must show that $\lambda(\sigma(a,b)) \subseteq \LCH(P)$.   If $|\sigma_G(\lambda(a), \lambda(b))| \ge \pi$ then by Proposition~\ref{property:path-thru-O} the shortest path from $a$ to $b$ in $\cal K$ goes through $O$.  In this case $\lambda(\sigma(a,b))$ consists only of points $\lambda(a)$ and $\lambda(b)$, which are in $\LCH(P)$ by assumption.  Thus we can restrict attention to the case  where $|\sigma_G(\lambda(a), \lambda(b))| < \pi$.  By Proposition~\ref{property:path-thru-O} this implies that $\lambda(\sigma(a,b)) = \sigma_G(\lambda(a), \lambda(b))$.  Thus we must show that $\sigma_G(\lambda(a), \lambda(b)) \subseteq \LCH(P)$, i.e.~that every point in $\sigma_G(\lambda(a),\lambda(b))$ lies in $\LCH(P)$.   

If $\lambda(a)$ and $\lambda(b)$ lie in $\lambda(P) \cup V$, this is immediate, but otherwise 
we must 
\revised{examine why $\lambda(a)$ and $\lambda(b)$ are in $\LCH(P)$.} 
If $\lambda(a) \in \lambda(P) \cup V$, let $A  = A' = \lambda(a)$.  Otherwise, suppose that $\lambda(a)$ is an internal point of edge $e_a$ of $G$.  
\revised{By the definition of $\LCH(P)$, there must be points $A$ and $A'$ in $\LCH(P) \cap (\lambda(P) \cup V)$
such that $\sigma_G(A,A')$ has length less than $\pi$ and includes $\lambda(a)$.
Choose such $A$ and $A'$ so that $| \sigma_G(A,A') | $ is minimum.  Then}
each of $A$ and $A'$ is either an endpoint of $e_a$ or a point of $\lambda(P)$ internal to $e_a$.   

We now do the same for $b$.   If $\lambda(b) \in \lambda(P) \cup V$, let $B = B' = \lambda(b)$.  Otherwise, suppose that $\lambda(b)$ is an internal point of edge $e_b$ of $G$.  
\revised{By the definition of $\LCH(P)$, there must be points $B$ and $B'$ in $\LCH(P) \cap (\lambda(P) \cup V)$
such that $\sigma_G(B,B')$ has length less than $\pi$ and includes $\lambda(b)$.
Choose such $B$ and $B'$ so that $| \sigma_G(B,B') | $ is minimum.  Then}
each of $B$ and $B'$ is either an endpoint of $e_b$ or a point of $\lambda(P)$ internal to $e_b$.  

\revised{If $e_a \ne e_b$ then there must be a point of $\{A,A' \}$, say $A$, in $\sigma_G(\lambda(a),\lambda(b))$, and there must be a point of $\{B,B' \}$, say $B$, also in $\sigma_G(\lambda(a),\lambda(b))$.  
See Figure~\ref{f:link-graph-paths}.
Then $\sigma_G(A,B)$ has length less than $\pi$ since it is a subpath of $\sigma_G(\lambda(a),\lambda(b))$.
Now we have $\sigma_G(\lambda(a),\lambda(b)) \subseteq \sigma_G(A,A') \cup \sigma_G(A,B) \cup \sigma_G(B,B')$.}
All three of these paths have length less than $\pi$ and have endpoints in $\LCH(P) \cap (\lambda(P) \cup V)$.  Thus the paths lie in $\LCH(P)$, so
$\sigma_G(\lambda(a), \lambda(b)) \subseteq \LCH(P)$.
\revised{If $e_a = e_b$ we may still have points $A$ and $B$ in $\sigma_G(\lambda(a),\lambda(b))$---in which case the previous argument applies.  And otherwise $\sigma_G(\lambda(a),\lambda(b)) \subseteq \sigma_G(A,A')$, and this still gives $\sigma_G(\lambda(a), \lambda(b)) \subseteq \LCH(P)$.}
%
\end{proof}

\begin{figure}[htb]
\centering
\includegraphics[scale=0.55]{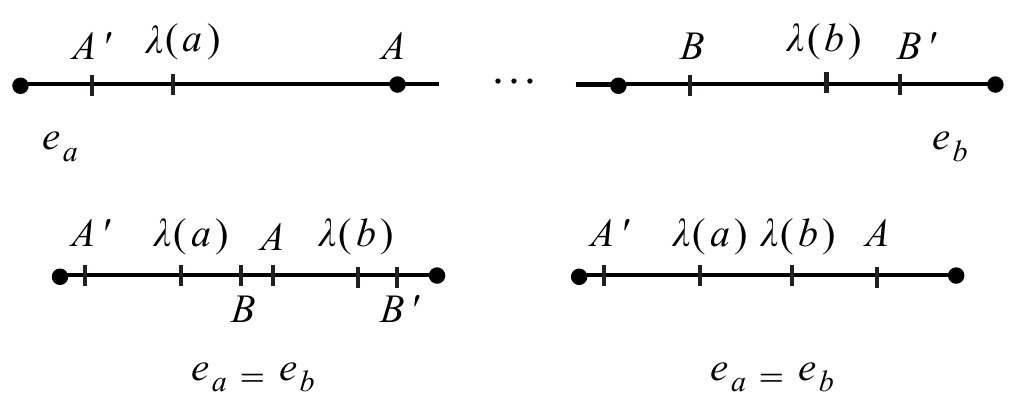}
\caption{Illustration for the proof of Lemma~\ref{lemma:CH-equiv}.  $\sigma_G(\lambda(a),\lambda(b)) \subseteq \sigma_G(A',A) \cup \sigma_G(B,B') \cup \sigma_G(A,B)$.  Top: $e_a \ne e_b$.  Bottom: two cases where $e_a=e_b$.}
\centering
\label{f:link-graph-paths}
\end{figure}

\medskip
\noindent{\bf Computing $\LCH(P)$.}
We can find the points of $\LCH(P)$ in $\lambda(P) \cup V$ as follows.  
We build up a set $A \subseteq \lambda(P) \cup V$.
Initially $A$ will just be the input set of points $\lambda(P)$, and at the end of the algorithm, $A$ will 
be the required set.
We will also keep a subset $F$ of $A$ that represents the ``frontier" that we still need to explore from.  Initially $F = A = \lambda(P)$.

The general step is to remove one element $v$ from $F$.  We then explore the part of the link graph within distance $< \pi$ from $v$.  This can be done by a depth-first search in $O(n+m)$ time.  A search tree to distance $\pi$ will find no cycles, and will therefore find shortest paths from $v$.   We remove the part of the depth-first tree that is beyond the deepest point of $A$ on each branch.  Then for every vertex $w$ of the link graph that is in the depth-first search tree, we check if $w$ is already in $A$---if not then we add $w$ to $A$ and to $F$.

The size of $A$ is bounded by $n+m$ where $n$ is the number of cells in $\cal K$ and $m$ is the size of $P$.
Note that the amount of work we do for one element of $F$ is $O(n+m)$.  Thus the algorithm runs in time $O((n+m)^2)$.

\medskip
\noindent{\bf Testing if $O$ is in the convex hull.}
If $O \in P$ we are done, so we must just deal with the case when $O \not\in P$.


\begin{lemma}
Suppose that $O \not\in P$.  Then $O$ is in $\CH(P)$ if and only if there are two points in $\LCH(P)$ such that the distance between them is at least $\pi$.
\label{lemma:distance-pi}
\end{lemma}
\begin{proof}
By definition, $O$ is in $\CH(P)$ if and only if there are two points $a$ and $b$ in $\CH(P) - O$ such that the shortest path between them goes through $O$.  By Proposition~\ref{property:path-thru-O} this is equivalent to there being two points in $\lambda(\CH(P))$ whose distance in $G$ is at least $\pi$.  By  Lemma~\ref{lemma:CH-equiv} $\lambda(\CH(P)) = \LCH(P)$ which gives the desired result.
\end{proof}

We can test if there are two points of $P$ whose shortest path goes through $O$.  
The remaining case is solved by the following lemma:

\begin{lemma}
\label{lem:origin-in-CH}
Suppose that $O \not\in P$ and no shortest path between two points of $P$ goes through $O$.  Then $O \in \CH(P)$ if and only if $\LCH(P)$ contains a cycle.
\end{lemma}

\begin{proof}
Suppose $\LCH(P)$ contains a cycle.  Because the space is CAT(0), the cycle has length at least $2\pi$, so it must contain two points $a,b$ whose minimum distance in the cycle is $\pi$.  
We claim that the shortest path from $a$ to $b$ in the link graph \Anna{has length} $\pi$---if there were a shorter path then, together with the path \megan{in the cycle} of length $\pi$ we would get a \megan{second} cycle of length $< 2\pi$.
Thus the shortest path from $a$ to $b$ has length $\pi$ and by Lemma~\ref{lemma:distance-pi}, $O$ is in $\CH(P)$.

For the other direction, suppose $\LCH(P)$ does not contain a cycle.  $\LCH(P)$ is connected, so it must be a tree.
We claim that the leaves of the tree are points of $\lambda(P)$:
If $d$ is a point of $\LCH(P)$ that is not in $\lambda(P)$, then $d$ 
was placed in $\LCH(P)$ because it is the internal point of some shortest path between points in $\LCH(P)$,
so $d$ has degree at least 2 in  $\LCH(P)$, so it is not a leaf.

Let $a$ and $b$ be points of $\LCH(P)$.  The path between $a$ and $b$ in the tree $\LCH(P)$ can be extended to a path between leaves of $\LCH(P)$, and, since the leaves are in $\lambda(P)$, this path has length less than $\pi$.  
Thus by Lemma~\ref{lemma:distance-pi}, $O$ is not in the convex hull.
\end{proof}

\remove{  
First consider the problem of testing whether $O$ is in the convex hull.
If $O \in P$ we are done, so assume that $O \not\in P$.
Then $O$ is in $\CH(P)$ if and only if there is a path between two points in $\CH(P) - O$ that goes through $O$, and Proposition~\ref{property:path-thru-O} gives the conditions for a path to go through $O$.  Thus:

\begin{observation}
$O$ is in $\CH(P)$ if and only if there are two points $a$ and $b$ in $\CH(P) - O$ such that the distance between $a$ and $b$ in the link graph is at least $\pi$.
\label{obs:distance-pi}
\end{observation}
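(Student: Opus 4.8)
The plan is to reduce the statement to a purely combinatorial claim about $\CH(P)$ and then invoke Property~\ref{property:path-thru-O}. Concretely, I would first prove that $O \in \CH(P)$ if and only if there exist $a,b \in \CH(P) \setminus \{O\}$ whose geodesic $\sigma(a,b)$ passes through $O$. Property~\ref{property:path-thru-O} says that for points $a,b \ne O$ the geodesic $\sigma(a,b)$ passes through $O$ exactly when $|\sigma_G(a,b)| \ge \pi$, i.e.\ when the link-graph distance between $a$ and $b$ is at least $\pi$. So once the combinatorial claim is established, the observation follows immediately by substituting the condition ``$|\sigma_G(a,b)| \ge \pi$'' for ``$\sigma(a,b)$ passes through $O$.'' Throughout I use the standing assumption $O \notin P$.

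The backward direction is immediate. If $a,b \in \CH(P) \setminus \{O\}$ have link-graph distance at least $\pi$, then by Property~\ref{property:path-thru-O} the geodesic $\sigma(a,b)$ passes through $O$; since $\CH(P)$ is closed under taking geodesics we have $\sigma(a,b) \subseteq \CH(P)$, and hence $O \in \CH(P)$.

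For the forward direction I would argue the contrapositive and lean on the \emph{minimality} in the definition of the convex hull, rather than on any iterative build-up of $\CH(P)$. Suppose no pair $a,b \in \CH(P) \setminus \{O\}$ has link-graph distance $\ge \pi$; equivalently, by Property~\ref{property:path-thru-O}, no geodesic between two points of $\CH(P) \setminus \{O\}$ passes through $O$. Set $S := \CH(P) \setminus \{O\}$. Then $P \subseteq S$ because $O \notin P$, and $S$ is closed under geodesics: for any $x,y \in S$ the geodesic $\sigma(x,y)$ lies in $\CH(P)$ and, by the assumption, avoids $O$, so $\sigma(x,y) \subseteq S$. Since $\CH(P)$ is the minimal geodesically closed set containing $P$, we conclude $\CH(P) \subseteq S = \CH(P) \setminus \{O\}$, which forces $O \notin \CH(P)$, as required.

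The main obstacle is precisely this forward direction: one must rule out $O$ entering the hull for some indirect reason, i.e.\ guarantee that if $O \in \CH(P)$ then $O$ genuinely lies on a geodesic between two hull points \emph{both} distinct from $O$. The minimality formulation handles this in a single line by exhibiting $\CH(P)\setminus\{O\}$ as a smaller geodesically closed superset of $P$, and it sidesteps any explicit construction of the hull — which is attractive given that topological closedness of such hulls is itself delicate (cf.~\cite[Note~6.1.3.1]{Berger}). An equivalent route tracks the iterated closure $P = C_0 \subseteq C_1 \subseteq \cdots$, where $C_{i+1}$ adjoins all geodesics between pairs of $C_i$; this union is already geodesically closed since the operation is binary, so $O \in \CH(P)$ would mean $O$ first appears in some $C_{i+1}$ via a geodesic between two points of $C_i \not\ni O$. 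Either way, the final substitution through Property~\ref{property:path-thru-O} is routine.
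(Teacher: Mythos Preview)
Your argument is correct and follows the same route as the paper: reduce the observation to the claim that $O\in\CH(P)$ iff some geodesic between two points of $\CH(P)\setminus\{O\}$ passes through $O$, and then invoke Property~\ref{property:path-thru-O}. The paper simply asserts this reduction in one line, whereas you supply the (clean) minimality argument for the forward direction; your treatment is strictly more detailed but not a different approach.
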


We test if $O$ is in $\CH(P)$ as follows.
If there are two points $p,q \in P$ whose distance in the link graph is at least $\pi$, then $O$ is in $\CH(P)$.
Otherwise, let $G[P]$ be the subset of the link graph that is the union of all shortest paths $\sigma_G(p,q), \ p,q \in P$.  Note that $G[P]$ is not a subgraph of $G$ in the usual sense because in general it includes portions of edges.
Every point of $G[P]$ is the image of some point in $\CH(P)$.  By the following lemma, it suffices to test if $G[P]$ has a cycle.

\begin{lemma}
\label{lem:origin-in-CH}
Suppose that no path between two points of $P$ goes through $O$.  Then $O \in \CH(P)$ if and only if $G[P]$ contains a cycle.
\end{lemma}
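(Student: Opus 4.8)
My plan is to prove the two implications separately, in both cases reducing to Observation~\ref{obs:distance-pi} and to the link condition characterizing CAT(0) complexes (every simple cycle of the link graph $G$ has length at least $2\pi$). I would first record two facts stated just before the lemma and used throughout: $G[P]$ is connected (any two of its points are joined inside $G[P]$ by concatenating pieces of pairwise shortest paths through the $\phi$-images of $P$), and every point of $G[P]$ is the image of some point of $\CH(P)$. I also note that the hypothesis ``no geodesic between two points of $P$ passes through $O$'' translates, by Property~\ref{property:path-thru-O}, into $d_G(p,q) < \pi$ for all $p,q \in P$; this will be needed only for the ``only if'' direction.

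\textbf{Cycle $\Rightarrow O \in \CH(P)$.} Given a simple cycle $\gamma \subseteq G[P] \subseteq G$, the link condition forces $|\gamma| \ge 2\pi$, so I can choose a sub-arc $\gamma_1$ of length exactly $\pi$ with endpoints $a,b$; then $d_G(a,b) \le \pi$. The step I would argue is that in fact $d_G(a,b) = \pi$: if instead a $G$-geodesic $\tau$ of length $\delta < \pi$ joined $a$ to $b$, then $\tau \ne \gamma_1$, so $\tau \cup \gamma_1$ is a union of two distinct paths between $a$ and $b$ and therefore contains an embedded cycle of length at most $\delta + \pi < 2\pi$, contradicting the link condition. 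Since $a,b$ are images of points $a',b' \in \CH(P)-O$ with link-graph distance $\pi \ge \pi$, Observation~\ref{obs:distance-pi} yields $O \in \CH(P)$.

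\textbf{$O \in \CH(P) \Rightarrow$ cycle (contrapositive).} Assume $G[P]$ is acyclic; being connected it is a finite metric tree $T$, and I would show $O \notin \CH(P)$. The heart of the argument is that $T$ is \emph{convex} in $G$ with $G$-diameter $< \pi$. Because $G[P]$ contains \emph{every} pairwise shortest path, the tree and ambient metrics agree on $\phi(P)$: $d_T(p,q) \le |\sigma_G(p,q)| = d_G(p,q) < \pi$. The tree diameter is attained between two leaves of $T$, and every leaf lies in $\phi(P)$ (an interior point of a geodesic has degree $\ge 2$), so the diameter of $T$ is $< \pi$. For arbitrary $u,v \in T$ we then have $d_T(u,v) < \pi$, and if some $G$-path were strictly shorter than the tree path, its union with the tree path would contain an embedded cycle of length $< 2\,d_T(u,v) < 2\pi$, again contradicting the link condition; hence $d_G = d_T < \pi$ on $T$ and $G$-geodesics between points of $T$ stay in $T$. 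Finally I would set $\Omega = \{x \in \mathcal{K}-O : \phi(x)\in T\}$: for $x,y\in\Omega$ we have $d_G(\phi x,\phi y)<\pi$, so by Property~\ref{property:path-thru-O} the geodesic $\sigma(x,y)$ avoids $O$ and maps into $T$, giving $\sigma(x,y)\subseteq\Omega$. Thus $\Omega$ is geodesically closed and contains $P$, so $\CH(P)\subseteq\Omega$; since $O\notin\Omega$, we get $O\notin\CH(P)$.

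\textbf{Main obstacle.} I expect the delicate part to be the second direction, where mere acyclicity of $G[P]$ must be upgraded to genuine convexity in $G$. The key leverage is that $G[P]$ already contains all pairwise shortest paths, so any shortcut in $G$ between two of its points would itself close up a cycle inside $G[P]$; marrying this with the ``two distinct paths contain an embedded cycle'' extraction and the $2\pi$ systolic bound is what forces $T$ to be convex of diameter $< \pi$. The fiddly points to verify carefully are the embedded-cycle extraction and the claim that the tree diameter is realized between leaves that all lie in $\phi(P)$.
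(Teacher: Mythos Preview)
Your proof is correct and follows essentially the same route as the paper: in both directions you exploit the $2\pi$ link condition together with Observation~\ref{obs:distance-pi}, and for the contrapositive you use that $G[P]$ is a connected tree whose leaves lie in $\phi(P)$, hence of diameter $<\pi$, and then argue it is closed under $G$-geodesics. The only visible difference is that you make the concluding step explicit by building the geodesically closed set $\Omega=\phi^{-1}(T)$ containing $P$, whereas the paper compresses this into the sentence ``$G[P]$ is closed under taking shortest paths \ldots\ therefore $O$ is not in the convex hull''; your version is a slightly more careful unpacking of the same idea.
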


\begin{proof}
Suppose $G[P]$ contains a cycle.  Because the space is CAT(0), the cycle has length at least $2\pi$, so it must contain two points $a,b$ whose minimum distance in the cycle is $\pi$.  Then the length of the shortest path between $a$ and $b$ in the link graph must be $\pi$ (otherwise we would have a cycle of length less than \note{$2\pi$}).  By Observation~\ref{obs:distance-pi}, $O$ is in $\CH(P)$.

For the other direction, suppose $G[P]$ does not contain a cycle.  $G[P]$ is connected, so it must be a tree.
We claim that the leaves of the tree are points of $P$:
If $d$ is a point of $G[P]$ that is not in $P$ then $d$ is an internal point of some path $\sigma_G(p,q), \ p,q \in P$, so $d$ has degree at least 2 in  $G[P]$, so it is not a leaf.

Let $a$ and $b$ be points of $G[P]$.  The path between $a$ and $b$ in $G[P]$ can be extended to a path between leaves of $G[P]$, and, since the leaves are in $P$, this path has length less than $\pi$.  Thus there is a path between $a$ and $b$ in $G[P]$ of length less than $\pi$.  This must be the shortest path between $a$ and $b$ in $G$.   Thus, $G[P]$ is closed under taking shortest paths.  We will never get two points at distance $\pi$ or more.  Therefore $O$ is not in the convex hull.
\end{proof}

The algorithm to test if $O$ is in CH$(P)$ has a straight-forward implementation that runs in time $O(m(n+m))$ where $n$ is the number of cells in $\cal K$ and $m$ is the size of $P$.
For each point $p \in P$, we do a depth-first search in the link graph to find paths to all the points of $P$ within distance $\pi$ of $p$.  Note that a search to distance $\pi$ will not find any cycles in the link graph, and therefore finds shortest paths from $p$.  If some point $q \in P$ is not reached then we know that $O$ is in CH$(p)$. Otherwise, we construct $G[P]$ as the union of all these depth-first search trees, and test if $G[P]$ contains a cycle.  Each depth-first search takes time $O(n+m)$ so the total time is $O(m(n+m))$.  Constructing and exploring $G[P]$ takes linear time.

\medskip
\noindent{\bf Finding edges of $\cal K$ intersecting the convex hull.}

We now show how to identify the edges of the complex $\cal K$ that contain points of CH$(P) - O$.  Equivalently, we will identify the vertices of the link graph that correspond to points in the convex hull.  
Our method applies whether or not $O$ is inside the convex hull.

Let $V$ be the set of vertices of the link graph.  Then $V$ corresponds to the set of edges of $\cal K$.  Let $B$ be the vertices of the link graph that correspond to points in the convex hull of $P$.  We want to find $B$.
The convex hull of $P$ in the complex $\cal K$ is defined to be the closure of $P$ under shortest paths.  
We observe that it suffices to consider shortest paths between points that are in $P$ or lie on the edges of $\cal K$. \note{Should we prove this?}

In the link graph this corresponds to taking shortest paths between points that lie in $P \cup V$.  Furthermore, by Proposition~\ref{property:path-thru-O} we should only take shortest paths in the link graph that have length less than $\pi$.

Therefore $P \cup B$ is the closure of the set $P$ in the set $P \cup V$ under the operation of taking shortest paths of length less than $\pi$ in the link graph. 

Our algorithm finds this closure by building up a set $S \subseteq P \cup V$.
Initially $S$ will just be the input set of points $P$, and at the end of the algorithm, $S$ will 
be the required set $ P \cup B$.
We will also keep a subset $F$ of $S$ that represents the ``frontier" that we still need to explore from.  Initially $F = S = P$.

The general step is to remove one element $v$ from $F$.  We then explore the part of the link graph within distance $\pi$ from $v$.  This can be done by a depth-first search in $O(n+m)$ time.  As noted above, a search tree to distance $\pi$ will find no cycles, and will therefore find shortest paths from $v$.   We remove the part of the depth-first tree that is beyond the deepest point of $S$ on each branch.  Then for every vertex $w$ of the link graph that is in the depth-first search tree, we check if $w$ is already in $S$---if not then we add $w$ to $S$ and to $F$.

The size of $S$ is bounded by $n+m$ where $n$ is the number of cells in $\cal K$ and $m$ is the size of $P$.
Note that the amount of work we do for one element of $F$ is $O(n+m)$.  Thus the algorithm runs in time $O((n+m)^2)$.

The algorithm is correct because the final set $S$ is closed under taking shortest paths of length less than $\pi$ in the link graph.  Each time we add a point to $S$ we explicitly check all points of $V$ within distance $\pi$ from the point.
} 

\subsubsection{Finding the Convex Hull via Linear Programming}
\label{sec:LP-alg}

In this section we give 
a polynomial-time algorithm to construct a linear program to find the convex hull of a finite point set $P$ in a 
2D CAT(0) complex $\cal K$ with a single vertex $O$.

Recall the skeletons, $S_i$, from Section~\ref{sec:CH-alg}, which give an iterative way of computing the extreme points of the convex hull along each edge of the complex.  As $i$ increases, the extreme points expand outwards.
The idea of our linear program is to have two variables for each edge $e$ that represent the 
two extreme points 
of $S_i$ on $e$.  The linear constraints will express the closure-under-shortest-paths property that was used to construct $S_{i+1}$ from $S_i$.  
Thus feasible solutions to the linear program will represent limit points $S$ of the $S_i$'s.
From these points, we can compute $\CH(P)$, as justified by Corollary~\ref{cor:CH-characterization}.
Furthermore, the computation of $\CH(P)$ from the set $S$ is efficient since it simply involves computing the Euclidean convex hull inside each cell $C$, and Euclidean (planar) convex hulls can be computed in polynomial time~\cite{ChanCH}.
 We now fill in the details of this plan.

%

\remove{
To find the convex hull
it suffices to describe the intersection of the convex hull with each cell of the complex.
Consider a cell $C$ bounded by rays $e$ and $f$ incident to $O$.  The part of the convex hull inside $C$ is a convex polygon determined by its vertices which consist of: point $O$; \changed{possibly some or all of} the points of $P$ inside $C$; and
two points $x_e$ and $x_f$ on edges $e$ and $f$, respectively, that are on the boundary of the  convex hull.
Note that we do not make any assumption about whether $x_e$ and $x_f$ are in the convex hull, because we are not making any assumption about whether the convex hull is open or closed.
If we knew the points $x_e$ and $x_f$, then we could easily compute the part of the convex hull inside $C$, since it is simply the Euclidean convex hull of $x_e$, $x_f$, $O$, and the points of $P$ inside $C$.
}

\remove{  
There is an obvious iterative approach to finding the points $x_\ell$ where the boundary of the convex hull intersects each edge $\ell$ of the complex:
Initialize $H_0$ to be the set $P$.  For $i=1, 2, \ldots$, initialize $H_i$ to $H_{i-1}$ and then take every pair of  points from $H_{i-1}$, compute the shortest path $\sigma$ between them, and add to $H_i$ all the intersection points of $\sigma$ with edges of the complex. 
(When two points lie on the same edge, we can discard the one closer to $O$.)
We observe that this process is finite if and only if the convex hull is closed.
As mentioned, our proof of Theorem~\ref{thm:convex-hull-alg} shows that the convex hull is closed and thus this iterative process is finite.
In order for the iterative process to be an efficient algorithm we would need a bound on the number of iterations.
We conjecture that there is a polynomial bound:

\begin{conjecture}
$H_k = H_{k+1}$ for some polynomially bounded $k$.
\label{conj:iteration}
\end{conjecture}

Lacking a proof of the conjecture, we
will use linear programming to find the points where the boundary of the convex hull crosses the edges of the complex.
}

Our algorithm and our notation will be simpler
if the points of $P$ all lie on edges of the complex.  
In particular, the convex hull inside a cell $C$, if non-empty,  will be a triangle or quadrilateral (depending on whether $O$ is in the convex hull) since no points of $P$ will be internal to $C$. 
We can achieve this by constructing a new edge $e_p$ from $O$ through each point $p \in P$ (except the point $O$). 
Each such edge divides a cell in two.
Point $p$ is then represented in local coordinates by the distance along edge $e_p$ from $O$ to point $p$. We will use $p$ to refer both to the point and to its local coordinate, i.e.,~its distance from $O$ (in the same way that we refer to a point on the real line as a number).

Let $B$ be the set of edges of the complex that contain points other than $O$ inside the convex hull.  
These correspond to points of $\LCH(P)$ in $V$ and can be found as described in the previous section.


For clarity of presentation we will separate into two cases depending on whether the origin is inside the convex hull.

\medskip
\noindent
{\bf When the origin $O$ is inside the convex hull.}
For each $\ell \in B$ our linear program will have 
a variable $x_\ell \in {\mathbb R}$ representing the 
distance from $O$ to the
point on $\ell$ that is on the boundary of the convex hull.  
Then $x_\ell > 0$.  
Note that, like $p$,  $x_\ell$ refers both to a distance and a point.

Our inequalities are of two types.  First, for any point 
$p \in P$
lying on an edge $\ell \in B$ 
 we include the inequality:  
 
\begin{align}
x_\ell \ge p
\label{eq:LP1}
\end{align}

Inequalities of the second type will be determined by pairs of elements from the set $B$.
For any two edges $e$ and $f$ of $B$, 
such that the angle between $e$ and $f$ is $< \pi$,
consider the shortest path $\sigma$  between the corresponding points $x_e$ and $x_f$.
We will add a constraint for each edge $\ell$ of $B$ crossed by $\sigma$, expressing the fact that the convex hull includes the point where $\sigma$ crosses $\ell$.  The constraint has the form $x_\ell \ge t$ where $t$ is 
the distance from $O$ to 
the point where $\sigma$ crosses $\ell$.
We will use $t$ to refer to both the distance and to the point.
We can express $t$ in terms of known quantities.  The set-up is illustrated in Figure~\ref{f:general_constraints}. 
Note that there may be several polyhedral cells separating $e$ and $f$, but we can 
\revised{unfold them in the plane} to form a triangle.



\begin{figure}[htb]
\centering
\includegraphics[scale=0.45]{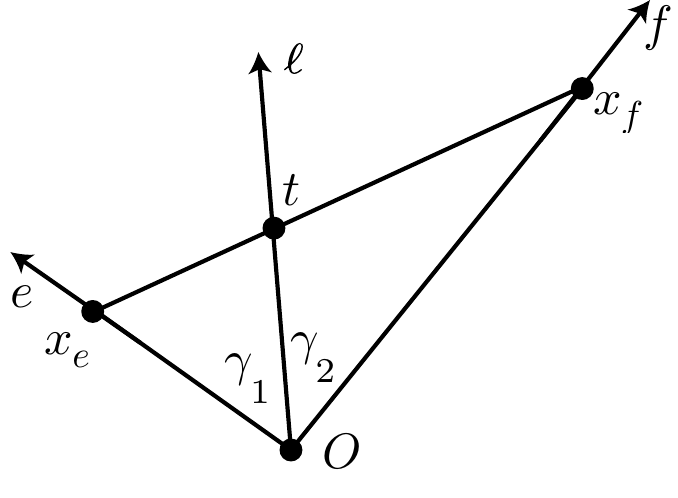}
\caption{Expressing the intersection point $t$ in terms of known quantities.}
\centering
\label{f:general_constraints}
\end{figure}

Let $\gamma_1$ be the angle between $e$ and $\ell$, and let $\gamma_2$ be the angle between $\ell$ and $f$.
Then we have:   

\begin{claim}
\begin{align*}
t = \frac{x_e x_f\sin(\gamma_1 + \gamma_2)}{x_e \sin \gamma_1 + x_f \sin \gamma_2}
\end{align*}
\end{claim}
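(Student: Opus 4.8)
The plan is to reduce the claim to elementary Euclidean trigonometry by developing the cells crossed by $\sigma$ into the plane. Because the angle between $e$ and $f$ is $<\pi$, Property~\ref{property:path-thru-O} guarantees that $\sigma$ does not pass through $O$. Hence, as described in the paragraph preceding the claim, we may lay the consecutive cells traversed by $\sigma$ side by side in the Euclidean plane; this development is an isometry on each cell, so $\sigma$ becomes a straight line segment from $x_e$ to $x_f$, and $O, x_e, x_f$ become the vertices of a genuine Euclidean triangle. In this developed picture the image of edge $\ell$ is a ray from $O$ making angle $\gamma_1$ with $e$ and angle $\gamma_2$ with $f$, so the apex angle at $O$ is $\gamma_1+\gamma_2$. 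Since $\sigma$ crosses $\ell$, the point $t$ is precisely the intersection of the ray $\ell$ with the segment $x_e x_f$, and in particular $t$ lies between $x_e$ and $x_f$.

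With this set-up the identity follows from an area decomposition. The segment $Ot$ splits triangle $O x_e x_f$ into the two triangles $O x_e t$ and $O t x_f$, whose areas add up to that of $O x_e x_f$. Computing each area by the formula $\tfrac12(\text{side})(\text{side})\sin(\text{included angle})$, using the distances $x_e$, $x_f$, $t$ from $O$ and the included angles $\gamma_1$, $\gamma_2$, $\gamma_1+\gamma_2$, gives
\begin{align*}
\tfrac12\, x_e x_f \sin(\gamma_1+\gamma_2)
= \tfrac12\, x_e\, t \sin\gamma_1 + \tfrac12\, t\, x_f \sin\gamma_2 .
\end{align*}
Cancelling the factor $\tfrac12$ and solving the resulting linear equation for $t$ yields $t\,(x_e\sin\gamma_1 + x_f\sin\gamma_2) = x_e x_f \sin(\gamma_1+\gamma_2)$, which is exactly the stated formula. (Alternatively one could apply the law of sines in the two sub-triangles, but the area identity is the cleanest route.)

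The only non-routine point, and hence the main obstacle, is justifying the development step: one must be sure the intervening cells unfold isometrically into a single planar triangle so that $\sigma$ is straight and the local angles along $\ell$ accumulate to place $\ell$ at angle $\gamma_1$ from $e$ and $\gamma_2$ from $f$. This is precisely the converse direction of Property~\ref{property:path-thru-O}, applied to the subpaths of $\sigma$ on either side of $\ell$, so it is already available to us. The remaining computation is purely mechanical, and the denominator is guaranteed positive since $x_e, x_f > 0$ and $\gamma_1, \gamma_2 \in (0,\pi)$ with $\gamma_1+\gamma_2 < \pi$; the degenerate cases $\gamma_1 = 0$ or $\gamma_2 = 0$ (where $\ell$ coincides with $e$ or $f$) consistently return $t = x_e$ or $t = x_f$.
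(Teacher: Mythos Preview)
Your proof is correct. The paper takes a different computational route: it applies the law of sines repeatedly, introducing the auxiliary angles $\alpha$ and $\beta$ of the sub-triangle $O x_e t$ and the side length $c = |x_e x_f|$, deriving $t = x_e x_f \sin(\gamma_1+\gamma_2)/(c\sin\alpha)$ and then separately showing $c\sin\alpha = x_e\sin\gamma_1 + x_f\sin\gamma_2$ via the two pieces $c_1, c_2$ of $c$. Your area decomposition bypasses all of these auxiliary quantities and reaches the identity in a single line, which is cleaner; the paper's approach, on the other hand, stays within the sine law and so avoids any appeal to the area formula. Both rely on exactly the same geometric reduction---unfolding the traversed cells into a planar triangle---which you justify carefully via Property~\ref{property:path-thru-O}, whereas the paper simply asserts it. Your parenthetical remark that one could alternatively use the law of sines is, in fact, precisely what the paper does.
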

\begin{proof}
\revised{Let $A$ denote the area of a triangle.  Observe that $A(O,x_e,t) + A(O,t,x_f) = A(O,x_e,x_f)$. 
Applying the sine law for area of a triangle, we obtain 
\begin{align*}
\frac{1}{2}x_e t \sin \gamma_1 +  \frac{1}{2}t x_f \sin \gamma_2  =  \frac{1}{2}x_e x_f \sin (\gamma_1 + \gamma_2).
\end{align*}
Rearranging gives the required formula for $t$.
}
\end{proof}

\remove{ 
We apply the sine law several times.
Let  $\alpha$ and $\beta$ be the angles opposite sides $Ox_e$ and $Ot$, respectively, in the triangle $O x_e t$, and let $c$ be the distance from $x_e$ to $x_f$.
By the sine law
\begin{align*}
t =  \frac{x_e \sin \beta}{\sin \alpha}, \quad \sin \beta =  \frac{x_f \sin(\gamma_1 + \gamma_2)}{c}. 
\end{align*}
Combining these we get
\begin{align*}
t = \frac{x_e x_f \sin(\gamma_1 + \gamma_2)}{c \sin \alpha}.
\end{align*}
Let $c_1$ and $c_2$ be the parts of $c$  
opposite $\gamma_1$ and $\gamma_2$, respectively.
By the sine law, we also have $c_1 \sin \alpha  = x_e \sin \gamma_1$ and $c_2 \sin(\pi - \alpha) = c_2 \sin \alpha = x_f \sin \gamma_2 $.  So $c \sin \alpha  = (c_1 + c_2) \sin \alpha = x_e \sin \gamma_1  + x_f \sin \gamma_2$.  Substituting into the above expression for $t$ gives the desired formula.
}

Using the above claim, the constraint $x_\ell \ge t$ becomes
\begin{align*}
x_\ell \ge \frac{x_e x_f \sin(\gamma_1 + \gamma_2)}{x_e \sin \gamma_1 + x_f \sin \gamma_2}
\end{align*}

This is not a linear inequality, but substituting 
$y_\ell = {1 \over x_\ell} $ yields 

\begin{align}
y_\ell \le  y_f \frac{\sin \gamma_1}{\sin(\gamma_1 + \gamma_2)}  + y_e \frac{\sin \gamma_2}{\sin(\gamma_1 +  \gamma_2)}
\label{eq:LP}
\end{align}

Since the $\gamma_i$'s are constant, this is a linear inequality.  

The inequalities (\ref{eq:LP1}) for point $p \in P$ on edge $\ell$ become
\begin{align}
y_\ell \le  \frac{1}{p}
\label{eq:LP1'}
\end{align}

\begin{lemma} 
Maximizing $\sum y_\ell$ subject to the inequalities (\ref{eq:LP}), (\ref{eq:LP1'}) and $y_\ell \ge 0$ gives the 
\revised{closure of the}
convex hull of $P$, \note{i.e.,~gives the points $x_\ell = \frac{1}{y_\ell}$ where the \revised{closure of the} convex hull intersects each edge $\ell \in B$}. 
\end{lemma}
\begin{proof}  
\note{
Recall the definition of the limit set $S$ from the beginning of Section~\ref{sec:CH-alg}.
Note that the points  $S$ provide 
a feasible solution  to the linear system, because 
they satisfy 
the constraints $x_\ell \ge p$ and $x_\ell \ge t$ which we used to construct our inequalities. 
Denote this solution by $y^\CH_\ell, \ell \in B$. 

Next, note that any other solution $y'_\ell, \ell \in B$, has $y'_\ell \le y^\CH_\ell$
 for all $ \ell \in B$, \note{i.e.~$x'_\ell \ge x^\CH_\ell$}---in other words, any other solution includes
 $S$. 
 This is because the points of $P$ are included, and the inequalities of our linear system enforce closure under shortest paths. 
Therefore, the solution that maximizes $\sum y_\ell$ gives the points $S$, and thus 
\revised{the set $H$, which is the closure of $\CH(P)$.}
} 
\end{proof} 






This completes the reduction to linear programming when $O$ is in the convex hull.


\note{
\medskip
\noindent
{\bf When the origin $O$ is not inside the convex hull.}
}
In this case, by Lemma~\ref{lem:origin-in-CH}, the subgraph of the link graph corresponding to the convex hull is a tree, and it seems even more plausible that an efficient iterative approach can be used to find the convex hull.  However, we leave this as an open question, and give a linear programming approach like the one above.  

For each $\ell \in B$ we will make two variables, $x_\ell^{\rm min}$ and $x_\ell^{\rm max}$ in ${\mathbb R}$ representing the minimum and maximum points on $\ell$ that are on the boundary of the convex hull. 
To find the 
\revised{closure of the} 
convex hull, it suffices to find the values of these variables.

We want to ensure that $x_\ell^{\rm max}$ is larger than 
any point of $P$ and 
\changed{any point at which a geodesic between $x_e^{\rm max}$ and $x_f^{\rm max}$, for any $e, f \in B$, crosses edge $\ell$.
Similarly, we want to ensure }
that $x_\ell^{\rm min}$ is smaller than any point of $P$ and
\changed{any point at which a geodesic between $x_e^{\rm min}$ and $x_f^{\rm min}$, for any $e, f \in B$, crosses edge $\ell$.}
Using the same notation and set-up as above with edges $e$ and $f$, and using the inverse variables
$y_\ell^{\rm min}= {1 \over x_\ell^{\rm min}}$ and  
$y_\ell^{\rm max}= {1 \over x_\ell^{\rm max}}$ 
the inequalities corresponding to (\ref{eq:LP}) are: 

\begin{align*}
y^{\rm max}_\ell \le  y^{\rm max}_f \frac{\sin \gamma_1}{\sin(\gamma_1 + \gamma_2)}  + y^{\rm max}_e \frac{\sin \gamma_2}{\sin(\gamma_1 +  \gamma_2)}\\
y^{\rm min}_\ell \ge  y^{\rm min}_f \frac{\sin \gamma_1}{\sin(\gamma_1 + \gamma_2)}  + y^{\rm min}_e \frac{\sin \gamma_2}{\sin(\gamma_1 +  \gamma_2)}
\end{align*}

The inequalities corresponding to (\ref{eq:LP1'}) are:
\begin{align*}
y^{\rm max}_\ell \le  \frac{1}{p} \le y^{\rm min}_\ell
\end{align*}

If we maximize the objective function $\sum (y^{\rm max}_\ell - y^{\rm min}_\ell)$ subject to the above inequalities and $y^{\rm min}_\ell \ge y^{\rm max}_\ell \ge 0$ then, by a similar argument to the one above, this gives \revised{the closure of} the convex hull of $P$.
Thus we have reduced the problem of finding the convex hull to linear programming.  


\medskip
\noindent
{\bf Running time.}
We will concentrate on the case where $O$ is in the convex hull---the other case is similar.
Recall that $n$ is the number of cells in the complex and $m$ is the number of points in $P$.
At the beginning of the algorithm we test if $O$ is in the convex hull, and find the set $B$ of edges of the complex that contain points of the convex hull other than $O$.  This takes $O((n+m)^2)$ time as discussed in the previous section.   The set $B$ has size $O(m+n)$ because it includes an edge of the complex through every point of $P$.   The linear program has $O(n+m)$ variables.  The number of inequalities is $O((n+m)^3)$ since we consider each pair of elements, $e, f$  from $B$, and add an inequality for each edge of the complex crossed by the shortest path from $e$ to $f$. 
We can construct the linear program in polynomial time assuming a real RAM model of computation.
\note{We need more than arithmetic operations in our real RAM, but what we need depends on exactly how the input is given.
The algorithm, as written above, assumes that the input triangles are given in terms of angles.  In that case, our real RAM must be able to compute sines of those angles.  An alternative is that the input is given to us with each triangle expressed in a local coordinate system.  In that case, the sines can be computed from the local coordinates so long as our real RAM includes the square root operation.
}  

\medskip
\noindent{\bf The special case of a cube complex}

In the special case of a cube complex, it is more natural to give each input point using $x$- and $y$-coordinates relative to the quadrant containing the point.  In this case, we claim that all the low-level computations described above can be performed in polynomial time when measuring bit complexity.  
We will not construct new edges through points of $P$ since that introduces new angles.  
Our variables are $x_e$ for $e$ an edge of the complex, and we add constraints for shortest paths between  
pairs of points in $P \cup \{x_e\}$.

We give a few more details for the computation of $t$ in Figure~\ref{f:general_constraints} in this case.  
Figure~\ref{f:general_constraints} shows a path between $x_e$ and $x_f$ crossing an edge $\ell$ at $t$.  In the current setting, $\ell$ will be an edge of $\cal K$, and $x_e$ and $x_f$ may be variables or input points.  
If $x_e$ and $x_f$ 
correspond to input points, then $t$ is just the point where a line between two known points crosses an axis.  
Then the right-hand-side of the corresponding constraint~(\ref{eq:LP}) is a constant whose bit complexity is polynomially bounded in terms of the input bit complexity.  
The case when both $x_e$ and $x_f$ are variables cannot arise because they would be distance $\pi$ apart in the link graph.  Thus the only case we must take care of is when $x_e$ corresponds to an input point and $x_f$ is a variable (or vice versa).  The situation is shown in Figure~\ref{f:cube_constraints}, with $x_e$ being an input point $p$ with coordinates $(h,v)$ as shown.  Then $t=v x_f / ( x_f + h)$,  
so constraint~(\ref{eq:LP}) becomes $y_\ell \le (h y_f + 1) / v$. 
Thus the coefficients in our linear constraints are rationals whose bit complexity is polynomially bounded in terms of the input bit complexity.
This means that we can use polynomial-time linear programming algorithms~\cite{khachiyan,karmarkar}
to find the value of $x_e$ for each edge $e$ of $\cal K$.  

\remove{
\rednote{This should go/change.}
Finally, as noted at the beginning of this section, we can use the values $x_e$ to compute the convex hull in each quadrant $C$.  We repeat the argument here. 
Suppose $C$ is bounded by edges $e$ and $f$.  
Then the convex hull inside $C$ is the Euclidean convex hull of points $x_e$, $x_f$, $O$, and the points of $P$ that lie inside $C$,
and since Euclidean (planar) convex hulls can be computed in polynomial time~\cite{ChanCH}, this allows us to compute the convex hull in $C$, and thus the whole convex hull, in polynomial time.  
}

\remove{
For each quadrant $C$, let $P_C$ be those points in $P \cup \{x_e\}_{e \in B} \cup O$ contained in its interior or boundary.  We compute their convex hull, $CH(P_C)$, using an optimal algorithm for computing the convex hull of points in the plane.  This can be done in time $O(m \log m)$ for each quadrant, or $O(nm \log m)$ for all quadrants.  Thus we can find the convex hull in polynomial time.
}

\begin{figure}[htb]
\centering
\includegraphics[scale=0.45]{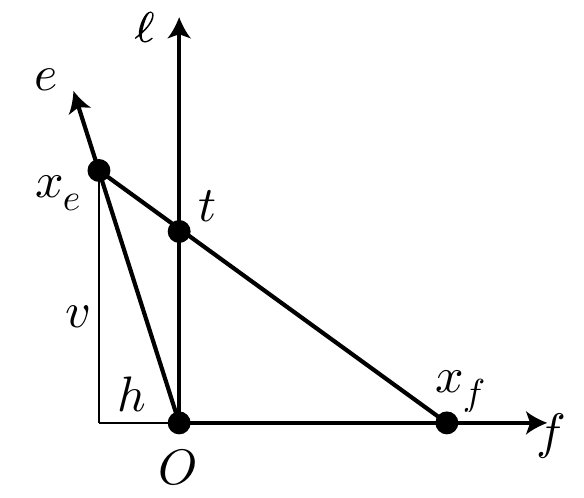}
\caption{Expressing the intersection point $t$ in terms of known quantities in the case of a cube complex.}
\centering
\label{f:cube_constraints}
\end{figure}
  
This completes the proof of Theorem~\ref{thm:convex-hull-alg}.

\section{Shortest Paths}
\label{sec:shortest-paths}

In this section we explore the possibilities and limitations of 
using the shortest path map to solve the single-source shortest path problem in a 2D CAT(0) complex.
The input is a 2D CAT(0) complex, $\cal K$,  composed of $n$ triangles, and a ``source'' point $s$ in $\cal K$.
We denote the shortest path from $s$ to $t$ by $\sigma(s,t)$.
\note{Throughout this section, the cells of $\cal K$ will be called ``faces''.  }

In general, the \emph{shortest path map} partitions the space into regions in which all points have shortest paths from $s$ that have the same \emph{combinatorial type}.  Specialized to 2D CAT(0) complexes, two shortest paths have the same \emph{combinatorial type} if they \note{traverse}
 the same sequence of edges, vertices, and faces.
 A basic approach to the single source shortest path problem is to compute the whole shortest path map from $s$.

We first show that the shortest path map may have exponential size \note{for a general 2D CAT(0) complex}.  
This contrasts with the fact that the shortest path map has size $O(n^2)$ in the two special cases 
 where the single-source shortest path problem is known to be efficiently solvable: when the complex is a topological 2-manifold with boundary, which we will call a 2-manifold for short~\cite{Maftuleac}; and when the complex is rectangular~\cite{Chepoi-Maftuleac}.    

 We then show that for any 2D CAT(0) complex there is a structure called the ``last step shortest path map'' that coarsens the shortest path map, has size $O(n)$, and allows us to find the shortest path $\sigma(s,t)$ to a given target point $t$ in time proportional to the number of triangles and edges traversed by the path.  
 Although we do not know how to find the last step shortest path map in polynomial time for general 2D CAT(0) complexes, we can obtain it from the shortest path map.

From this, we obtain efficient algorithms for the single-source shortest path problem in 2D CAT(0) complexes that are 2-manifold or rectangular.  
Both cases had been previously solved, but the techniques used in the two cases were quite different.  Our approach is the same in both cases and opens up the possibility of solving other cases.
We need $O(n^2)$ preprocessing time and space to construct a structure that uses $O(n)$ space and allows us to  find the shortest path $\sigma(s,t)$ to a given target point $t$ in time proportional to the number of triangles and edges traversed by the path.
\note{This matches the previous time bounds.}

\subsection{The Shortest Path Map}

Typically in a shortest path problem, the difficulty is to decide which of multiple geodesic (or locally shortest) paths to the destination is shortest.
This is the case, for example, for shortest paths in a planar polygon with holes, or for shortest paths on a terrain, and is a reason to use a Dijkstra-like approach that explores
paths to all target points in order of distance.   For shortest paths on a terrain, Chen and Han~\cite{Chen-Han} provided an alternative that uses a Breadth-First-Search (BFS) combined with a clever pruning when two paths reach the same target point.

When geodesic paths are unique, however, it is enough to explore all geodesic paths, and there is no need to explore paths in order of distance or in BFS order.  This is the case, for example, for shortest paths in a polygon, where the ``funnel'' algorithm~\cite{Guibas-sh-path-87,Hershberger-Snoeyink} achieves $O(n)$ processing time and storage, and $O(\log n)$ query time (plus output size to produce the actual path).
Similarly, in CAT(0) spaces, the uniqueness of geodesic paths means we can obtain a correct algorithm by simply exploring all geodesic paths without any ordering constraints.

For a vertex $v$ in a 2D CAT(0) complex, we define the \emph{ruffle} of $v$ to be the set of points $p$ in the complex such that the shortest path from $s$ to $p$ goes through $v$.  See Figure~\ref{fig:ruffle} for an example in the case of a rectangular complex.
The points of the ruffle of $v$ in a small neighbourhood of $v$ can be identified from the link graph of $v$ together with the incoming ray which is the last segment of the shortest path $\sigma(s,v)$.  In particular, the points of $v$'s ruffle close to $v$ are those points $p$ for which the segment $vp$ makes an angle of at least $\pi$ with the incoming ray. 
Using the link graph, the boundary rays of the ruffle of $v$ can be identified in time proportional to the number of faces incident to $v$.

\begin{figure}[htb]
\centering
\includegraphics[width=0.75\textwidth]{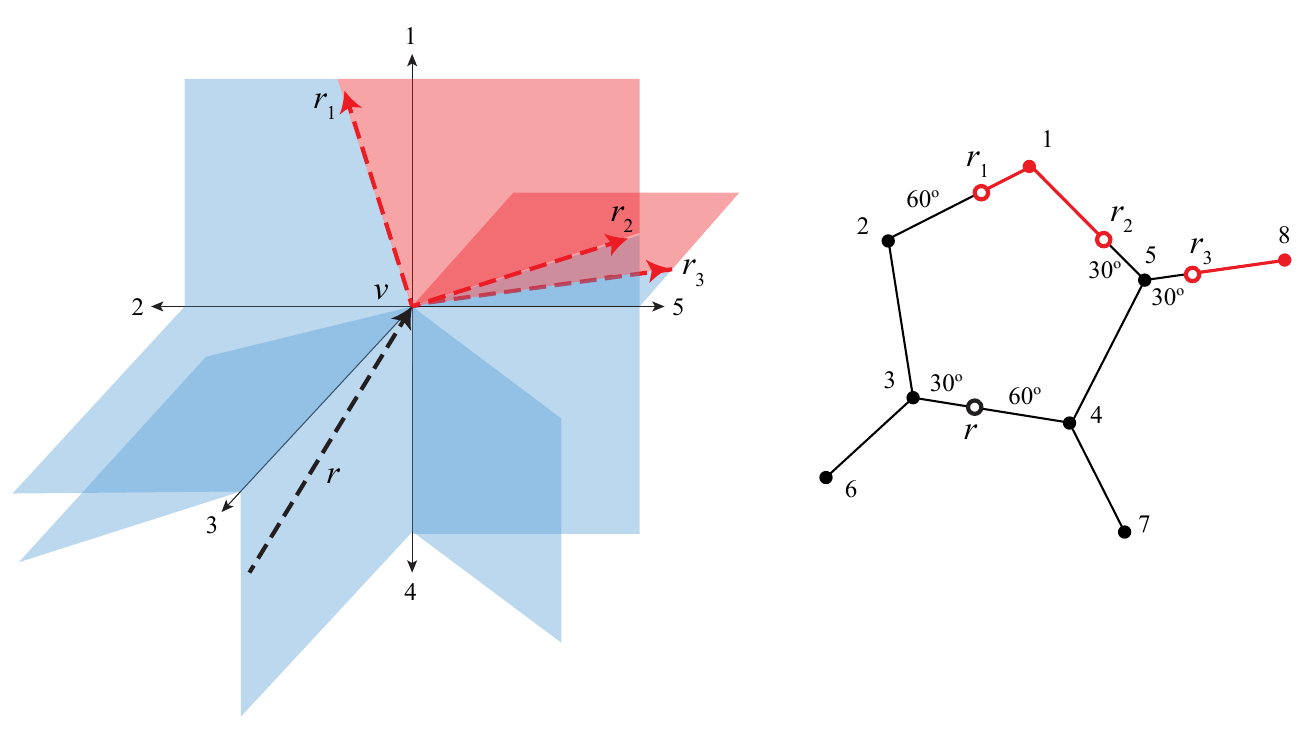}
\caption{The ruffle (in red) of vertex $v\in\mathcal K$ with respect to incoming ray $r$, shown in $\cal K$ (left) and in the link graph $G_v$ (right). The boundary rays of the ruffle are $r_1$, $r_2$, and $r_3$.}
\label{fig:ruffle}
\end{figure}

Consider one region of the shortest path map, and the set, 
$C$, of shortest paths  to points in the region.  
The paths in $C$ all go through the same sequence, $S_C$, of faces and edges and vertices.  
Let $v$ be the last vertex in the sequence $S_C$
 (possibly $v=s$).  
There is a unique geodesic path from $s$ to $v$, and all the paths of $C$ traverse this same path from $s$ to $v$.  After that, the points of the paths of $C$ all lie in the ruffle of $v$.  Since the paths 
traverse the same sequence of edges and faces they can be laid out in the plane to form a cone with apex $v$.   See Figure~\ref{fig:cone}.
Observe that the boundary rays of the cone may or may not lie in the set $C$.  If the boundary of the cone is the boundary of the ruffle of $v$ then it is included in $C$; but if the boundary of the cone is determined by another vertex, then beyond that vertex, the boundary is not included.
Note however, that the boundary ray is a shortest path---just not of the same combinatorial type since it goes through another vertex.  

\begin{figure}[htb]
\centering
\includegraphics[width=3in]{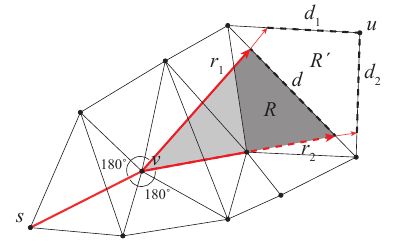}
\caption{The structure of shortest paths to one region $R$ (shown darkly shaded) of the shortest path map.
The set $C$ of shortest paths to points in the region forms a path $\sigma(s,v)$ together with a cone (lightly shaded) with apex $v$ bounded by rays $r_1$ and $r_2$.  Region $R$ is closed on the $r_1$ boundary and open on the $r_2$ boundary.  
Shortest paths exit $R$ through segment \note{$d$}.  \note{The figure shows one region $R'$ of the shortest path map beyond $d$.  Shortest paths exit $R'$ through two segments $d_1$ and $d_2$ and a vertex $u$.}
Note that the angles of the triangles incident to $v$ are not drawn accurately since they should sum to more than $2 \pi$.}
\label{fig:cone}
\end{figure}

\subsubsection{Computing the shortest path map}

We will show that if the shortest path map has $M$ regions, then it can be computed in time $O(M)$.
Regions of the shortest path map may have dimension 0, 1, or 2.
Each 2-dimensional region of the shortest path map is bounded by: 
two boundary rays;
a vertex or a  
segment of an edge through which shortest paths enter the region; and one or two segments of edges and possibly a vertex through which shortest paths exit the region.  See Figure~\ref{fig:cone}.  With each region, we will store its boundary rays and vertices/segments.
Each vertex of the complex is a 0-dimensional region of the shortest path map.  An edge may form a 1-dimensional region of the shortest path map (for example any edge $(v,w)$ inside the ruffle of $v$).

The algorithm builds the regions of the shortest path map working outwards from $s$. In general, we will have a set of vertices and segments (portions of edges) that form the ``frontier'' of the known regions, and at each step of the algorithm, we will advance the known regions beyond one frontier vertex or segment.  

The algorithm is initialized as follows.  Assume that $s$ is a vertex of the complex (if necessary, by triangulating the face containing $s$ \megan{or the 
neighbouring faces if $s$ is on an edge}).  Each edge incident to $s$ becomes a region of the shortest path map.  Each face $f$ incident to $s$ becomes a region of the shortest path map with the two edges of $f$ that are incident to $s$ as its boundary rays.
The two vertices of $f$ different from $s$ enter the frontier, along with the edge of $f$ not incident to $s$.

At each step of the algorithm we take one vertex or segment out of the frontier set and we find all the regions for which shortest paths enter through this vertex or segment. 

Consider first the case of removing segment \note{$d$} from the frontier.  We wish to find the regions of the shortest path map for which shortest paths enter through segment $d$.
If segment $d$ lies in edge $e$, then the faces containing the new regions are those incident to $e$, not including the face from which shortest paths arrive at $d$.
 (See segment $d$ and region $R'$ in Figure~\ref{fig:cone} for example.) Each such region $R'$ gives rise to one or two segments and possibly a vertex through which shortest paths exit the region.
 We add these segments and vertex to the frontier.  In case there is a vertex, $u$,  (such as in Figure~\ref{fig:cone})  
we must find the shortest path to the vertex.    
This can be done by placing the boundary rays of $R'$ in the plane, computing their point of intersection, $p$, and constructing the ray from $p$ to $u$. Note that we do not need to know the sequence of faces traversed by shortest paths to region $R'$---local information suffices. 
This provides us with the shortest path to $u$ and also the boundary rays of the segments incident to $u$.

We next consider the case where a vertex $v$ is removed from the frontier.  
We must find the regions of the shortest path map for which shortest paths enter through vertex $v$.
These lie in the ruffle of $v$.  
Knowing the shortest path $\sigma(s,v)$, we can search the link graph $G_v$ of $v$ to find all the boundary rays of the ruffle of $v$.  
Any edge incident to $v$ that lies in the ruffle forms a 1-dimensional region of the shortest path map, and we add its other endpoint to the frontier.
For each face $f$ incident to $v$, we can identify the region of the shortest path map that lies in face $f$ and interior to the ruffle of $v$.   We can also identify the segments and vertices through which shortest paths exit the new region, and add these to the frontier. 

This completes the high-level description of the algorithm.  We spend constant time per region of the shortest path map, plus $O(n)$ time to search the faces incident to each vertex, for a total of $O(M)$.

If we want to use the shortest path map to answer shortest path queries, we also need a way to locate, given a target point $t$ that lies in face $f$, which region of the shortest path map contains $t$.
This necessitates building a search structure for the shortest path regions that face $f$ is partitioned into, which takes more time and space. (Results of Mount~\cite{Mount} might give a solution better than the obvious one for this.)  We will not pursue this solution because we will present an alternative solution in Section~\ref{sec:last-step}.

\subsubsection{Properties of the shortest path map}

For our remaining results, we need
some properties of shortest paths in a 2D CAT(0) complex.

We begin with the observation that shortest path rays diverge in any face, \Anna{i.e., if we place the face in the plane and extend the two rays backwards, they meet}.  This is obvious (see Figure~\ref{fig:cone}) for rays in one region of the shortest path map, and follows more generally from the fact that regions of the shortest path map partition any face.

\Anna{
\begin{observation}
\label{obs:paths-diverge}
Any two shortest path rays in a face diverge.
\end{observation}
}
\remove{ below is a proof of the above observation except for parallel rays
\Anna{
\begin{observation}
\label{obs:paths-diverge}
Any two shortest path rays in a face diverge.
\end{observation}
\begin{proof} Suppose shortest path rays $r$ and $s$ converge in face $f$.  If they intersect in $f$, this contradicts the uniqueness of shortest paths.  Next, suppose that rays $r$ and $s$ exit the same edge $e$ of face $f$.  We will augment the complex by adding a new face $f'$ incident to edge $e$.  The other two edges of face $f'$ can be chosen so that rays $r$ and $s$ intersect in $f'$.  This provides a contradiction so long as the new complex is CAT(0).  But observe that 
the addition of face $f'$ does not alter the link graph of any vertex, and the complex remains simply connected.  Thus, by Theorem~\ref{thm:CAT0}, the new complex is CAT(0), which gives our contradiction.

On the other hand, if  rays $r$ and $s$ exit face $f$ on different edges, then by Lemma~\ref{lemma:shortest-paths-to-face} they must enter on the same edge $e'$, and we make the same argument by adding new face $f'$ on edge $e'$. 
\end{proof}
}
}
\begin{lemma}
Let $e$ be an edge of a 2D CAT(0) complex.  Either all the shortest paths to internal points of $e$ travel along $e$, or they all reach $e$ from one incident face.
\label{lemma:shortest-paths-to-edge}
\end{lemma}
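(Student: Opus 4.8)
The plan is to argue by contradiction, using the uniqueness of geodesics together with the CAT(0) angle condition at the endpoints of $e$ and at interior points of $e$.

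\medskip

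\noindent\textbf{Setup.}
Let $e=(v,w)$ be an edge of the complex, and let $e$ be incident to faces $f_1, f_2, \dots$ (at least one, since it is an edge; possibly many, since the complex need not be a manifold). For an interior point $q$ of $e$, the shortest path $\sigma(s,q)$ arrives at $q$ in one of three ways: (i) it travels along $e$ itself (reaching $q$ from the direction of $v$ or of $w$), or (ii) it enters $q$ transversally through one of the incident faces $f_i$. I want to show that one of these behaviours is shared by \emph{all} interior points of $e$. The key local fact I would use is the characterization of geodesics in a CAT(0) complex: a path is locally geodesic at a point $x$ of an edge precisely when the two path-segments leaving $x$ make an angle of at least $\pi$ in the link graph $G_x$ of $x$ (this is exactly the mechanism behind Property~\ref{property:path-thru-O} and the definition of the ruffle).

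\medskip

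\noindent\textbf{Main argument.}
First I would analyze the ``transversal'' case. Suppose $\sigma(s,q)$ enters interior point $q$ through face $f_i$. I claim that for a nearby interior point $q'$ of $e$, the shortest path also enters through $f_i$. Consider the incoming ray of $\sigma(s,q)$ and unfold the faces it traverses into the plane. The angle that the final segment makes with $e$ at $q$ is strictly less than $\pi$ on the $f_i$ side (otherwise the path could be shortened by going along $e$, contradicting that it arrives transversally and is shortest). Because this is a strict inequality and the geometry varies continuously, a small perturbation of $q$ to $q'$ keeps the path entering through $f_i$: the shortest path to $q'$ is obtained by the same unfolding, sliding the target along $e$, and remains strictly geodesic. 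Hence the set of interior points reached transversally through a \emph{fixed} face $f_i$ is open in $e$. Next I would show that two distinct faces cannot simultaneously be the entry face for interior points of $e$: if $\sigma(s,q)$ enters through $f_i$ and $\sigma(s,q')$ enters through $f_j$ with $i\ne j$, then consider a point between $q$ and $q'$; by continuity of the distance function and uniqueness of geodesics, the entry direction cannot jump between two different transversal faces without the shortest path passing through an endpoint $v$ or $w$, which contradicts $q,q'$ being interior. Combined with openness, this forces a single incident face to serve all transversally-reached interior points.

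\medskip

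\noindent\textbf{Reconciling the two behaviours.}
It remains to rule out a mixture of ``along $e$'' and ``transversal'' arrivals among interior points. If some interior point $q_0$ is reached along $e$ (say from the $v$ direction), I would argue that the set of interior points reached along $e$ is both open and closed in the interior of $e$: once the shortest path runs along $e$ up to $q_0$, extending it a little further along $e$ to $q_0'$ remains shortest (the portion of a geodesic is a geodesic and $e$ is a straight edge), giving closedness under small extension, while the transversal set is open by the previous paragraph. Since the interior of $e$ is connected, a nonempty open-and-closed subset is everything, so the two behaviours cannot coexist. I would phrase the whole reconciliation via connectedness of the (open) interior of $e$: partition it into ``reached along $e$'' and ``reached transversally through the unique face $f_i$,'' show each piece is open, and invoke connectedness.

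\medskip

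\noindent\textbf{Expected obstacle.}
The main obstacle is making the ``openness'' and ``no jumping between faces'' steps fully rigorous, since the complex is not a manifold and an edge may have many incident faces. The delicate point is a \emph{tangential} arrival, where $\sigma(s,q)$ meets $e$ at angle exactly $\pi$ on one side, sitting on the boundary between the along-$e$ regime and a transversal regime; I would need to classify such a point carefully (it lies on a ruffle boundary) and confirm it does not create a second, distinct entry face for neighbouring points. Handling this boundary case cleanly, using the angle-$\geq\pi$ geodesic criterion in the link graph, is where I expect the real work to lie.
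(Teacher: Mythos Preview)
Your approach via openness and connectedness is different from the paper's, which instead uses the discrete structure of the shortest path map. The paper observes that the interior of $e$ decomposes into finitely many intervals by combinatorial type, orders these as $C_1,\dots,C_k$ along $e$, and argues that adjacent $C_i,C_{i+1}$ must arrive from the same incident face: the boundary ray between them, viewed as the bounding ray of $C_i$'s cone when that cone is unfolded into the plane, is itself a shortest path and still arrives at $e$ through $C_i$'s face; by uniqueness it cannot simultaneously be the first ray of a region arriving through a different face. This sidesteps all of your continuity and tangential-arrival worries in one stroke.

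Your outline has two concrete gaps. First, in the along-$e$ case you assert that \emph{extending} a shortest path a little further along $e$ remains shortest; but only prefixes of geodesics are guaranteed to be geodesics, not extensions. The correct (and short) argument is the one the paper leaves implicit: if $\sigma(s,q_0)$ reaches $q_0$ along $e$ through endpoint $v$, the angle condition at $v$ places the entire direction of $e$ in the ruffle of $v$, so every interior point of $e$ is reached via $v$ and hence along $e$. Second, your ``no jumping'' step---that switching entry face would force the boundary path through $v$ or $w$---is unsupported, and in fact misdirected: a transversal shortest path to an interior point of $e$ can never have its last segment start at $v$ or $w$ (if it did, the segment from $v$ along $e$ would be at least as short, so the path would travel along $e$ instead). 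What actually prevents the jump is precisely the paper's observation that the limiting ray from the $f_i$ side still arrives through $f_i$; once you make that precise you have essentially reproduced the cone-boundary argument, just approached from the topological side.
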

\begin{proof}
If the shortest path to some internal point \note{$p$ of edge $e$ travels along $e$ (i.e.~arrives at $p$ from one of the endpoints of $e$)}, then so do the shortest paths to all internal points of $e$. 

Otherwise \note{shortest paths to internal points of $e$ arrive from faces incident to $e$.}
Consider the (finitely many) combinatorial types of shortest paths to points of $e$, and let $C_1, C_2, \ldots, C_k$ be the corresponding sets of shortest paths, ordered according to the order of points along $e$.  We will prove that paths in all the $C_i$'s arrive at points of $e$ from the same incident face.  For otherwise, there would be some $C_i$ and $C_{i+1}$ that arrive from different incident faces.  
\note{Let $p$ be the point on $e$ that is the boundary between points reached by paths of $C_i$ and points reached by paths of  $C_{i+1}$.  Point $p$ must be reached by a ray in one of $C_i$ or $C_{i+1}$, say $C_{i+1}$.}
 But observe that when $C_i$ is laid out in the plane, the boundary ray of its cone \note{that is incident to $p$} is still a shortest path, and still arrives at $e$ from the same incident face as $C_i$ does.  But this contradicts $C_{i+1}$ arriving from a different face. 
\end{proof}

We next characterize how shortest paths can enter a face (a triangle) of the complex.  See Figure~\ref{fig:triangle-sh-path}.

\begin{lemma}
Shortest paths enter a triangular face either through
one edge, or one vertex, or one edge and an incident vertex, or two edges and their common vertex.
\label{lemma:shortest-paths-to-face} 
\end{lemma}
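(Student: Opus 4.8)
The plan is to describe the \emph{entry set} $E \subseteq \partial f$, namely the set of all points through which some shortest path from $s$ enters the closed face $\overline{f}$ on its way to an interior point of $f$, and to show $E$ must be one of the four listed configurations. First I would dispose of the trivial case $s \in \overline{f}$: since a single cell is convex in the CAT(0) metric (the Euclidean segment between two of its points is the unique geodesic and hence stays inside), shortest paths from such an $s$ never leave $f$ and there is nothing to prove. So assume $s \notin \overline{f}$. Because $\overline{f}$ is convex, each geodesic $\sigma(s,q)$ meets $\overline{f}$ in a single subsegment, so the entry point $\mathrm{in}(q) \in \partial f$ (the first point of $\overline{f}$ met along $\sigma(s,q)$) is well defined, and $E = \{\mathrm{in}(q) : q \in \mathrm{int}(f)\}$.

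I would then establish three structural facts. \textbf{(1) Connectivity.} Since geodesics in a CAT(0) space depend continuously on their endpoints and meet the convex set $\overline{f}$ in a single subsegment, the map $\mathrm{in}:\mathrm{int}(f)\to\partial f$ is continuous: at a transition between combinatorial types the entry point slides continuously toward the relevant vertex, and uniqueness of geodesics forbids a jump. Hence $E$, the continuous image of the connected set $\mathrm{int}(f)$, is a connected arc of $\partial f$. \textbf{(2) Edges are all-or-nothing.} Applying Lemma~\ref{lemma:shortest-paths-to-edge} to each edge $e$ of $f$: if the shortest paths to interior points of $e$ arrive from the face opposite $f$, then extending each such path straight across $e$ shows every interior point of $e$ lies in $E$; in the remaining cases (arriving from $f$ itself, or travelling along $e$) no interior point of $e$ can be an entry point, so $\mathrm{int}(e)\cap E=\emptyset$. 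Thus each open edge is entirely in $E$ or entirely out.

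The crux is \textbf{(3) at most one vertex of $f$ lies in $E$}, which I would prove via the CAT(0) angle condition. Suppose two vertices $A$ and $B$ were both entry vertices. If $\sigma(s,q)$ passes through $A$ and continues into $f$, its initial part is the geodesic $\sigma(s,A)$, and geodesity at $A$ forces the incoming direction and the direction into $f$ to be at link-distance at least $\pi$ in $G_A$ (this is exactly the ruffle condition). By the triangle inequality in $G_A$ this distance is at most $\phi_A+\theta_A$, where $\theta_A$ is the angle of $f$ at $A$ and $\phi_A=\angle_A(s,B)$ is the angle at $A$ of the geodesic triangle $sAB$ (the side $AB$ being the geodesic $\sigma(A,B)$, which bounds the sector of $f$ at $A$). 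Hence $\phi_A \geq \pi-\theta_A$, and symmetrically $\phi_B \geq \pi-\theta_B$. Adding and using $\theta_A+\theta_B=\pi-\theta_C$ gives $\phi_A+\phi_B \geq \pi+\theta_C > \pi$. But in a CAT(0) space the Alexandrov angles of the geodesic triangle $sAB$ sum to at most $\pi$, so $\phi_A+\phi_B \leq \pi$, a contradiction. Thus $E$ contains at most one vertex.

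Finally I would combine the three facts: $E$ is a nonempty connected arc that is a union of complete open edges (by (2)) together with at most one vertex (by (3)). Enumerating such arcs: with no full edge it can only be a single vertex (type \emph{V}); with one open edge it is that edge alone (type \emph{E}) or that edge plus one endpoint, forced to be an endpoint by connectivity (type \emph{EV}); with two open edges, connectivity forces inclusion of their common vertex and of no other vertex (type \emph{EVE}); three open edges would require all three vertices for connectivity, contradicting (3). These are exactly the four asserted possibilities. I expect the main obstacle to be fact (3): correctly formulating the turning/geodesity condition at a vertex inside the link graph and marrying it to the CAT(0) angle-sum bound. The connectivity step is short once continuity of the entry map is justified from uniqueness and continuous dependence of geodesics, and (2) is a direct application of Lemma~\ref{lemma:shortest-paths-to-edge}.
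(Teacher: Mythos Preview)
Your proof is correct and takes a genuinely different route from the paper's. The paper gives a two-sentence argument: it simply observes that ``all three edges'' and ``one edge together with its opposite vertex'' are impossible, because extending an entering geodesic across $f$ would reach the opposite edge from inside $f$, contradicting Lemma~\ref{lemma:shortest-paths-to-edge}. That is the whole proof; the remaining bad configurations (two vertices, three vertices, two edges without their shared vertex, one edge with both incident vertices) are left implicit.

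Your approach is more structural. You establish (1) connectivity of the entry set, (2) the all-or-nothing behavior of edges via Lemma~\ref{lemma:shortest-paths-to-edge}, and (3) that at most one vertex can be an entry vertex, using the CAT(0) angle-sum bound on the geodesic triangle $sAB$. Step~(3) is the real novelty: the paper never invokes the angle-sum inequality, whereas your argument $\phi_A+\phi_B \ge (\pi-\theta_A)+(\pi-\theta_B)=\pi+\theta_C>\pi$ versus $\phi_A+\phi_B\le\pi$ is a clean geometric way to exclude two entry vertices directly. Together with connectivity, this also disposes of the configurations the paper's proof passes over in silence. One small point worth tightening: in step~(3) you should distinguish the link distance at $A$ (which can exceed $\pi$ and is what the triangle inequality in $G_A$ bounds) from the Alexandrov angle $\phi_A$ (which is capped at $\pi$ and is what the CAT(0) angle-sum bound controls); the inequality $\phi_A\ge\pi-\theta_A$ survives because $\min(d_{G_A},\pi)\ge\min(\pi-\theta_A,\pi)=\pi-\theta_A$, but the write-up conflates the two.

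In short: the paper's proof is shorter but sketchier; yours is longer but actually closes all the cases, and the angle-sum step is a nice addition.
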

\begin{proof}
We cannot have shortest paths entering a face from all three edges, nor from an edge and the opposite vertex, otherwise we would have shortest paths to two points on the same edge arriving from different faces, in contradiction to Lemma~\ref{lemma:shortest-paths-to-edge}. 
\end{proof}

\begin{figure}
\centering
\includegraphics[width=6in]{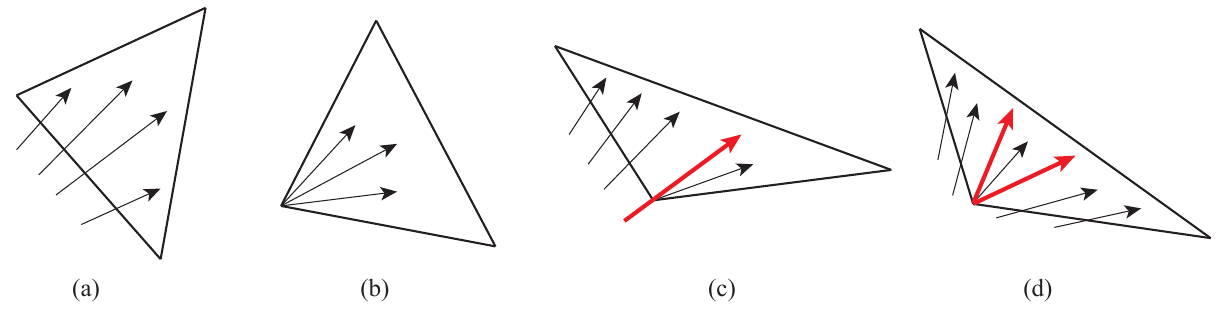}
\caption{Shortest paths may enter a face through: (a) one edge (type \emph{E}); (b) one vertex (type \emph{V}); (c) one edge and an incident vertex (type \emph{EV}); or (d) two edges and their common vertex (type \emph{EVE}).
}
\label{fig:triangle-sh-path}
\end{figure}

\subsubsection{Size of the shortest path map}

A boundary ray between adjacent regions of the shortest path map starts out as a boundary ray of the ruffle of some vertex.  By Lemma~\ref{lemma:shortest-paths-to-face}, each face originates at most two such rays.  In a general 2D CAT(0) complex, such a ray can bifurcate into two or more branches when it hits an edge that is incident to more than two faces.  There is one branch for each new incident face.
See Figure~\ref{fig:exponential}(a) for an example.  
The collection of all branches that originate from one boundary ray of a ruffle is called a \emph{boundary tree}.  Observe that it is a tree---no two branches can intersect because geodesic paths are unique.
There are $O(n)$ boundary trees because each face originates at most two boundary trees.
If the complex is a 2-manifold (i.e.,~every edge is in at most two faces) then 
no bifurcations can occur, so each boundary tree consists of only one branch, which implies that the size of the shortest path map is $O(n^2)$.  This was proved 
by Maftuleac~\cite{Maftuleac} (where 2-manifold complexes are called ``planar''), but we include a proof because we wish to observe a generalization.

\begin{lemma}[\cite{Maftuleac}]
In a 2D CAT(0) complex that is a 2-manifold the size of the shortest path map is $O(n^2)$.  
\label{lemma:size-shortest-path-map}
\end{lemma}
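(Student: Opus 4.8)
The plan is to bound the total number of regions by counting the straight pieces that the boundary rays cut inside the faces of the complex, and then charging regions to these pieces. First I would recall the two facts that are already in place in the discussion preceding the statement: by Lemma~\ref{lemma:shortest-paths-to-face} each face originates at most two boundary rays, and in a 2-manifold every edge is incident to at most two faces, so no bifurcation can occur and each boundary tree degenerates to a single geodesic ray. Combining these gives that there are only $O(n)$ boundary rays in the whole complex.

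Next I would show that each individual boundary ray is divided into at most $O(n)$ maximal straight pieces, one per face it crosses, because a shortest path crosses each (convex) triangular face at most once. Here is the argument I would give: suppose a boundary ray, which is a shortest path from $s$, entered a triangle $F$, exited at a point $b_1$ on $\partial F$, travelled elsewhere, and re-entered $F$ at a point $a_2$ on $\partial F$. Since $F$ is convex, the straight segment $b_1 a_2$ lies inside $F$ and has length at most that of the sub-path of the ray between $b_1$ and $a_2$; replacing the sub-path by this segment would yield a strictly shorter path from $s$ to the ray's endpoint, contradicting that the ray is a shortest path. Hence each boundary ray meets each face in a single chord, so it contributes $O(n)$ chords, and over all $O(n)$ boundary rays the total number of chords, summed across all faces, is $O(n)\cdot O(n) = O(n^2)$.

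The key geometric observation I would then invoke is that all boundary rays are shortest paths emanating from the single source $s$, so no two of them can cross transversally: a transversal crossing at an interior point $x$ would produce two distinct geodesics from $s$ to $x$ (the two prefixes arriving at $x$ from different directions), contradicting the uniqueness of geodesics in a CAT(0) space. Therefore, inside any single triangular face the chords cut by the boundary rays form a family of pairwise non-crossing chords (overlapping portions coming from a shared prefix only reduce the count). Since a set of $k$ non-crossing chords subdivides a triangle into $O(k)$ regions, the number of shortest-path-map regions lying in a given face is linear in the number of chords crossing that face.

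Finally I would sum over all faces: the number of two-dimensional regions is $O(\text{total number of chords}) = O(n^2)$, and the lower-dimensional regions (vertices of the complex, and edge segments lying on chords) are also bounded by $O(n^2)$, giving the claimed $O(n^2)$ bound on the size of the shortest path map. I expect the main obstacle to be making the two ``uniqueness'' steps fully rigorous, namely that a shortest path crosses each face at most once (using convexity of triangular faces) and that distinct boundary rays never cross transversally (using uniqueness of CAT(0) geodesics); once these are established the region count is routine.
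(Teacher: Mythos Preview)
Your proposal is correct and follows essentially the same approach as the paper: both use that in a 2-manifold each boundary tree is a single ray, that each ray enters a face at most once via a short-cutting argument, hence $O(n)$ chords per face, and therefore $O(n)$ regions per face and $O(n^2)$ regions overall. The one place you go beyond the paper is in explicitly justifying the step from ``$O(n)$ chords in a face'' to ``$O(n)$ regions in that face'' via the non-crossing argument (uniqueness of geodesics from $s$); the paper states this implication without comment, so your version is slightly more careful but not a different route.
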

\begin{proof} As noted above, every boundary tree consists of only one branch, or ray.  If such a ray entered a face twice then the second entry would not be a shortest path, since we could short-cut across the face from the first entry.  Therefore no ray enters a face twice, and the number of boundary tree branches cutting any face is $O(n)$.  Then the number of regions of the shortest path map within one face is $O(n)$ and the overall number of regions is $O(n^2)$.
%
\end{proof}

\revised{A general 2D CAT(0) complex may have the property that} 
no two branches of one boundary tree cross the same face, 
\revised{in which case} 
the shortest path map still has size $O(n^2)$.
We prove that this is the case for 2D CAT(0) rectangular complexes:

\begin{lemma}
In a 2D CAT(0) rectangular complex, no two branches of one boundary tree can enter the same face, and from this it follows that 
the shortest path map has size $O(n^2)$.
\label{lemma:rectangular}
\end{lemma}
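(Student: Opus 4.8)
The statement has two parts, and the size bound is an immediate corollary of the no-two-branches claim, obtained by mirroring the proof of Lemma~\ref{lemma:size-shortest-path-map}. Recall that there are only $O(n)$ boundary trees, since each face originates at most two of them. Granting the claim, each boundary tree contributes at most one branch entering any fixed face $g$, and, by the same short-cut argument used for single rays in Lemma~\ref{lemma:size-shortest-path-map}, that branch can enter $g$ at most once (a second entry could be short-cut across $g$ and would not be a shortest path). Hence at most $O(n)$ boundary-tree branches cross $g$; since these branches are straight and pairwise non-crossing inside $g$ (geodesics being unique), they cut $g$ into $O(n)$ regions of the shortest path map. Summing over the $n$ faces gives the $O(n^2)$ bound.

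It remains to prove the claim, and I would first record the structural feature that all branches of one boundary tree share. Every such branch is a straight continuation of one ruffle boundary ray, which is itself a shortest path; consequently, if we unfold into the plane the sequence of faces traversed by a branch, the branch becomes a straight segment. Because branches split off only at edges incident to more than two faces and continue straight there, all branches of the tree make the same angle with each splitting edge and therefore share one common \emph{direction}. Since every face is a rectangle, unfolding keeps all edges axis-parallel, so this common direction is a fixed vector relative to an axis-aligned grid; in particular every branch crosses ``vertical'' and ``horizontal'' edges in a pattern dictated solely by that direction.

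Now suppose, for contradiction, that two branches $\beta_1, \beta_2$ of one boundary tree both enter a face $g$. Let $e^*$ be their last common edge and $t$ the point where the shared prefix meets $e^*$; then $\beta_1$ and $\beta_2$ leave $t$ at equal angles into two distinct faces $f_1 \ne f_2$. The plan is to show that, in a rectangular complex, two such equal-direction geodesics issuing from $t$ cannot first meet at $g$ through different edges or at different points: using Lemma~\ref{lemma:shortest-paths-to-edge} and Lemma~\ref{lemma:shortest-paths-to-face}, all shortest paths enter $g$ through at most two edges sharing a single corner, and I would combine this restriction with the fixed common direction and the right angles of $g$ to force $\beta_1$ and $\beta_2$ to enter $g$ at the same point with the same direction. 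But then $\beta_1$ and $\beta_2$ would coincide inside $g$ and beyond it, so any point of $g$ lying on both would have two combinatorially distinct shortest paths from $s$ (one through $f_1$, one through $f_2$), contradicting the uniqueness of geodesics in a CAT(0) space.

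I expect the crux to be exactly this reconvergence step, since it is where rectangularity is indispensable: for general angles two branches of one boundary tree really can re-enter a common face (this is what produces the exponential blow-up of Figure~\ref{fig:exponential}). To carry it out I would work entirely in the unfolded picture, tracking how the single fixed direction crosses the axis-parallel edges of the two face-sequences of $\beta_1$ and $\beta_2$; the goal is to show that the two sequences cannot first agree at $g$ unless they already agreed before $t$, which would push the divergence point strictly before $e^*$ and contradict the choice of $e^*$ as the last common edge.
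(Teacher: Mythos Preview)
Your deduction of the $O(n^2)$ bound from the no-two-branches claim is fine and matches the paper. You also correctly isolate the structural fact that is special to the rectangular case: every branch of a fixed boundary tree carries one and the same direction, because in a rectangular complex a straight ray makes the same angle with every horizontal edge it meets (and the complementary angle with every vertical edge). This is exactly the observation the paper uses.

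Where your argument breaks down is the contradiction you try to extract from this common direction. You propose to ``force $\beta_1$ and $\beta_2$ to enter $g$ at the same point with the same direction'' and then conclude that they coincide in $g$, giving two combinatorially distinct geodesics to a single point. But the common-direction observation says precisely that $\beta_1$ and $\beta_2$ are \emph{parallel} in $g$, not that they coincide; two parallel lines entering a rectangle through one edge, or through two adjacent edges, do so at distinct points. Nothing in Lemmas~\ref{lemma:shortest-paths-to-edge} or~\ref{lemma:shortest-paths-to-face} forces the entry points to agree, and your final paragraph (pushing the divergence point back before $e^*$) is a different strategy that you do not actually carry out. So the core step is missing.

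The paper derives the contradiction directly from parallelism, without trying to make the branches meet. Take a segment $b$ in $f$ joining a point of $r_1$ to a point of $r_2$; the shortest paths from $s$ to the points of $b$ have last segments that are all parallel (they are squeezed between the parallel rays $r_1$ and $r_2$). But the shortest paths reaching any single region of the shortest path map form a cone with a finite apex, so their last segments cannot all be parallel. That is the contradiction. In short: you found the right invariant (a fixed direction), but the way to use it is ``parallel rays cannot bound a cone,'' not ``parallel rays must coincide.''
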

\begin{proof}
Suppose, by contradiction, that two branches $r_1$ and $r_2$ of the same boundary tree enter a common face.  Let $f$ be the first face \megan{along branch $r_1$ that} they both enter.  
\note{Let $f_0$ be the face just before the two branches diverge.
The sequence of faces from $f_0$ to $f$ traversed by $r_1$ can be laid out in the plane so that $r_1$ forms a straight line.  After a suitable rotation, the edges crossed by $r_1$ alternate between horizontal and vertical.  
Then the angle between $r_1$ and any horizontal edge it crosses is the same, say $\alpha$, and the angle between $r_1$ and any vertical edge it crosses is $\pi/2 -\alpha$. 
See Figure~\ref{fig:rectangle-ray}.
The same is true for $r_2$, and the angle $\alpha$ must be the same for $r_1$ and $r_2$ because the two rays match in face $f_0$. }  
This means that
$r_1$ and $r_2$ are parallel \megan{or perpendicular} in $f$. 
\Anna{If $r_1$ and $r_2$ are parallel then they do not diverge, which contradicts Observation~\ref{obs:paths-diverge}.}   
%
\megan{If $r_1$ and $r_2$ are perpendicular, then either they intersect in $f$, which is a contradiction, or there is some edge $e$ of $f$ that both rays pass through.  If $r_1$ enters $f$ at edge $e$, then by Lemma~\ref{lemma:shortest-paths-to-edge}, branch $r_2$ must reach $e$ from this same face, contradicting $f$ being the first face along branch $r_1$ that both branches enter.  Thus $r_1$ must exit $f$ through edge $e$, and again by Lemma~\ref{lemma:shortest-paths-to-edge}, $r_2$ must also exit through edge $e$.}
\Anna{But then the two rays converge, which contradicts Observation~\ref{obs:paths-diverge}.
}

Therefore the branches of one boundary tree enter a face at most once.  Since there are $O(n)$ boundary trees, this means that the 
number of boundary tree branches cutting any face is $O(n)$.  Then the number of regions of the shortest path map within one face is $O(n)$ and the overall number of regions is $O(n^2)$.
\end{proof}

\begin{figure}
\centering
\includegraphics[width=3in]{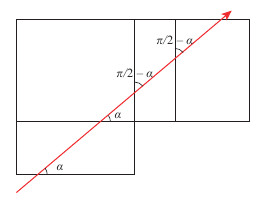}
\caption{If a ray makes an angle of $\alpha$ with some horizontal edge in a 2D CAT(0) rectangular complex, then it makes the same angle $\alpha$ with every horizontal edge that it crosses, and it makes an angle $\pi/2 -\alpha$ with every vertical edge that it crosses.
}
\label{fig:rectangle-ray}
\end{figure}

In a general 2D CAT(0) complex, two branches of one boundary tree may cross the same face---see \revised{Figure~\ref{fig:exponential}} for an example---and the size of the shortest path map may grow exponentially:

\begin{proposition} 
The size of the shortest path map of a 2D CAT(0) complex may be exponential in $n$, the number of faces.
\label{lemma:exponential}
\end{proposition}
\begin{proof}
Figure~\ref{fig:exponential}(a--c) show how one boundary ray of a ruffle can bifurcate into two branches which then enter the same face $g_2$.  Figure~\ref{fig:exponential}(d) shows how this process can be repeated.  With each addition of three faces, $g_i, f_i'$, and $f_i''$, the number of branches doubles.  Thus after adding $3n$ faces, the number of branches is $2^n$\revised{---so long as the angles are small enough that the process can be repeated $n$ times.  
To justify this, we need to be more precise about the angles.

For the initial set-up, let the angle between edge $e_1$ and the initial ray  
be $\alpha_1 = \varepsilon$, with $\varepsilon$ to be chosen later.   
Define $\beta_1$, the angle between $e_2$ and [the extension of] $e_1$ to be $\beta_1 = 2\varepsilon$.
More generally, define 
$\beta_i$, the angle between edge $e_{i+1}$ and [the extension of] $e_i$ to be $2^i \varepsilon$.
Note that in our construction the sum of the angles of $f'_i, f''_i$ and $g_{i+1}$ at the point where they meet is $2\pi$, so the angle between $e_{i+1}$ and [the extension of] $e_i$ is well-defined.

We claim that in 
the general situation, as shown in Figure~\ref{fig:exponential-2}(a), we have an edge $e_i$ and a fan of $2^i$ pairs of branches that meet in pairs along $e_i$, and form an increasing sequence of angles from $\alpha_1$ to $\alpha_i = (2^i -1)\varepsilon$.  We prove this by induction on $i$.
It is true initially with $i=1$.   For the induction step from $i$ to $i+1$, it suffices to examine the outer pair of branches, since they determine the two extreme rays intersecting $e_{i+1}$.  Refer to Figure~\ref{fig:exponential-2}(b).  
From $\alpha_i = (2^i -1)\varepsilon$ and $\beta_i = 2^i \varepsilon$, we 
calculate that the maximum angle between a branch and $e_{i+1}$ is $\alpha_{i+1} = (2^{i+1} -1)\varepsilon$ and the minimum angle between a branch and $e_{i+1}$ is $\alpha_1$.  The remaining branches have slopes and intersection points on $e_{i+1}$ that lie between these two extremes. 
These $2^i$ branches are then reflected in $e_{i+1}$ to form a fan of $2^{i+1}$ pairs of branches, which completes the induction proof. 

The induction step to $i+1$ can be carried out so long as  $\beta_i  + \alpha_i < \pi$.  
Thus, by choosing $\varepsilon < \pi / 2^{n}$ we guarantee $\beta_{n-1} + \alpha_{n-1} < 2^n\varepsilon < \pi$, and we can continue the branching process for $n$ steps.
We note that this construction produces an exponential-sized shortest path map only by using $O(n)$ bits for the angles.
}
\end{proof}

\begin{figure}[htb]
\centering
\includegraphics[width=6in]{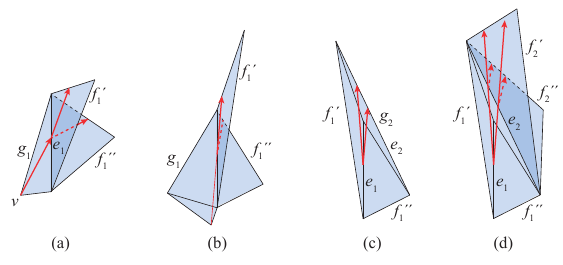}
\caption{(a) A boundary ray of a ruffle (shown in red with arrows) originates from vertex $v$ in face $g_1$, and bifurcates when it reaches edge $e_1$, branching into two rays, one in face $f_1'$ and one in face $f_1''$. 
(b) The same situation but with sharper angles.  
(c) The two resulting branches enter face $g_2$ that is incident to $f_1'$ and $f_1''$, and arrive at edge $e_2$.  Note that (b) and (c) show opposite sides of face $f_1'$.
(d) Two more faces $f_2'$ and $f_2''$ are incident to edge $e_2$, so the two branches bifurcate into a total of four branches.  In the next iteration, the four branches will enter a face $g_3$ incident to $f_2'$ and $f_2''$.  
The process can be continued, and the number of rays doubles each time we add three faces.
}
\label{fig:exponential}
\end{figure}

\begin{figure}[htb]
\centering
\includegraphics[width=6in]{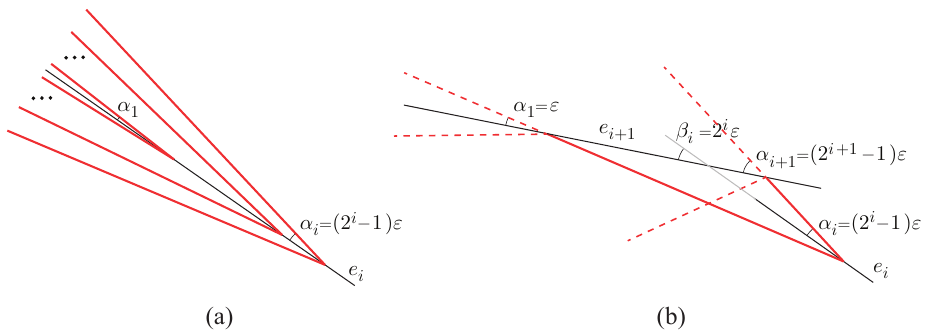}
\caption{\revised{A more detailed view of the angles used in the construction from Figure~\ref{fig:exponential}: (a) the general set-up; 
(b) the next step. 
} }
\label{fig:exponential-2}
\end{figure}

Note that an exponential size shortest path map does not preclude polynomial time algorithms for computing shortest paths.  In the tree space and its generalization, orthant space, the shortest path map, and indeed the number of regions in a face, can have exponential size \cite{Owen,MillerOwenProvan}, but there is still a polynomial time algorithm for computing geodesics in these spaces \cite{OwenProvan,MillerOwenProvan}.

\subsection{The Last Step Shortest Path Map}
\label{sec:last-step}

Although the shortest path map for single-source shortest paths in a 2D CAT(0) complex may have exponential size, there is a structure, called the ``last step shortest path map,'' that has linear size and can be used to find the shortest path to a queried target point in time proportional to the combinatorial size of the path (i.e.,~the number of faces, edges, and vertices traversed by the path).

The \emph{last step shortest path map}, first introduced in~\cite{touring},  partitions the space into regions where points $p$ and $q$ are in the same region if the shortest paths $\sigma(s,p)$ and $\sigma(s,q)$ have the same last vertex, edge, or face, i.e.,~the combinatorial type of the two paths matches on the last element.   
Thus, the last step shortest path map is a coarsening of the shortest path map.
\note{By Lemma~\ref{lemma:shortest-paths-to-face}  each face 
has one of the types shown in Figure~\ref{fig:triangle-sh-path}, and is partitioned into one, two, or three regions. 
We store the type of each face, and for 
type \emph{EV} and \emph{EVE} faces we store the rays that partition the face based on the edge/vertex through which shortest paths enter.
}

For the purpose of answering shortest path queries, we store with each region of the last step shortest path map the last vertex, edge, or face with which shortest paths enter the region.  We call this the \emph{incoming information} (``in-info'') for the region.
\note{By Lemmas~\ref{lemma:shortest-paths-to-edge} and~\ref{lemma:shortest-paths-to-face} the possible regions and possible in-info are as follows:
\begin{itemize}
\item a vertex $v$, with in-info a vertex $u$ (via edge $(u,v)$), or a face $f$
\item an edge $e$, with in-info an endpoint $u$ of $e$, or a face $f$
\item a face, partitioned into one, two, or three regions, each with in-info a vertex $u$ or an edge $e$
\end{itemize}
} 

\note{For any 2D CAT(0) complex the last step shortest path map has size $O(n)$.  The incoming information also has size $O(n)$.}


\subsubsection{Answering shortest path queries using the last step shortest path map}

We show that the last step shortest path map, together with the in-info described above, is sufficient to recover the path from $s$ to any point $t$ in time proportional to the number of faces \megan{and edges} on the path. 
A query point $t$ is given as a vertex, or a point on an edge, or a point (in local coordinates) in a face.
\note{We find the path working backwards from $t$.}

\note{If $t$ is a vertex or a point on an edge and the in-info is a face, then we treat $t$ as a point in the face.  For a point in a face, we test the partition of the face to determine which region contains $t$.  

If the in-info for $t$'s region is a vertex $u$ then we replace $t$ by $u$ and recurse. }
 
\remove{
\note{We first identify the region of the last step shortest path map that contains $t$.  In particular, if $t$ is a vertex or a point on an edge, then that vertex or edge is the region, and if $t$ is a point in a face, then we test the partition of the face to determine which region contains $t$.
Next we examine the incoming information associated with the region that contains $t$.  If the incoming information tells us that a shortest path reaches $t$'s region from a vertex $u$ or along an edge from $u$, then we replace $t$ by $u$ and recurse.}
Otherwise, 
the incoming information tells us that a shortest path reaches $t$'s region through a face, say $f$.  Refer to Figure~\ref{fig:triangle-sh-path}.
If $f$ is of type \emph{V}, we replace $t$ by the incoming vertex of $f$ and recurse.  If $f$ is of type \emph{VE} or type \emph{EVE} we locate $t$ relative to the rays in $f$.  From this we can tell if the shortest path to $t$ goes through a vertex of $f$ or not.  If it does, then we replace $t$ by that vertex and recurse.
}

\begin{figure}[htb]
\centering
\includegraphics[width=2.5in]{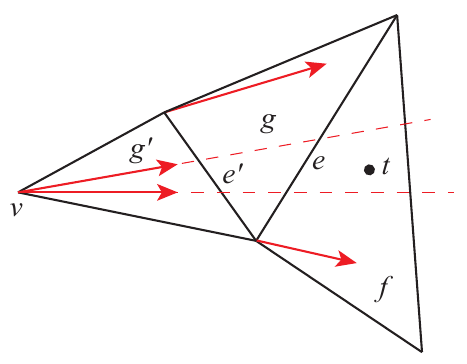}
\caption{Finding the shortest path from $s$ to point $t$ in face $f$.  In this example, $f$ is of type \emph{VE}.  Testing the ray of $f$, we find that the shortest path to $t$ enters from edge $e$ which has incoming face $g$ of type \emph{VE}.  Testing the rays of $g$, we find that the shortest path to $t$ enters from edge $e'$ which has incoming face $g'$ of type \emph{EVE}.  Finally, testing the rays of $g'$ we find that the shortest path to $t$ comes from vertex $v$.  We recursively find the shortest path to $v$.}
\label{fig:recover-path}
\end{figure}

Otherwise, $t$'s region is part of a face $f$, and in-info is an edge $e$.
\note{We place $f$ in the plane (arbitrarily) and enter the main loop} of the algorithm (see Figure~\ref{fig:recover-path}):
Let $g$ be the incoming face for edge $e$.  Attach triangle $g$ on the other side of edge $e$ of face $f$ in the plane.  Note that the placement of $g$ is uniquely determined.
If $g$ is of type \emph{V}, we replace $t$ by the incoming vertex of $g$ and recurse.  If $g$ is of type \emph{VE} or type \emph{EVE} we locate $t$ relative to the rays that partition $g$ (although $t$ is not in $g$ we just extend the rays to do the test).  From this we can tell if the shortest path to $t$ goes through a vertex of $g$ or not.  If it does, then we replace $t$ by that vertex and recurse.  Otherwise the shortest path to $t$ enters $g$ through an edge, and we repeat the loop with the incoming face of that edge.

This algorithm finds the shortest path from $s$ to $t$ in time proportional to the number of triangles and edges on the path.  In the worst case this is $O(n)$.
 
\subsubsection{Computing the last step shortest path map}

We do not know how to compute the last step shortest path map in polynomial time.  
More broadly, we do not know of a polynomial-time algorithm to compute shortest paths in a 2D CAT(0) complex.  
On the other hand, the problem does not seem to be amenable to NP-hardness proofs like the ones for shortest paths in 3D Euclidean space with polyhedral obstacles~\cite{canny}, or for shortest paths that visit a sequence of non-convex polygons in the plane~\cite{touring}. Furthermore, we have the example of orthant spaces as CAT(0) complexes with exponential shortest path maps, but a polynomial time algorithm for computing shortest paths \cite{MillerOwenProvan}.

It is tempting to think
that the last step shortest path map can be computed in a straight-forward way by propagating incoming information outward from the source.  The trouble with this approach is that faces of type \emph{EVE} need incoming information from two edges.  This can result in dependencies that form a cycle, with each edge/face waiting for incoming information from some other face/edge.  See Figure~\ref{fig:incoming-cycle} for an example.

\begin{figure}[htb]
\centering
\includegraphics[width=3in]{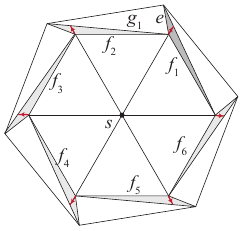}
\caption{A cycle of incoming information in a 2D CAT(0) complex that lies in the plane.  Face $f_1$ (darkly shaded) is of type \emph{EVE} with incoming edge $e$, which has incoming face $g_1$, which depends on incoming information from face $f_2$.  Similarly, each face $f_i$ (lightly shaded) depends on incoming information from face $f_{i+1}$, and $f_6$ depends on incoming information from face $f_1$, which creates a cycle.  \Anna{The red arrows indicate the ruffles of the vertices.}}
\label{fig:incoming-cycle}
\end{figure}

We end this section with one positive, though weak, result.
The last step shortest path map can be computed from the shortest path map in time $O(M)$, where $M$ is the size of the shortest path map. 
For each edge, we 
can identify the incoming edge or face from any of the shortest path regions containing portions of the edge (by Lemma~\ref{lemma:shortest-paths-to-edge} these all give the same information).
Since we have the shortest path to each vertex $v$, we can recover or recompute the boundary rays of the ruffle of $v$, which gives us the type (\emph{E}, \emph{V}, \emph{EV}, or \emph{EVE}) of each face incident to $v$, and the incoming information for the face. 

We summarize the implications for special cases of the single-source shortest path problem in 2D CAT(0) complexes:

\begin{proposition}
For a 2D CAT(0) complex  that is a 2-manifold  or is rectangular, we can solve the single-source shortest path problem using $O(n^2)$ time and space to produce a structure (the last step shortest path map) of size $O(n)$ that allows us to answer shortest path queries in time proportional to the number of triangles and edges on the path. 
\label{lemma:good-single-course}
\end{proposition}

\remove{
In the remainder of this section we will give a polynomial-time algorithm to compute the last step shortest path map  for the special case when there are no faces of type \emph{EVE}.  Our algorithm can test this property.  

We begin by noting that rectangular complexes  \note{2D CAT(0) rectangular?} 
fit into this special case.

\begin{lemma}  In a triangulation of a rectangular complex, there are no \emph{EVE} faces.
\note{This is FALSE.}
\label{lemma:rectangular-EVE}
\end{lemma}
\begin{proof}  Consider the link graph of any vertex $v$ in a triangulation of a rectangular complex.  The length of any cycle in the link graph is a multiple of $90^\circ$.  
This means that the angle between two rays of a ruffle of $v$ is either $0$, or is greater than $90^\circ$.      
\note{Well, the $0$ causes trouble.}  
\end{proof}

\note{Give algorithm to construct last step shortest path map when there are no EVE faces.}

\note{Give example to show that the algorithm does not work in general.}
}


\remove{
\subsection{\note{The old stuff.}}

This section is about the single-source shortest path problem in a 2D CAT(0) complex.
The input is a 2D CAT(0) complex, $\cal K$,  composed of $n$ triangles, and a point $s$ in $\cal K$.
We denote the shortest path from $s$ to $t$ by $\sigma(s,t)$.

\begin{theorem}
There is a 
data structure and algorithm that will
find $\sigma(s,t)$ for any query point $t$.
The data structure can be built in $O(n^2)$ preprocessing time and requires $O(n)$ storage.  The query time is proportional to the number of triangles and edges traversed by the path.
\end{theorem}

Typically in a shortest path problem, the difficulty is to decide which of multiple geodesic (or locally shortest) paths to the destination is shortest.
This is the case, for example, for shortest paths in a planar polygon with holes, or for shortest paths on a terrain, and is a reason to use a Dijkstra-like approach that explores
paths to all target points in order of distance.   For shortest paths on a terrain, Chen and Han~\cite{Chen-Han} provided an alternative that uses a Breadth-First-Search (BFS) combined with a clever pruning when two paths reach the same target point.

When geodesic paths are unique, however, it is enough to explore all geodesic paths, and there is no need to explore paths in order of distance or in BFS order.  This is the case, for example, for shortest paths in a polygon, where the ``funnel'' algorithm~\cite{Guibas-sh-path-87,Hershberger-Snoeyink} achieves $O(n)$ processing time and storage, and $O(\log n)$ query time (plus output size to produce the actual path).
\changed{Similarly, in CAT(0) spaces, the uniqueness of geodesic paths means we can simply explore all geodesic paths without any ordering constraints.

A basic approach to the single source shortest path problem is to compute the whole \emph{shortest path map} from $s$.  The \emph{shortest path map} partitions the space into regions in which all points have shortest paths from $s$ that have the same \emph{combinatorial type}.  Specialized to 2D CAT(0) complexes, two shortest paths have the same \emph{combinatorial type} if they cross the same sequence of edges, vertices, and faces.

Consider a set $C$ of shortest paths of the same combinatorial type in our 2D CAT(0) complex.  If $C$ consists of just one path, we call it a \emph{singleton} and otherwise we call it \emph{full}.  The paths in $C$ all go through the same sequence of faces \note{and edges and vertices}, so we can lay out these faces in the plane.  \note{See Figure X.}  Then a singleton $C$ becomes a straight line segment, and if $C$ is full then $C$ is either a cone with apex $s$, or $C$ consists of a shortest path (a line segment in the plane) from $s$ to some vertex $v$, plus a cone with apex $v$.
The two boundary rays of the cone still correspond to shortest paths, although they do not belong to $C$ since they have a different combinatorial type, going through a vertex rather than through an internal point of an edge. 
When $C$ contains a cone with apex $v$, the angle between the incoming segment at $v$ and any outgoing segments at $v$ is at least $\pi$ (otherwise the segments would not form a shortest path through $v$).
}

\changed{
\begin{claim}
Consider an edge $e$ of a 2D CAT(0) complex, and consider the shortest paths from $s$ to all internal points of $e$.  Either all these shortest paths travel along $e$, or they all reach $e$ from one incident face.
\label{claim:shortest-paths}
\end{claim}
\begin{proof}
If the shortest path to some internal point of edge $e$ travels along $e$, then so do the shortest paths to all internal points of $e$. 

Otherwise consider the (finitely many) combinatorial types of shortest paths to points of $e$, and let $C_1, C_2, \ldots, C_k$ be the corresponding sets of shortest paths, ordered according to the order of points along $e$.  We will prove that paths in all the $C_i$'s arrive at points of $e$ from the same incident face.  For otherwise, there would be some $C_i$ and $C_{i+1}$ that arrive from different incident faces.  At least one of the two, say $C_i$, must be full.  Now observe that when $C_i$ is laid out in the plane, the boundary ray of its cone on the $C_{i+1}$ side is still a shortest path and still arrives at $e$ from the same incident face.  But this contradicts $C_{i+1}$ arriving from a different face. 
\end{proof}

A consequence of this claim is that shortest paths enter a face (a triangle) from at most two edges and their common vertex.  See Figure~\ref{fig:triangle-sh-path}.  In particular, we cannot have shortest paths entering a face from all three edges, nor from an edge and the opposite vertex, otherwise we would have shortest paths to two points on the same edge arriving from different faces.     
}


\note{The following is very FALSE.}
\changed{It is not difficult to show that for a 2D CAT(0) complex,} 
the shortest path map has worst case size $\Theta(n^2)$, and can be computed with $O(n^2)$ time and storage, so that the shortest path to query point $t$ can be found in time proportional to the number of faces along the path.

We do not see how to improve this to linear preprocessing time 
\changed{as for shortest paths in a simple polygon},
nor how to find the distance to a query point in logarithmic time, but we will improve the storage.
Our algorithm will achieve $O(n^2)$ preprocessing time, and $O(n)$ storage, and we will recover a shortest path to query point $t$ in time proportional to the number of faces along the path.


The high-level idea is as follows.
Starting with the faces containing $s$, we expand to adjacent faces, constructing the
\emph{last-step shortest path map} in which
two points $p$ and $q$ inside a face are \emph{equivalent} if
$\sigma(s,p)$ and $\sigma(s,q)$
enter the face on the same edge/vertex.
We thus store a constant amount of information for each face.  See Figure~\ref{fig:triangle-sh-path}.
We show that this information is sufficient to recover the path from $s$ to any point $t$ in time proportional to the number of faces on the path.   This involves ``unfolding''  the faces along the path into the plane.
The idea of storing in each face only the combinatorial information about the last step of the shortest path comes from~\cite{touring}.  

We now fill in the details of our algorithm.

The algorithm will categorize faces according to how shortest paths enter them.
\changed{As justified above,} 
shortest paths may enter a face through: one edge (type \emph{E}); one vertex (type \emph{V}); one edge and an incident vertex (type \emph{EV}); or two edges and their common vertex (type \emph{EVE}).
See Figure~\ref{fig:triangle-sh-path}.
\changed{By Claim~\ref{claim:shortest-paths}, }
shortest paths may reach an edge from an incident face, or from an endpoint of the edge when the edge itself lies on a shortest path.

The algorithm will discover how shortest paths enter each face, edge and vertex.
For each face we \changed{will find and} store the \emph{incoming} edge(s)/vertex through which shortest paths enter the face.
For faces of types \emph{EV} or \emph{EVE} we store the one or two rays that form the boundaries of the part of the face reached by shortest paths through the vertex, as shown in Figure~\ref{fig:triangle-sh-path}.
For each edge we \changed{find and} store the \emph{incoming} face/vertex through which shortest paths reach the edge.
For each vertex we \changed{find and} store the \emph{incoming} face or edge that contains the last segment of the shortest path to the vertex, and in the case of an incoming face we store the last segment of the shortest path to the vertex.   In general, a ray or segment is given in local coordinates of the face in which it lies (i.e.,~in terms of vertices of the face).
Note that the \emph{incoming} information has constant size per face/edge/vertex, and therefore linear size overall.

At the beginning of the algorithm every face/edge/vertex is unmarked.  
\changed{When we have complete information about shortest paths to all points interior to a face then the face is marked \emph{explored}. 
Vertices and edges will have two possible markings---an edge or vertex marked \emph{frontier} is one that we know the shortest paths to (via \emph{incoming} information), but have not explored beyond.  Once we explore shortest paths leaving a frontier vertex or edge, then the vertex or edge will be marked \emph{explored}.  
The general step is to take an edge or vertex out of the frontier and ``explore'' beyond it, moving some incident faces/edge/vertices out of the unmarked category into the frontier or explored category.
The algorithm terminates when the frontier is empty.  We will prove that at this point, all vertices, edges, and faces will have been explored, i.e.~that we have shortest paths to all points.
}

\remove{
When we have complete information about shortest paths to a face/edge/vertex then it is marked \emph{explored}.  We will have a third category for edges and vertices---an edge or vertex marked \emph{frontier} is one that we know the shortest path to (via \emph{incoming} information), but have not explored beyond.
\changed{Note that faces are only marked as \emph{explored} in the process of ``exploring" edges and vertices in the frontier.}
The general step is to take an edge or vertex out of the frontier and ``explore'' beyond it, moving some incident faces/edge/vertices out of the unmarked category into the frontier or explored category.
The algorithm terminates when every face/edge/vertex is marked \emph{explored}.
}

%

We now give the details of the algorithm to build the data structure for shortest path queries.
The algorithm to answer shortest path queries is described later on.  The two methods are entwined, because we need to answer shortest path queries in order to build the data structure.

\medskip\noindent
{\bf Initialization.}
Assume that $s$ is a vertex of the complex (if necessary, by triangulating the face containing $s$).
For each edge $e =(s,v)$ incident to $s$, mark $e$ as ``explored'' with incoming vertex $s$ and put $v$ into the frontier with entering ray $sv$.
For each face $f$ incident to $s$,
mark $f$ as ``explored'', and as type \emph{V} with incoming vertex $s$ and put the edge of $f$ not incident to $s$  into the frontier with incoming face $f$.

\medskip\noindent
{\bf General Step.}
Until the frontier is empty, take a vertex or edge out of  the frontier and explore beyond it as specified in the following cases.
Before we take an edge out of the frontier, there are special conditions that must be met,
which we describe below.

\medskip\noindent
{\bf I. Taking a vertex $v$ out of the frontier.}  Mark $v$ ``explored''.  Let $r$ be the incoming ray to $v$.  Starting from point $r$ in $v$'s link graph $G_v$, we search the link graph to identify all points within distance $\pi$ from $r$.  The complementary set (all points in $G_v$ of distance $\ge \pi$ from $r$) correspond to points in  $\cal K$ that have shortest paths that go through $v$, and we call this set the \emph{ruffle} of $v$.
Note that this includes the case where $v$ is on the boundary.
See Figure~\ref{fig:ruffle}.
For each edge $e=(v,u)$ incident to $v$, if $e$ is in the ruffle then we mark $e$ ``explored'' with incoming vertex $v$ and we put $u$ in the frontier with incoming ray $vu$.

\noindent
For each face $f$ incident to $v$ we consider several cases depending on how $f$ intersects the ruffle of $v$.
Let $e=(a,b)$ be the edge of $f$ not incident to $v$.

\noindent
{\bf Case 0.} No point of $e$ is inside the ruffle.  Do nothing, as no shortest paths to this face pass through $v$.

\noindent
{\bf Case 1.}
Both $a$ and $b$ are inside the ruffle.
Note that all of $f$ is in the ruffle because $f$ corresponds in the link graph $G_v$ to an edge whose endpoints (corresponding to $a$ and $b$) are distance $\ge \pi$ from $r$, so all points internal to the edge are also distance $\ge \pi$ from $r$.
Mark $f$ ``explored'' of type $V$ with incoming vertex $v$, and put $e$ in the frontier with incoming face $f$.

\noindent
{\bf Case 2.}
Exactly one of $a$ or $b$ (say $a$) is inside the ruffle.  Mark $f$ of type \emph{VE} with incoming vertex $v$ and incoming edge $(v,b)$.
A boundary ray of the ruffle goes from $v$ to a point on the edge $(a,b)$.  We store this ray with the face $f$.
Note that we do not yet mark $f$ as ``explored''---we will only do that after exploring edge $(v,b)$.

\noindent
{\bf Case 3.} Neither $a$ nor $b$ is inside the ruffle but some interior point(s) of $e$ are in the ruffle.  Mark $f$ of type \emph{EVE} with incoming vertex $v$ and incoming edges  $(v,b)$ and $(v,a)$.
If a single point of $e$ is inside the ruffle then exactly one ray goes from $v$ to a point on the edge $e$, and otherwise
two boundary rays of the ruffle go from $v$ to points on the edge $e$.  We store these rays with the face $f$.
Note that we do not yet mark $f$ as ``explored''---we will only do that after exploring edges $(v,b)$ and $(v,a)$.


\medskip\noindent
{\bf II. Taking an edge $e$ out of the frontier.}
We only take an edge $e=(u,v)$ out of the frontier if both vertices $u$ and $v$ have already been explored.
Mark $e$ as ``explored''.  Let $g$ be the incoming face for $e$.
For each face $f \ne g$ incident to $e$, let $w$ be the third vertex of $f$ and do the following.

\noindent
{\bf Case 1.}
If $f$ is not already marked \emph{VE} or \emph{EVE} then we mark $f$ ``explored'' of type $E$ with incoming edge $e$.
  We put the edges $(u,w)$ and $(v,w)$ in the frontier with incoming face $f$.
Using the method described below, we \changed{query to} find the shortest path from $s$ to $w$ and the segment $r$ along which the shortest path reaches $w$. We put vertex $w$ in the frontier with entering segment $r$ and incoming face $f$.

\noindent
{\bf Case 2.}
If $f$ is marked \emph{VE} with, say, $v$ as an incoming vertex then we mark $f$ ``explored'' and put edge $(u,w)$ in the frontier with incoming face $f$.  (The case when $u$ is the incoming vertex is symmetric.)

\noindent
{\bf Case 3.}
If $f$ is marked \emph{EVE} with, say, $v$ as an incoming vertex then if edge $(v,w)$ is already explored we mark $f$ ``explored'' and put edge $(u,w)$ in the frontier with incoming face $f$.
(The case when $u$ is the incoming vertex is symmetric.)

\medskip

\changed{
Note that no problems are caused by our condition that an edge is removed from the frontier only after both its vertices have been explored.   Since an edge only enters the frontier when both endpoints are in the frontier or already explored, we can always remove something from the frontier unless it is empty. 

To show that the algorithm is correct, we must show that when the frontier is empty, all vertices, edges, and faces have been marked explored and that when a vertex/edge/face is marked explored, we have the correct incoming information. 

\note{fill in more} 


The ``incoming'' relationship defines a directed graph $G$ whose nodes are the vertices, edges, and faces of the complex $K$, with a directed edge from node $x$ to node $y$ if there is a geodesic path in $K$ that travels through a point of $x$ and immediately after that a point of $y$.  There is a path in $G$ from $s$ to every node, and $s$ has only outgoing edges.  We claim that $G$ is acyclic.  A node corresponding to a vertex has only one incoming edge in $G$ (because there is a unique geodesic path to the vertex), and a node corresponding to an edge of $K$ has only one incoming edge in $G$ by Claim~\ref{claim:shortest-paths}.  So if there is a cycle in $G$ then there must be some face $F$ that . . .

}

\remove{
Correctness of the algorithm is straightforward by induction on the number of faces.  The one thing worth commenting on is our condition about not removing an edge from the frontier until both its vertices have been explored.  Since an edge only enters the frontier when both endpoints are in the frontier or already explored, we can always remove something from the frontier unless it is empty.  When the frontier is empty, all faces, edges and vertices will be explored.
}

We now analyze the run time of the algorithm.  Each edge/vertex enters the frontier only once.  The time to process a vertex (step I) is proportional to the number of incident edges and faces, so this is linear overall.  The time to process an edge (step II) is proportional to the number of incident faces times the time to recover a shortest path to a point (point $w$ in case 1).
As shown below, it takes $O(n)$ time to recover a shortest path.  Thus the total run time is $O(n^2)$.  Storage is $O(n)$.

\medskip
We now describe how to answer a query for the shortest path to a point $t$.
We can do this as soon as the face/edge/vertex containing $t$ has been explored---the algorithm need not have terminated.
If $t$ is a vertex, or a point on an edge, we
can tell from the incoming information if a shortest path reaches $t$ along an edge, say the edge from $u$ to $v$.  In this case, we replace $t$ by $u$ and recurse.

In the more general case $t$ lies in a face, edge or vertex and we know that a shortest path reaches $t$ through a face, say $f$.
If $f$ is of type \emph{V}, we replace $t$ by the incoming vertex of $f$ and recurse.  If $f$ is of type \emph{VE} or type \emph{EVE} we locate $t$ relative to the rays in $f$.  From this we can tell if the shortest path to $t$ goes through a vertex of $f$ or not.  If it does, then we replace $t$ by that vertex and recurse.

\remove{ 
\begin{figure}[htb]
\centering
\includegraphics[width=2.5in]{figures/recover-path.pdf}
\caption{Finding the shortest path from $s$ to point $t$ in face $f$.  In this example, $f$ is of type \emph{VE}.  Testing the ray of $f$, we find that the shortest path to $t$ enters from edge $e$ which has incoming face $g$ of type \emph{VE}.  Testing the rays of $g$, we find that the shortest path to $t$ enters from edge $e'$ which has incoming face $g'$ of type \emph{EVE}.  Finally, testing the rays of $g'$ we find that the shortest path to $t$ comes from vertex $v$.  We recursively find the shortest path to $v$.}
\label{fig:recover-path}
\end{figure}
}

We are left with the case where the shortest path to $t$ enters face $f$ through some edge, say edge $e$.
Let $g$ be the incoming face for edge $e$.  We place $f$ in the plane and attach triangle $g$ to edge $e$.  The placement of $f$ is arbitrary, but then $t$ and $g$ are fixed.
Now we enter the main loop of the algorithm (see Figure~\ref{fig:recover-path}):
If $g$ is of type \emph{V}, we replace $t$ by the incoming vertex of $g$ and recurse.  If $g$ is of type \emph{VE} or type \emph{EVE} we locate $t$ relative to the rays in $g$ (although $t$ is not in $g$ we just extend the rays to do the test).  From this we can tell if the shortest path to $t$ goes through a vertex of $g$ or not.  If it does, then we replace $t$ by that vertex and recurse.  Otherwise the shortest path to $t$ enters $g$ through an edge, and we repeat with the incoming face of that edge.

This algorithm will find the shortest path from $s$ to $t$ in time proportional to the number of triangles and edges on the path, which is $O(n)$ in the worst case.

To wrap up, the whole algorithm takes time $O(n^2)$ to preprocess the complex from $s$, and results in a structure of space $O(n)$ that
allows searching for a path to $t$ in time proportional to the size of the path.

It is possible that results of Mount~\cite{Mount} might provide an algorithm that uses $O(n^2)$ time, $O(n \log n)$ space, and answers queries for the distance in $O(\log n)$ time\footnote{Thanks to Stefan Langerman for this suggestion}.  The rough idea is to store the whole shortest path map via nested trees that allow us to search all the rays entering a triangle.  



}

\section{Conclusions}
\label{sec:conclusions}

We have given  
an algorithm for computing the \revised{closure of the} convex hull of a set of points in a 2D CAT(0) polyhedral complex with a single vertex.  Our algorithm relies on linear programming.  
\revised{The main open questions are:
\begin{itemize}
\item Is there a polynomial-time combinatorial algorithm to compute the convex hull of a set of points in a 2D CAT(0) polyhedral complex with a single vertex?

\item Is such a convex hull closed? We conjecture that it is.

\item Does the simple iterative approach of computing successive skeletons $S_k$ run in polynomial time?  I.e., does $S_k = S_{k+1}$ for some $k$ that is polynomially bounded in the size of the complex (Conjecture~\ref{conj:iteration})? 

\item   Is there a polynomial time algorithm to test if given point is in the convex hull of a given point set in a CAT(0) polyhedral complex?  This may be easier than computing the convex hull, and 
would be sufficient for most applications, including 
computing a geometric centre by peeling convex hulls. 

\item Does our linear programming solution  extend to 2D CAT(0) complexes with more than one vertex, or to
single-vertex higher dimensional CAT(0) complexes?  The latter problem seems hard because in 3D and beyond 
the boundary between two maximal cells has dimension at least 2,  and it is not clear that the intersection of the convex hull with a boundary face is even a polytope.
%
\end{itemize}  
}


For the single-source shortest path problem in a 2D CAT(0) complex, we have shown that the shortest path map may have exponential size, and that the last step shortest path map is a better alternative.
\revised{The main open questions are:
\begin{itemize}
\item Can the last step shortest path map be computed in polynomial time?
\item Is the shortest path problem NP-hard for 2D CAT(0) complexes?
\end{itemize}
}

\section{Acknowledgements}
The authors thank Sean Skwerer for the example showing convex hulls of 3 points can be 3-dimensional (Figure~\ref{fig:3points3D}), and Aasa Feragen, Steve Marron, Ezra Miller, Vinayak Pathak, Scott Provan, and Sean Skwerer for helpful discussions about convex hulls in tree space.
\revised{We are extremely grateful to some incredibe anonymous reviewers, who greatly improved the paper and caught an error.}  
 
MO acknowledges the support of the Fields Institute.  
Research \revised{of all authors was} supported by NSERC, the Natural Sciences and Engineering Research Council of Canada.

\bibliographystyle{abuser}   
\bibliography{CAT0}

\end{document}